\def\colorful{0}
\definecolor{cpurple}{rgb}{0.6,0,0.6}
\def\nnewcolor{1}
\newcommand{\inote}[1]{\footnote{{\bf [[Ilias: {#1}\bf ]] }}}
\newcommand{\dnote}[1]{\footnote{{\bf [[Daniel: {#1}\bf ]] }}}
\newcommand{\inote}[1]{}
\newcommand{\dnote}[1]{}
\newtheorem{theorem}{Theorem}[section]
\newtheorem{question}{Question}[section]
\newtheorem{lemma}[theorem]{Lemma}
\newtheorem{informal theorem}[theorem]{Theorem (informal statement)}
\newtheorem{proposition}[theorem]{Proposition}
\newtheorem{corollary}[theorem]{Corollary}
\newtheorem{claim}[theorem]{Claim}
\newtheorem{fact}[theorem]{Fact}
\theoremstyle{definition}
\newtheorem{definition}[theorem]{Definition}
\newcommand{\eqdef}{\stackrel{{\mathrm {\footnotesize def}}}{=}}
\newcommand{\p}{\mathbf{p}}
\newcommand{\q}{\mathbf{q}}
\newcommand{\R}{\mathbb{R}}
\newcommand{\Z}{\mathbb{Z}}
\newcommand{\E}{\mathbf{E}}
\newcommand{\eps}{\epsilon}
\newcommand{\dtv}{d_{\mathrm TV}}
\renewcommand{\Pr}{\mathbf{Pr}}
\newcommand{\poly}{\mathrm{poly}}
\newcommand{\D}{\mathcal{D}}
\newcommand{\Yes}{\text{Yes}}
\newcommand{\No}{\text{No}}
\newcommand{\lp}{\left}
\newcommand{\rp}{\right}
\newcommand{\supp}{\text{supp}}
\renewcommand{\r}{\mathbf{r}}
\renewcommand{\t}{\mathbf{t}}
\newcommand{\iid}{\text{i.i.d.}~}
\newcommand{\Poi}{\mathrm{Poi}}
\DeclareMathOperator{\Order}{Order}
\newcommand{\abs}[1]{\lvert#1\rvert}
\newcommand{\A}{\mathcal A}
\newcommand{\Ak}{\mathcal A_k}
\newcommand\snorm[2]{\left\| #2 \right\|_{#1}}
\DeclarePairedDelimiter\floor{\lfloor}{\rfloor}
\DeclareMathOperator{\Var}{\mathbf{Var}}
\newcommand{\GY}{\mathcal Y}
\newcommand{\GN}{\mathcal N}
\title{Testing Closeness of Multivariate Distributions via Ramsey Theory}
\author{
Ilias Diakonikolas\thanks{Supported by NSF Medium Award CCF-2107079, NSF Award CCF-1652862 (CAREER), and a Sloan Research Fellowship.}\\
UW Madison\\
{\tt ilias@cs.wisc.edu}\\
\and
Daniel M. Kane\thanks{Supported by NSF Medium Award CCF-2107547, and NSF Award CCF-1553288 (CAREER),  and a grant from
CasperLabs.}\\
UC San Diego\\
{\tt dakane@ucsd.edu }\\
\and
Sihan Liu\\
UC San Diego\\
{\tt sil046@ucsd.edu}
}
\begin{document}
\maketitle
\begin{abstract}
We investigate the statistical task of closeness (or equivalence) testing for multidimensional 
distributions. Specifically, given sample access to two unknown distributions $\mathbf p, \mathbf q$ on 
$\R^d$, we want to distinguish between the case that $\mathbf p=\mathbf q$ versus $\|\mathbf p-\mathbf q\|_{\mathcal A_k} > \epsilon$, 
where $\|\mathbf p-\mathbf q\|_{\mathcal A_k}$ denotes the generalized $\mathcal A_k$ distance between $\mathbf p$ and $\mathbf q$ --- 
measuring the maximum discrepancy between the distributions over any collection of $k$ disjoint,
axis-aligned rectangles. Our main result is the first closeness tester for this problem 
with {\em sub-learning} sample complexity in any fixed dimension and a nearly-matching sample complexity lower bound.

In more detail, we provide a computationally efficient closeness tester with 
sample complexity $O\left((k^{6/7}/ \mathrm{poly}_d(\epsilon)) \log^d(k)\right)$. On the lower bound side, we establish a qualitatively matching sample complexity lower bound of $\Omega(k^{6/7}/\mathrm{poly}(\epsilon))$, even for $d=2$. These sample complexity bounds 
are surprising because the sample complexity of the problem in the univariate setting is $\Theta(k^{4/5}/\mathrm{poly}(\epsilon))$. This has the interesting consequence that the jump from one to two dimensions leads to a substantial increase in sample complexity, while increases beyond that do not.

As a corollary of our general $\mathcal A_k$ tester, 
we obtain $d_{\mathrm TV}$-closeness testers for 
pairs of $k$-histograms on $\R^d$ 
over a common unknown partition, and pairs of uniform distributions supported on the union of $k$ unknown disjoint axis-aligned rectangles.

Both our algorithm and our lower bound 
make essential use of tools from Ramsey theory.    
\end{abstract}

\setcounter{page}{0}
\thispagestyle{empty}

\newpage

\section{Introduction} \label{sec:intro}

\paragraph{Background and Motivation}
A fundamental statistical task is to ascertain whether a set of samples 
comes from a given model, where the model may consist
of either a single fully specified probability distribution or a family of probability
distributions. The study of this broad task was initiated in a field 
now known as {\em statistical hypothesis testing} over a century ago~\cite{Pearson1900, NeymanP}; 
see, e.g.,~\cite{lehmann2005testing} 
for an introductory textbook on the topic.
In the past three decades, hypothesis testing has been extensively studied 
by the theoretical computer science and information-theory communities --- under the name {\em distribution testing} --- in the framework of property testing~\cite{RS96, GGR98}. 
It is instructive to note that the TCS style definition of hypothesis testing 
is equivalent to the minimax testing definition introduced and 
studied by Ingster and coauthors~\cite{Ing94, Ing97, IS03}.

The paradigmatic problem in distribution testing is the following: 
given sample access to one or more unknown probability distributions,
we want to correctly distinguish (with high probability) between the cases that 
the underlying distributions satisfy some global property $\cal{P}$ 
or are ``far'' from satisfying the property.
The primary objective is to obtain a tester that is statistically  efficient, i.e., 
it has information--theoretically optimal sample complexity.
An additional important criterion is computational efficiency; that is, 
the testing algorithm should run in sample-polynomial time. 
After the pioneering early works formulating this field~\cite{GR00, BFR+:00} from a TCS perspective, 
there has been substantial progress on testing a wide range of properties; 
see, e.g.,~\cite{BFFKRW:01, BDKR:02, BKR:04,  Paninski:08, PV11sicomp, ValiantValiant:11,
DJOP11, LRR11, VV14, chan2014optimal, diakonikolas2015optimal, CDKS17, DaskalakisDK18, CanonneDKS18, DiakonikolasGKP21, CanonneJKL22, CDKL22}
for a sample of works,
and~\cite{Rub12, Canonne22} for surveys on the topic.

Here we study the problem of {\em closeness testing} (or equivalence testing) 
between two unknown probability distributions. Specifically, 
given independent samples from a pair of distributions $\p, \q$, we want to
determine whether the two distributions are the same versus $\eps$-far from each other.
Early work on this problem~\cite{BFR+:00} focused on the setting that 
$\p, \q$ are arbitrary discrete distributions of a given support size $n$, 
and the metric used to quantify ``closeness'' is the $\ell_1$-distance 
(equivalently, total variation distance). It is now known~\cite{chan2014optimal} 
that the optimal sample complexity of $\ell_1$-closeness testing 
for distributions with support of size $n$ is $\Theta(\max\{n^{2/3}/\eps^{4/3}, n^{1/2}/\eps^2 \})$.

In summary, it is known that the complexity measure 
determining the sample complexity of 
testing the equivalence (and a range of other related properties) 
of unstructured 
(i.e., potentially arbitrary) discrete distributions 
is the domain size of the underlying distributions. 
Unfortunately, this implies that if $\p, \q$ are (potentially arbitrary) 
continuous distributions (even in one dimension!), no closeness 
tester with finite sample complexity exists. There are two natural approaches
to circumvent this bottleneck. The first approach is to assume that $\p, \q$
have some nice structure, in which case the domain size may
not be the right complexity measure for the testing problem. 
The second approach is to make no assumptions on the underlying distributions, 
but relax the metric under which we measure closeness.

Interestingly, it turns out that these two seemingly orthogonal
approaches are intimately related to each other. 
In particular, for the important special case of {\em one-dimensional} distributions, 
a line of works, see, e.g.,~\cite{DDSVV13, DKN:15, diakonikolas2015optimal,DKN17}, 
developed a general framework that yields optimal testers (for closeness and other properties) 
for a range of structured distribution families. 
The key idea underlying these testers is to
design a {\em single} tester for {\em arbitrary} 
one-dimensional distributions 
{\em but under a different --- carefully selected --- metric}; 
and then appropriately use this metric as a proxy for the total variation distance 
(for each structured distribution family of interest). 

In more detail, for one-dimensional distributions $\p, \q: \R \to \R_+$, 
the appropriate metric is known as {\em $\Ak$-distance}~\cite{DL:01, CDSS14} 
and is defined as follows: 
The $\Ak$-distance between one-dimensional distributions $\p$ and $\q$, 
denoted by $\|\p-\q\|_{\Ak}$, is defined as the maximum $\ell_1$-distance between 
the reduced distributions\footnote{The reduced distribution obtained from $\p$ 
with respect to a partition of the domain into $k$ subsets $R_1, \ldots, R_k$ 
is the discrete distribution with support size $k$ assigning probability mass 
$\p(R_i)$ to the $i$-th point.} obtained from $\p, \q$ over all partitions of the domain 
in at most $k$ {\em intervals}. 
The motivation for this particular definition 
of the $\Ak$-distance~\cite{CDSS14, DKN:15} between
one-dimensional distributions comes from the VC-inequality (see, e.g., page 31 of~\cite{DL:01}). 

The positive integer $k$ in the definition of the $\Ak$-distance is a tunable parameter 
that is selected appropriately depending on the application. For $k=2$, the $\Ak$-distance 
amounts to the distance between the {\em cumulative} distribution functions 
(known as Kolmogorov distance). As $k$ increases, the metric becomes stronger 
and converges to the total variation distance when $k \to \infty$ 
(under mild assumptions on the distributions). Moreover, if the underlying distributions 
$\p, \q$ belong to some class of shape restricted densities (e.g., 
univariate histograms or log-concave distributions), a finite value of $k$ suffices 
so that the $\Ak$-distance closely approximates the total variation distance.

It is worth noting that, in addition to distribution testing, the one-dimensional
$\Ak$ distance has been has been a crucial ingredient in developing efficient {\em learning} algorithms for structured univariate distributions~\cite{CDSS13, CDSS14, ADLS17, Chen0M20}.

\paragraph{Testing Closeness of Multivariate Distributions} 
The main motivation behind this work is to generalize the aforementioned framework 
to the {\em multivariate} setting with a focus on the task of closeness testing. 
A first step to achieve this is an appropriate generalization of the notion of
$\Ak$-distance which applies to one-dimensional distributions)
for distributions on $\R^d$ for all $d \geq 1$. 
Here we study the following natural definition, 
that has been previously used in the context of learning~\cite{DLS18} 
and uniformity testing~\cite{diakonikolas2019testing} 
for multivariate distributions.

\begin{definition}[Multidimensional $\Ak$-distance]\label{def:Ak-gen}
For two probability distributions (with densities/mass functions) 
$\p, \q: \R^d \mapsto \R_{+}$ and $k \in \mathbbm Z^+$, 
we define the \emph{multi-dimensional $\mathcal A_k$-distance} between $\p$ and $\q$ as the maximum 
value of $\sum_{i=1}^k \abs{\p(R_i) - \q(R_i)}$ for $k$ arbitrarily chosen non-overlapping axis-
aligned rectangles $\{R_i\}_{i=1}^k$ in $\R^d$.
\end{definition}

\noindent {\bf \em Motivation for \Cref{def:Ak-gen}}
Recall that the total variation distance between two distributions 
$\p, \q$ on $\R^d$ is defined as 
$\dtv(\p, \q) = \sup_{A \in \cal{S}}|\p(A) - \q(A)|$, 
where $\cal{S}$ is the collection of all measurable subsets on $\R^d$.
Since learning or testing under the total variation distance may be too 
strong a goal if the underlying distributions lack structure, 
a reasonable compromise is to consider alternative metrics. 
The VC-inequality states the following: Let $\cal{A}$ be any collection of 
subsets of $\R^d$ with VC-dimension $d$. Then for {\em any} distribution $\p$ on $\R^d$ it holds that 
$\E[\sup_{A \in \cal{A}} |\widehat{\p}_n(A) - \p(A)|] = O(\sqrt{d/n})$, 
where $\widehat{\p}_n$ is the empirical distribution obtained after drawing
$n$ i.i.d.\ samples from $\p$. In other words, for $n \gg d/\eps^2$, the empirical distribution is $\eps$-close to $\p$ with respect to the $\cal{A}$-metric, 
defined as 
$\|\p-\q\|_{\cal A} \eqdef \sup_{A \in \cal{A}} |\p(A) - \q(A)|$. 
For the univariate case, the $\Ak$-distance defined in the aforementioned 
works~\cite{CDSS14, DKN:15, diakonikolas2015optimal} is obtained from the 
$\A$-metric by considering the family of all unions of at most $k$ intervals 
(which has VC-dimension $2k$). 

Our \Cref{def:Ak-gen} is a natural generalization of the one-dimensional 
definition, where we consider the family of all unions of at most $k$ rectangles,
which has VC-dimension $\tilde{\Theta}(k d)$. Since learning an arbitrary 
distribution on $\R^d$ under this metric requires $\tilde{\Theta}(k d)/\eps^2$ 
samples, it is natural to ask whether the distribution testing problem has 
qualitatively lower sample complexity. We also note that the multidimensional 
$\Ak$-distance is a strengthening of the Kolmogorov-Sminov (KS) metric and converges to 
the total variation distance as $k \rightarrow \infty$ (under mild assumptions). 
It should be noted that a line of work in 
mathematical statistics --- see, e.g.,~\cite{Bic69, FR79, Henze88, JPZ97} 
for some classical works  --- 
has developed two-sample testers (aka closeness testers) for {\em non-parametric} 
multivariate distributions under the KS metric. Our work can be viewed as a 
strengthening and generalization of these results in the minimax setting.

We believe that, in addition to being a potential tool for performing multivariate $\dtv$-closeness testing for structured 
distributions, the $\Ak$ distance is an interesting metric on its own merits. 
To see this, we recall that one of the main motivations 
for considering the total variation distance is the following property:
If a decision algorithm is run twice on different inputs that follow two distributions that are close in total variation distance, 
then the acceptance probabilities will also be 
approximately the same in the two cases. 
Hence, for two distributions $\p, \q$ that have passed 
the $\dtv$-closeness testing, we can be confident 
that running some downstream decision algorithm on inputs 
drawn from $\p$ and from $\q$ should give similar results.
For the $\Ak$ distance, we have an analogous property 
if one restricts the algorithm in the above statement 
to be an axis-aligned decision tree (i.e., a 
decision tree whose leaf nodes follow the branching rule 
of $ x_i < b $ for some coordinate $i \in [d]$ and some real number $b \in \R$) with at most $k$ leaves.
Though being a restricted family of algorithms, 
axis-aligned decision trees are commonly used in machine learning 
applications due to their exceptional interpretability; 
see, e.g.,~\cite{yildiz2001omnivariate, bengio2010decision, bruch2020learning}.
This suggests that testing in $\Ak$ distance, even though being 
a weaker test compared to its $\dtv$-counterpart for arbitrary distributions 
(which is provably impossible without structural information),
may be sufficient for certain structured downstream decision-making tasks.

\medskip

We return to our closeness testing task.
One approach to solve the multidimensional\footnote{We will henceforth omit the 
term ``multidimensional'' when it is clear from the context, 
and use the term $\Ak$-
distance for multivariate distributions as well.} 
$\Ak$-closeness testing problem
is to learn $\p$ and $\q$ up to $\Ak$-distance $\eps/4$,
and then check whether the hypotheses are $\eps/4$-close to each other. 
Thus, the sample complexity of closeness testing is bounded above by the sample 
complexity of learning (within constant factors). 
Since $\tilde{\Theta}(k d / \eps^2)$ samples suffice to learn 
an arbitrary distribution on $\R^d$ up to $\Ak$-distance $\eps$, 
the naive ``testing-by-learning'' approach requires $\Omega(k)$ samples 
(even in one dimension and for constant $\eps$).

It is natural to ask whether a better sample size could be achieved for testing, 
since closeness testing is, in some sense, less demanding than learning. 
That is, the goal is to develop a closeness tester with sample complexity {\em strongly sublinear} 
in $k$, namely $O(k^c)$ for some constant $c<1$. The aforementioned line of work on univariate 
distributions~\cite{DKN:15, diakonikolas2015optimal,DKN17} developed identity 
and closeness testers under the $\Ak$-distance with strongly sublinear sample complexity. 
These testers were also applied to give 
total variation distance testers for classes of 
``shape constrained'' distributions~\cite{BBBB:72, GJ:14}, 
including histograms and logconcave distributions.

Concretely, for the problem of closeness testing of univariate distributions,~\cite{diakonikolas2015optimal} developed a sample-optimal $\Ak$-closeness tester
with sample complexity of $\Theta(k^{4/5}/\poly(\eps))$ (for not too small $\eps$). 
Interestingly, this bound differs from the sample complexity of closeness testing discrete distributions on $k$ points, which is $\Theta(k^{2/3}/\poly(\eps))$)~\cite{chan2014optimal}.

This discussion motivates the following natural question: 
\begin{center}
{\em What is the sample complexity of $\Ak$-closeness testing for multivariate distributions?}
\end{center}
Prior to this work, no closeness tester with sub-learning sample complexity was known even for 
$d=2$. The main contribution of this work is a 
{\em sample near-optimal and computationally efficient}
$\Ak$-closeness tester in any {\bf fixed}\footnote{
We emphasize here that the focus of our work is 
in closeness testing of non-parametric families 
of distributions. In non-parametric 
estimation/testing, the sample complexity 
inherently scales exponentially with the dimension; hence, 
it is standard to consider the dimension as being fixed. For example, estimation/testing for the class of log-concave distributions on $\R^d$ 
is known to require $2^{\Omega(d)}$ samples (see \cite{kim2016global}).} dimension. 
Moreover, we show that the sample complexity of our 
tester is optimal as a function of $k$, within logarithmic factors. 
As an immediate corollary, we obtain the first closeness tester 
for multivariate histogram distributions (with respect to the 
same unknown set of axis-aligned rectangles) under the total variation distance.

Specifically, our main result (Theorem~\ref{thm:main-intro}) establishes the following:
{\em For any $k, d \in \Z_+, \eps>0$, and sample access to arbitrary distributions $\p, \q$ on $\R^d$,
there exists a closeness testing algorithm under the $\mathcal{A}_k$-distance
using $O\left( (k^{6/7}/\poly_d(\eps)) \log^{d}(k) \right) $ samples.
Moreover, this bound is information-theoretically optimal as a function of $k$, even for $d=2$.} We remark that our $\mathcal{A}_k$-testing algorithm applies to {\em any} pair of distributions (over both continuous and discrete domains). 

As a corollary, we obtain the first closeness tester 
(with sub-learning sample complexity) 
between $k$-histograms with respect to the total variation distance. 
A probability distribution on $\R^d$ with density $\p$
is called a \emph{$k$-histogram} if there exists a partition of the support 
into $k$ axis-aligned rectangles $R_1, \ldots, R_k$ such 
that $\p$ is constant on $R_i$, for all $i = 1,\ldots, k$.
This is one of the most basic non-parametric distribution families and
have been extensively studied in 
statistics~\cite{Scott79, FreedmanD1981, Scott:92, LN96, Devroye2004, WillettN07, Klem09} and computer science --- including database 
theory~\cite{JPK+98,CMN98,TGIK02,GGI+02, GKS06, ILR12, ADHLS15} 
and theoretical ML~\cite{DDS12soda, CDSS13, CDSS14, CDSS14b, ADLS17, AcharyaDK15, 
DDSVV13, diakonikolas2015optimal, DKN:15, DKN17, diakonikolas2019testing, CDKL22}. 
Prior to this work, no closeness testing algorithm with sub-learning sample 
complexity was known for $k$-histograms, even for $d=2$. As a corollary of our 
main result, we provide such an algorithm (see Corollary~\ref{cor:md-hist}) for 
the case that the two histograms are supported on the same unknown partition.
In addition, we also obtain $\dtv$-closeness 
tester for uniform distributions supported on 
some unknown $k$ disjoint axis-aligned 
rectangles (see \Cref{cor:rectangle-hypothesis}).
We remark that though histograms and uniform distributions 
over unions of axis-aligned rectangles
are conceptually similar, these two families of 
distributions are orthogonal to each other.

\subsection{Our Results}

We study the complexity of closeness testing between two (arbitrary) 
distributions $\p, \q$ on $\R^d$ with respect to the 
$\Ak$ distance.
Our main result is the following.


\begin{theorem}[Main Result] \label{thm:main-intro}
Given $\eps > 0$, integer $k \geq 2$, and sample access to distributions with density 
functions $\p, \q : \R^d \rightarrow \R_{+}$, there exists a computationally 
efficient algorithm which draws
$C \, 2^{d/3} \, k^{6/7} \log^{3d}(k)/\eps^{\alpha_d}$
samples from $\p, \q$, for a sufficiently large universal constant $C>0$, 
where $\alpha_d = O(d^2 2^{2^{d+1}})$, and with probability at least $2/3$ correctly 
distinguishes whether $\p = \q$ versus $\snorm{\Ak}{\p - \q} \geq \eps$.
Moreover, $\Omega \lp( \min\{ k^{6/7} / \eps^{8/7}, k\} \rp)$ many samples are information-theoretically necessary for this hypothesis testing task, even if $\p, \q$ are two-dimensional discrete distributions on a sufficiently large domain.
\end{theorem}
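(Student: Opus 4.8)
The statement has two halves --- an algorithm and a matching lower bound --- and I would attack them as follows.

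\emph{Algorithm (upper bound).} The plan is to reduce the continuous $d$-dimensional problem to a discrete one through a \emph{data-dependent grid}. Using a preliminary batch of samples, I would build a partition $\mathcal G$ of $\R^d$ into axis-aligned cells by recursively splitting along coordinates so that every cell carries roughly equal empirical mass. If $|\mathcal G|\approx (k/\eps)^{\Theta(d)}$, then refining \emph{any} family of $k$ disjoint rectangles down to $\mathcal G$ perturbs the $\Ak$-distance by at most $\eps/10$, since each rectangle boundary meets only $O(|\mathcal G|^{(d-1)/d})$ of the (near-equal-mass) cells, so the refinement error is $O\bigl(k\,|\mathcal G|^{-1/d}\bigr)$. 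On the grid the problem becomes: distinguish $\p=\q$ from the event that the $\ell_1$-mass of $\p-\q$ carried by the best $k$-rectangle subfamily over $\mathcal G$ is at least $\eps$. I would attack this with a collision-based, $\chi^2$-style statistic in the spirit of the univariate closeness tester of~\cite{diakonikolas2015optimal}, combined with a bucketing of cells by mass. Two ingredients are needed to push the exponent of $k$ down to $6/7$: (i) a Ramsey-theoretic (Erd\H{o}s--Szekeres-type) structural reduction, applied one coordinate at a time, replacing $\mathcal G$ by a bounded number of ``product-structured'' pieces on which only $\poly(k)$ combinatorially distinct $k$-rectangle patterns arise --- iterating this reduction $d$ times is what produces the $\log^{3d}(k)$ overhead and the tower-type dependence $\alpha_d = O(d^2 2^{2^{d+1}})$ on $\eps$; and (ii) a tight second-moment bound for the test statistic over that restricted pattern class. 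The \emph{main obstacle} here is (ii): the statistic must simultaneously detect any one of a priori exponentially many $k$-rectangle witnesses while having variance small enough to keep the sample complexity at $k^{6/7}$, and bounding the supremum of its variance over the rectangle family is precisely where the Ramsey machinery from step (i) is essential.

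\emph{Lower bound.} For $\Omega(\min\{k^{6/7}/\eps^{8/7},\,k\})$ I would use the two-point (Le Cam / Ingster) method with a prior: exhibit a distribution $\D_{\Yes}$ over pairs with $\p=\q$ and a distribution $\D_{\No}$ over pairs with $\|\p-\q\|_{\Ak}\geq\eps$ (always), and prove that for $m=o(k^{6/7}/\eps^{8/7})$ the laws of $m$ Poissonized samples drawn from a random $\D_{\Yes}$-pair versus a random $\D_{\No}$-pair are $o(1)$-close in total variation, so no tester can distinguish them. The construction would be \emph{hierarchical} and supported on a large discrete point set in $\R^2$: I would arrange the atoms in a Ramsey-controlled ``combinatorial grid'' configuration --- so that, at each level of the hierarchy, axis-aligned rectangles act on the atoms the way intervals act on points of a line --- and plant inside it $T$ mutually independent discrete closeness hard instances of the~\cite{chan2014optimal} heavy/light type, scaled so that the aggregate $\ell_1$-perturbation is $\Theta(\eps)$ and so that a dedicated block of rectangles reads off each planted discrepancy. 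Far-ness then holds by construction, while indistinguishability follows because detecting the planted perturbations forces one to resolve the hidden sub-structure, and because independence of the $T$ blocks makes the $\chi^2$ (equivalently, squared-Hellinger) divergence of the $m$-sample laws tensorize, with each block contributing only a $1+o(1/T)$ factor once the perturbation directions are randomized so that the leading cross-terms vanish (moment matching); the product over the $T$ blocks is then $1+o(1)$.

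\emph{Why the exponent is $6/7$, and the crux.} The exponent is forced by a tension: to be $\Ak$-far, the discrepancy must be \emph{concentrated enough} to be witnessed by only $k$ rectangles, but to be information-theoretically hard it must be \emph{spread out enough} to hide from $m$ samples. In one dimension, balancing this forces an effective hidden domain of size $\approx k^{6/5}$ and yields $k^{4/5}$ (matching~\cite{diakonikolas2015optimal}); the extra flexibility of axis-aligned rectangles in the plane --- made quantitative by the Ramsey/Erd\H{o}s--Szekeres control on how point configurations can be resolved --- permits an effective hidden domain of size $\approx k^{9/7}$, and matching the $\Theta(n^{2/3}/\eps^{4/3})$ discrete-closeness cost at $n\approx k^{9/7}$ gives $\Omega(k^{6/7}/\eps^{8/7})$. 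The $\min$ with $k$ is automatic: for $\eps$ below about $k^{-1/8}$ the construction would require more hidden atoms than $k$ rectangles can ever certify as far, so $\eps$ is capped there and the bound saturates at $\Omega(k)$. The \emph{main obstacle} on this side is the joint design-and-analysis problem: choosing the hierarchy's depth together with the per-level sizes and perturbation magnitudes so that (a) every instance in $\D_{\No}$ is \emph{provably} $\geq\eps$-far in $\Ak$-distance, and (b) the tensorized divergence bound is $o(1)$ for all $m$ up to the claimed threshold --- both holding simultaneously at the extremal trade-off point that produces the exponent $6/7$.
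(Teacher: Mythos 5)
There are genuine gaps in both halves of your proposal, and they concern precisely the ideas the paper is built around. On the \emph{upper bound}, your plan is to first discretize onto a data-dependent grid $\mathcal G$ of size $(k/\eps)^{\Theta(d)}$ and then test on that discrete domain. But the role you assign to Ramsey theory --- ``replacing $\mathcal G$ by a bounded number of product-structured pieces on which only $\poly(k)$ combinatorially distinct $k$-rectangle patterns arise'' --- is not a real theorem, and even if it were, $\poly(k)$ candidate rectangle families would still force a union bound that destroys the $k^{6/7}$ target (a failure probability of $1/\poly(k)$ per candidate already costs $\log k$ factors, but the number of disjoint grid-aligned $k$-rectangle families is $m^{\Omega(dk)}$, not $\poly(k)$, so the actual overhead is far worse). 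The paper's Ramsey ingredient is entirely different and is the heart of the argument: De Bruijn's generalization of Erd\H{o}s--Szekeres gives that any $2^{2^{d-1}}{+}1$ points in $\R^d$ contain a triple $x,y,z$ with $z\in R_{x,y}$, which (after an iterative ``discrepancy density'' argument) yields Proposition~\ref{lem:rectangle-discrepancy}: for two random samples $x,y$ from $(1/2)(\p+\q)_{|R}$, the rectangle $R_{x,y}$ captures an $\eps^{\alpha_d}$ fraction of the discrepancy in $R$ in expectation. Your plan has no analogue of this. And to avoid the union bound over rectangle families, the paper does not attempt to discretize first; it uses a \emph{Grid Covering} of the sample-point grid --- a family $\mathcal F$ of $\poly(m)\log^{O(d)} m$ rectangles such that every grid rectangle decomposes into $2^d\log^d m$ members and every point lies in exactly $\log^d m$ members --- and then tests the induced discrete distributions $\p^{\mathcal F},\q^{\mathcal F}$ with a new $\ell_2$-tester tailored to discrepancy on light bins (Lemma~\ref{lem:small-support-l2}). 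Your ``collision-based $\chi^2$ statistic with bucketing'' gestures at the last piece but none of the machinery that makes it applicable. Also, your refinement-error estimate presupposes that one can build near-equal-mass cells of mass $\ll 1/k$ each, which fails for distributions with atoms; the paper's algorithm sidesteps this by relying only on the order statistics of samples (and a tie-breaking/``stretching'' reduction at the end of the proof).

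On the \emph{lower bound}, your heavy/light tensorized construction is closer in spirit, but you are missing the two things that actually give $k^{6/7}$. First, the paper reduces (via a Ramsey coloring argument, Lemma~\ref{lem:order-reduction}) to \emph{order-based} testers, which is why the hard construction can be analyzed at all --- this reduction is what forces the tower-sized domain you did not account for. Second, and crucially, the paper's gadget is a pair of \emph{square-edge distributions} $\t,\r$ with the nontrivial property (Lemma~\ref{lem:diamond-order-match}) that any $3$ order-samples carry zero information distinguishing $\{\t,\r\}$ from $\{(\t+\r)/2,(\t+\r)/2\}$; this pushes the informative threshold from $2$ samples (the univariate $k^{4/5}$ regime) to $4$ samples, which is exactly what yields $m^4\eps^4/k^3 \gg \sqrt m$, i.e.\ $m\gg k^{6/7}/\eps^{8/7}$. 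Your heuristic of ``matching the $\Theta(n^{2/3}/\eps^{4/3})$ discrete-closeness cost at $n\approx k^{9/7}$'' gives $k^{6/7}/\eps^{4/3}$, which does not match the $\eps^{-8/7}$ in the claimed bound, another sign that a direct reduction to a planted discrete-closeness instance is not the right mechanism here. Without an explicit gadget whose first three order-moments match, ``moment matching'' is an aspiration rather than an argument; the paper's gadget and its $3$-sample indistinguishability are the content you need to supply.
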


\paragraph{Discussion} 
To interpret \Cref{thm:main-intro}, 
some comments are in order. We reiterate that the focus of our work 
is on the {\em non-parametric} setting and consequently we view the 
dimension $d$ as a fixed constant. In this regime, the sample complexity of our algorithm is 
$\tilde{O}_d (k^{6/7}) / \poly_d(\eps)$.

The one-dimensional special case of our closeness testing result was solved
in~\cite{diakonikolas2015optimal}, where the authors established a tight sample complexity 
bound of $ \Theta(k^{4/5}/\eps^{6/5}+ k^{1/2}/\eps^2)$.
Prior to our work, no $o(k)$ sample upper bound was known for this testing 
problem even for $d=2$ and $\eps=0.99$.

For the regime of fixed dimension that we focus on, 
our upper and lower bounds are essentially 
optimal in terms of their dependence on $k$ --- the main parameter of interest. 
For simplicity, let us fix $\eps$ to be a universal constant. 
Examining the exponent of $k$ in the dominant term of the sample complexity, 
we observe a surprising pattern: the exponent begins at $4/5$ when $d = 1$ 
(as follows from the prior work~\cite{diakonikolas2015optimal}), 
jumps to $6/7$ when $d=2$, and then stays at $6/7$ as $d$ increases 
(as follows from \Cref{thm:main-intro})!
This suggests that the $d=1$ case is a degenerate case 
and the essence and complexity of the problem is not entirely revealed until $d=2$.

Some remarks are in order regarding the dependence of the sample complexity 
on the parameters $\eps$ and $d$. 
First, we briefly comment on the $\log^{d}(k)$ term. 
Perhaps surprisingly, prior work~\cite{diakonikolas2019testing} 
has shown a sample complexity {\em lower bound} of $(\sqrt{k}/\eps^2) \Omega(\log(k) / d)^{d-1}$ 
for the {\em easier} problem of $\Ak$-uniformity testing.
This suggests that the $\log^{d}(k)$ factor is necessary for closeness testing as well, 
assuming that $k$ is sufficiently large. Finally,
we conjecture that the correct dependence on $\eps$ 
in the sample complexity of this task should be a fixed degree polynomial, 
independent of $d$. We leave this as an interesting technical question for future work (see \Cref{q:eps-dependency}).

Regarding our sample complexity lower bound,
\Cref{thm:main-intro} does not specify how large the domain size of 
the hard distributions needs to be. 
Due to the application of Ramsey-theoretic arguments 
in the proof of our lower bound, 
we need it to be extremely large in terms of $k$ (a tower function of $k$).
In Section~\ref{sec:domain-optimize}, we show that the domain size 
can be optimized to be (at most) doubly exponential in $k$ --- 
using a significantly more sophisticated construction (\Cref{thm:lower-bound-refined}).

As immediate corollaries of our main theorem, we obtain $\dtv$-closeness testers 
(with strongly sub-learning sample complexities) for multivariate structured distributions. 
In particular, we highlight here the $\dtv$-closeness tester for distributions in $\R^d$ 
that are $k$-histograms, i.e., piecewise constant over (the same) $k$ unknown disjoint axis-aligned 
rectangles. Notably, the sample complexity of this tester is the same as that of our 
$\Ak$ closeness testing. This implication and additional applications are given 
in Section~\ref{sec:application}.

\subsection{Overview of Techniques} \label{sec:techniques}

Here we provide a detailed overview of our technical approach
to establish our upper and lower bounds. 

\paragraph{Closeness Tester}
By definition of the $\Ak$ distance, 
there exist $k$ 
disjoint axis-aligned rectangles $\{R_i\}_{i=1}^k$ on $\R^d$ 
which witness the $\Ak$ discrepancy between $\p$ and $\q$; that is, 
$\sum_{i=1}^k \abs{ \p(R_i) - \q(R_i) } = \snorm{\Ak}{\p - \q} $.
If we knew what these rectangles were, the testing task would be easy.
Indeed, we could simply consider the reduced measures of $\p$ and $\q$ 
over $\{R_i\}_{i=1}^k$ (recall that these measures, after normalization, become distributions with support size $k$ that we can simulate access to) 
and then use an optimal $\ell_1$-closeness tester as a black-box.
Given the optimal $\ell_1$-closeness tester of~\cite{chan2014optimal}, such 
an approach would lead to a sample complexity upper bound of $O(k^{2/3})$ 
(for constant $\eps$). Of course, the difficulty is that 
we are not given these rectangles a priori, which intuitively could
make the problem require more samples than $\ell_1$-closeness testing 
on a domain of size $k$\footnote{In hindsight, given our sample complexity lower bound of $\Omega(k^{6/7})$, the fact that the witnessing rectangles 
are unknown implies that the $\Ak$ closeness testing problem {\em provably} 
requires more samples.}. 

The lack of a priori knowledge of the witnessing rectangles is {\em the} major obstacle towards developing a closeness tester with sub-learning sample complexity. 
Overcoming this bottleneck necessitates the bulk of the new technical ideas developed here.
To achieve this, at a very high-level, 
we will proceed to compute {\em some} small set of rectangles 
that capture a ``non-trivial''\footnote{Quantitatively, 
the term ``non-trivial'' here means ``a function of the form $\poly_d(\eps)$''.} 
fraction of the discrepancy (i.e., $\Ak$-distance) between $\p$ and $\q$. 

A simple but important observation in this context is the following: 
one should not expect that an {\em obliviously} selected 
(i.e., without drawing samples from the underlying distributions)
set of rectangles suffices for this purpose. 
Indeed, this holds even for the one-dimensional setting:
as was noted in \cite{diakonikolas2015optimal}, 
any obliviously chosen set of intervals 
may capture \emph{no} discrepancy between a pair of 
adversarially chosen one-dimensional distributions even though they have large $\Ak$ distance.

That is, it appears necessary to select rectangles 
using samples from the tested distributions.
Note that, in any dimension $d$, one needs at least two 
points in $\R^d$ to define an axis-aligned rectangle.
In particular, given two sample points $x,y \in \R^d$, 
we consider the following natural rectangle 
defined by these points, namely
\[ R_{x,y} \eqdef \{  z = (z_1, \ldots, z_d) \in \R^d \mid \min(x_i, y_i) \leq z_i \leq \max(x_i, y_i) \text{ for all } i \in [d] | \} \;.\]
The main intuition behind this definition is the following.  
Suppose that we draw two samples $x,y$ from the mixture 
$(1/2)(\p + \q)$ (the uniform mixture of $\p$ and $\q$), 
and they both happen to land in some rectangle $R$
such that the discrepancy $\abs{\p(R) - \q(R)}$ is non-trivial. 
Then, intuitively, the rectangle $R_{x,y}$ 
will capture (in expectation) a non-trivial fraction of the rectangle $R$, 
and therefore also a non-trivial fraction 
of the discrepancy between $\p$ and $\q$ within $R$.
The latter statement turns out to be true (see \Cref{lem:rectangle-discrepancy}) and 
its proof makes essential use of tools from Ramsey theory.

Before we provide an overview of the ideas required to prove \Cref{lem:rectangle-discrepancy}, 
we explain how to leverage this statement to develop our closeness tester.
Suppose that the $\Ak$-distance between $\p, \q$ is $\eps$. 
Then at the cost of increasing $k$ and decreasing $\eps$ by at most a constant factor, 
we can without loss of generality assume 
that there exist $k$ rectangles $\{R_i\}_{i=1}^k$, 
each of which has probability mass approximately $1/k$  
and witnesses roughly $\eps/k$ discrepancy.
If we draw $m$ samples from each of $\p, \q$, 
approximately $m^2/k$ of these rectangles will contain two samples.
Given \Cref{lem:rectangle-discrepancy}, we know that each pair 
of samples landing in some $R_i$ can be used to define a rectangle 
that, with some non-trivial probability, captures a non-trivial fraction of the discrepancy 
between $\p$ and $\q$ within $R_i$.

A potential concern is how one would find the right set of rectangles 
defined by the sample points
(i.e., that capture enough discrepancy). 
The statement of \Cref{lem:rectangle-discrepancy} only ensures 
the existence of such rectangles, but offers no clues on how 
one could reliably identify them.
Perhaps the most natural approach is to to try 
all possible sets of $\Theta(m^2/k)$ many rectangles 
defined by the coordinates of the sample points, 
and then run a standard $\ell_1$-closeness tester (on the corresponding
reduced distributions) to compare the probability mass of $\p$ and $\q$ 
on the selected rectangles.
Unfortunately, in addition to its computational intractability, 
it is not even clear whether this method can lead to \emph{any} sample complexity sublinear in $k$.
In particular, the standard analysis of the above strategy 
will apply the union bound on the failure probabilities of running
the $\ell_1$-closeness tester on each possible reduced distribution 
(defined by each set of rectangles).
Since there are at least $m^{ \Omega(m^2/k) }$ many different ways 
to select the set of rectangles, this increases the sample complexity 
of the $\ell_1$-closeness tester by a factor of $\Omega(m^2/k)$, 
making it hopeless to achieve any sublearning sample complexity 
(even balancing the quantities $m^2/k$ and $m$ directly will give us $m = k$).

To circumvent this obstacle, 
we leverage an idea from \cite{diakonikolas2019testing}, that
we term \emph{Grid Covering} (see 
Definition~\ref{def:grid-cover}).
At a high level, we show that
we can cover the set of all possible rectangles 
that can be defined by the sample coordinates --- 
which we refer to as $\mathcal S$ --- 
by a carefully chosen subset of these rectangles --- 
which we refer to as $\mathcal F$ --- 
such that each rectangle from $\mathcal S$ can be expressed 
as the union of at most {\em polylogarithmically many} 
rectangles from $\mathcal F$.
Moreover, $\mathcal F$ will be constructed to have the subtle property 
that any point in $\R^d$ is contained in at most polylogarithmically many 
rectangles within the subset (in sharp contrast, in the worst case, 
a point may be included in a constant fraction of $\mathcal S$.).

To take advantage of this property, we consider the notion 
of \emph{induced distributions} (see \Cref{def:ind-d}), 
$\p^{\mathcal F}, \q^{\mathcal F}$, on $\mathcal F$: 
to sample from $\p^{\mathcal F}$, we first draw 
a sample point $x \in \R^d$ from $\p$ and 
return uniformly at random some rectangle from $\mathcal F$ 
that includes $x$ (and similarly for $\q^{\mathcal F}$). 
As a consequence of the aforementioned properties of $\mathcal F$, 
the discrepancy (under some appropriate metric) between $\p^{\mathcal F}$ 
and $\q^{\mathcal F}$ will shrink by at most a polylogarithmic 
factor compared to the discrepancy between $\p$ and $\q$ 
captured by the best $\Theta(m^2/k)$ rectangles 
from our original collection of rectangles $\mathcal S$ 
(defined using the sample points); see Lemma~\ref{lem:induced-discrepancy}.
Importantly, the new pair of distributions are both discrete, 
and the discrepancies between them will be supported on a small number 
of domain elements. 
Therefore, one could hope to apply techniques from ``standard" 
$\ell_1$-closeness testing of discrete distributions from there on. 
While this turns out to be manageable, we emphasize that the induced
distributions still have very large support size. Hence, a direct application 
of $\ell_1$-closeness testing on an arbitrary discrete domain is not sufficient for our purposes.
We will return to this issue when we analyze the sample complexity of our tester in detail.

It remains to show correctness of this scheme. That is, we want to 
establish that there exists a small set of rectangles
defined by the sample points which capture a non-trivial amount 
of discrepancy between $\p$ and $\q$ with high constant probability. 
To show this, we return to $\{R_i\}_{i=1}^k$ --- a set of $k$ rectangles 
which witness the $\Ak$ distance between $\p$ and $\q$. 
We will prove that for each of these rectangles $R_i$, 
if two samples $x,y \in \R^d$ are drawn from $R_i$, 
there is a non-negligible probability that $R_{x,y}$ --- 
the rectangle defined by $x$ and $y$  --- 
captures a non-trivial fraction of the discrepancy in $R_i$. 
(see \Cref{lem:rectangle-discrepancy} and 
and its proof in Section~\ref{sec:discrepancy}).

As a starting point to achieve this, we show 
that if two sample points $x,y$ are drawn 
from the restriction of $(1/2)(\p+\q)$ to a rectangle $R$, 
there is a decent probability that $R_{x,y}$ will capture 
a non-trivial fraction of the mass of $(1/2)(\p+\q)$ in $R$ 
(see \Cref{lem:constant-mass}). 
This statement turns out to be {\em essentially equivalent} 
to a result in Ramsey theory shown by De Bruijn that can be viewed
as a generalization of the classical Erd\H{o}s-Szekeres theorem
(see Fact~\ref{thm:erdos}). 
In particular, De Bruijn showed that given $N$ points in $\R^d$ (for $N$ at least doubly exponential in $d$), 
there exists a triplet $(x,y,z)$ of these points such that 
one of the points $z$ is inside the rectangle $R_{x,y}$
defined by the other two. 
This statement provides us with the desired discrepancy result 
for the special case that one of $\p, \q$ has non-trivial probability 
mass in the rectangle $R$ while the other has mass zero.

To prove the desired discrepancy result for the general case, 
we introduce and leverage the notion of \emph{discrepancy density} of a set $S \subset \R^d$, 
defined to be the discrepancy between $\p, \q$ in $S$ 
divided by the total mass assigned by $\p$ and $\q$ in $S$ 
(see Definition~\ref{def:density}).
At a high level, our analysis proceeds as follows. 
We define an iterative process that selects rectangles 
with increasing discrepancy density. As the discrepancy density approaches one, 
the situation qualitatively resembles the case that only one of $\p, \q$ 
assigns non-zero mass to the rectangle.
We now provide some further details of the process.
Note that if in expectation the rectangle $R_{x,y}$ --- 
defined by random points $x,y$ from $R$ --- 
captures a non-trivial amount of discrepancy 
between $\p, \q$ in $R$, we are done. 
Otherwise, 
there exists a rectangle $R_{x^*,y^*}$
such that the probability masses of $\p$ and $\q$ in $R_{x^*,y^*}$ differ by a negligible amount.
As a result, 
since the probability masses of $\p$ and $\q$ within $R_{x^*,y^*}$ are approximately the same,
the {\em complement} of $R_{x^*,y^*}$, 
which we denote by $S := R \setminus R_{x^*,y^*} $ , 
must have higher discrepancy density between $\p$ and $\q$. 
Since the complement $S$ can be shown to be a union of a small number 
of axis-aligned rectangles (see Claim~\ref{clm:carve-complement}), we can select 
one of these rectangles to restart the process.
By iterating this procedure, we obtain a sequence of rectangles 
whose discrepancy densities increase monotonically 
until we reach the case that a random pair of points drawn from one of these rectangles can capture a non-trivial amount of discrepancy between $\p$ and $\q$ in expectation. 

Up to this point, we have summarized the key ideas needed for the correctness
analysis of our closeness tester. We now proceed to describe the tester in more detail 
and provide a sketch of its sample complexity.
Using \Cref{lem:rectangle-discrepancy} and
(an adaptation of) the grid-covering approach of \cite{diakonikolas2019testing}, 
we obtain a pair of discrete induced distributions 
(that we can simulate access to based on the samples drawn) 
such that they have $\poly_d(\eps) \; m^2/k^2$ discrepancy concentrated over approximately $m^2/k$ domain elements (up to polylogarithmic factors). 
Leveraging the guarantees of the pair of induced distributions we have constructed, 
it is tempting to apply the so called $\ell_{1, k}$-tester from \cite{DKN17}~\footnote{The original $\ell_{1,k}$-tester is for identity testing; one can adapt these techniques to derive 
an $\ell_{1,k}$-tester for closeness testing.}. 
In particular, given samples from a pair of discrete distributions, 
such a tester aims at distinguishing between the cases 
that the underlying distributions are equal versus 
far in $\ell_{1, k}$-distance --- i.e., there exist $k$ domain elements 
such that the $\ell_1$-distance restricted to these elements is large.
Due to the ``sparsity assumption'' on the discrepancies,
the sample complexity of $\ell_{1, k}$-closeness testing is 
comparable to that of standard $\ell_1$-closeness testing 
on a domain of size $k$, even though the actual domain size 
of the input distributions may be much larger.
In particular, using the guarantees of the $\ell_{1, k}$ tester in a black-box manner, we can detect the 
existing discrepancy between the pair of induced distributions obtained with sample size approximately 
\[ (m^2/k)^{2/3}/ \lp(\poly_d(\eps) \; m^2/k^2\rp)^{4/3}
+ (m^2/k)^{1/2}/ \lp(\poly_d(\eps) \; m^2/k^2\rp)^{2} \;,\]
where $m$ is the initial number of samples drawn to construct the rectangles.
Balancing the number of samples used for defining the rectangles 
and the number of samples used for detecting the discrepancy, 
we obtain that 
$m = \tilde \Theta_d \lp( k^{7/8}/\poly_d(\eps) \rp)$ suffices. 
This sample upper bound is strongly sub-linear in $k$, 
but it turns out (in hindsight)
to provide a sub-optimal dependence on $k$. 

Intuitively, 
the reason that the above guarantee turns out to be sub-optimal is the following. 
There would be a key property of our underlying discrete distributions 
left unused if we were to apply the guarantees of $\ell_{1, k}$-testing in a black-box manner.
Specifically, since the $k$ rectangles 
that witness the $\Ak$-distance between $\p$ and $\q$ 
are themselves each of probability mass at most $O(1/k)$,  
the rectangles defined by our sample points (that capture non-trivial discrepancies) 
will also each be of mass at most $O(1/k)$.
This in turn implies that in the end we only need to detect 
discrepancies supported on a few number of \emph{light} domain elements (i.e., domain elements with small probability masses in the constructed discrete distributions). 
By carefully incorporating this additional property (i.e., that the bins witnessing 
discrepancies are themselves of small probability mass) 
into the analysis of the $\ell_{1, k}$-tester, 
we obtain an improved sample complexity upper bound of
\[ (m^2/k)^{2/3}/ \lp(\poly_d(\eps) \; m^2/k^2\rp)^{4/3} \;. \] 
See \Cref{lem:small-support-l2} for the new tester and its analysis. 
We believe that this tester --- customized for detecting discrepancies supported on a small number of \emph{light} domain elements --- 
may be applicable in other scenarios, 
as it allows us to escape from some worst-case scenarios of $\ell_{1,k}$ testing.
Finally, balancing the number of samples used for defining the rectangles 
and that of samples used for detecting the discrepancy gives 
us a sample bound of approximately $m = \tilde \Theta_d(k^{6/7}/\poly_d(\eps))$.

\paragraph{Sample Complexity Lower Bound}
Our sample complexity lower bound applies specifically for $2$-dimensional distributions. This suffices for us to conclude that our sample upper bound 
is nearly optimal as a function of $k$ for any constant dimension $d>1$.
 
The starting point of our sample lower bound technique 
is the lower bound for one-dimensional $\Ak$ closeness testing 
shown in \cite{diakonikolas2015optimal}. 
Specifically, we start by showing that it is no loss
of generality to establish a lower bound for ``order-based'' testers, 
and then prove a lower bound for such testers. 
In the proceeding discussion, we elaborate on each of these steps.

We start by noting that most reasonable testers seem to only be able 
to take advantage of the ordering of the $x$-coordinates and the $y$-coordinates of 
the points they observe --- and not the precise numerical values of these 
coordinates (see Definition~\ref{def:order-sampling}). 
We call such a tester an {\em order-based tester.} 
Intuitively, this holds because the $\Ak$ distance is invariant 
under applying a monotonic transformation to all of the $x$-coordinates 
or all of the $y$-coordinates, 
and only the ordering of these coordinates is invariant under all monotonic transformations. 
In fact, we show that if there exists a non order-based $\Ak$ closeness tester 
on a domain of size $N$, we can use it to construct an order-based tester 
that has almost the same guarantees --- albeit on a smaller domain 
(see \Cref{lem:order-reduction}). Hence, using our reduction, we can translate 
any sample complexity lower bound against order-based testers 
into one against general testers at the cost of increasing the domain size.
To obtain the reduction, we show that for any 
$2$-dimensional $\Ak$ tester 
on a sufficiently large domain there exists 
a large subset of its domain such that if the samples 
are drawn from the subdomain, 
the general tester's output will depend only on the order of the samples. 
In other words, restricted to this subdomain, the tester 
becomes exactly an order-based tester. 

The argument itself resembles the one in \cite{diakonikolas2015optimal}.
The key difference is that, due to the tester being 
$2$-dimensional, the structure of the order 
information becomes much more complicated. 
More specifically, there is now order information from both 
of the dimensions.   
To deal with this issue, we need to take a two-fold approach. 
Namely, we need to first select a subset of coordinates 
in the first dimension 
to make the tester's  output independent of the samples' order information 
in the first dimension, 
and then adaptively select the subsets of coordinates 
in the second dimension to hide the remaining order information (see \Cref{lem:order-reduction}).

For order-based testers, we construct families of distributions that are hard to distinguish.
Lying in the center of the construction are two small gadgets, 
each consisting of a pair of distributions. 
We denote the two gadgets as $\GY$ and $\GN$ respectively.
In the $\GY$ gadget, the two distributions are both uniform 
distributions supported on the edges of a square, 
whose diagonals are parallel to the $x$ and $y$ axis respectively 
(which we term a ``diagonal square''). 
In the $\GN$ gadget, one distribution 
is distributed uniformly over a randomly 
chosen pair of parallel edges of the square, 
and the other one 
is distributed uniformly over the remaining two edges. 
The key point is that though the two distributions 
in the $\GY$ gadget are identical 
and the distributions in the $\GN$ gadget 
have $\Ak$ distance equal to one 
(even for $k = 4$), 
we show that no order-based tester can distinguish 
between the two gadgets 
when fewer than three samples are drawn  
(see Section~\ref{sec:square-edge}).

To construct the full hard instance, we replicate the gadgets 
many times in a fairly standard way.
In particular, we let $\p, \q$ have their supports in several ``boxes''. 
If the tester draws $m$ samples, to introduce ``noise'', 
we produce roughly $m$ heavy boxes on which $\p, \q$ are identical. 
We also have $k$ light boxes each with mass approximately 
$\eps/k$ on which $\p, \q$ 
either use the construction of the $\GY$ gadget, 
and are therefore identical (if we want to construct $\p = \q$);  
or they use the construction of the $\GN$ gadget, 
and are therefore far from each other 
(if we want to construct $\snorm{\Ak}{\p - \q} = \eps$).
As we have discussed, observing up to three samples 
from any of the light boxes 
gives an order-based tester no information regarding 
which case one is in.  
In other words, one will only gain information 
from light boxes 
with at least four samples; note that 
there will only exist approximately $m^4/k^3$ 
such boxes if one draws $m$ samples. 
Additionally, the $m$ heavy boxes will ``add noise'' 
on the order of $\sqrt{m}$, 
and thus one can only distinguish between the two cases if 
$m^4/k^3 \gg m^{1/2}$ (or equivalently $m \gg k^{6/7}$). 
This heuristic argument can be made rigorous 
with an appropriate use of information theory 
(see Section~\ref{sec:construction}).

A disadvantage of the above proof technique is that 
the Ramsey theory argument (used in the first step) 
only applies if the domain is extremely large. 
Using an enhancement of the technique from \cite{DKN17}, 
we can reduce this to domains of 
doubly exponential size in $k$ (see \Cref{thm:lower-bound-refined}). 
To achieve this, we need to modify our square-diagonal construction 
so that three samples provides little information to the tester 
even when the numerical values of these samples are also revealed. 
To do this, we show that by applying carefully chosen random functions 
to the $x$- and $y$- coordinates, we can effectively obscure almost 
all non-order-based information contained in any set of three samples. 
For the univariate case, \cite{DKN17} showed that for two samples, 
applying a random {\em affine} transformation 
can obscure both the difference and the average of a pair of points. 
However, when there are three points $a < b < c$, 
applying an affine transformation preserves the value of $(a-c)/(b-c)$.
Hence, a non-trivial amount of information may be retrieved 
from the tester by computing this quantity, 
even if a random affine transformation is applied.
To address this issue in our two-dimensional setting,
we will apply an exponential function 
$x \mapsto \exp( \exp(\lambda) x )$, 
where $\lambda$ is a carefully chosen uniform variable. 
Then, if $a,b,c$ are not too close, 
$(a-c)/(b-c)$ will be exponentially close 
to $\exp ( \exp( \lambda) \; (a-b) ) = \exp ( \exp( \lambda + \log (a-b) )  )$. 
When $\lambda$ is large compared to $\log(a-b)$, 
the ratio $(a-c)/(b-c)$ will therefore have roughly 
the same distribution 
of outputs, independent of $a, b,c$. 
As a result, the transformation effectively hides any information 
encoded by the ratio $(a-b)/(b-c)$. 
Afterwards, we can mirror the analysis from \cite{DKN17} 
to apply a suitable random affine transformation to hide 
all of the remaining information.
The details of the construction and its analysis can be found in \Cref{sec:domain-optimize}.
\subsection{Basic Notation}
For $n \in \Z_+$, we denote $[n] \eqdef \{1, \ldots, n\}$.
We will use $\mathbb S_m$ for the set of all permutations over $m$ distinct 
elements. Given $m > 0$, we use $\Poi(m)$ to denote the Poisson distribution with mean $m$.

An axis-aligned rectangle $R$ is a set in $\R^d$ that can be represented as 
the product of $d$ intervals $I_1, \cdots, I_d$, i.e., $R = \prod_{i=1}^d I_i$.
Given $x, y \in \R^d$, the axis-aligned rectangle 
defined by $x,y$ is the set 
$R_{x,y} \eqdef \{  z \in \R^d \mid \min(x_i, y_i) \leq z_i \leq \max(x_i, y_i) \text{ for all } i \in [d]\}$.

We will use $\p, \q$ to denote the probability density functions of our
distributions (or probability mass functions for discrete distributions). 
For discrete distributions $\p, \q$ over $[n]$, 
their $\ell_1$ and $\ell_2$ distances are 
$\snorm{1}{\p -\q} \eqdef 
\sum_{i=1}^n \abs{\p(i) - \q(i)}$ and 
$ \snorm{2}{ \p - \q }
\eqdef 
\sqrt{ \sum_{i=1}^n \lp( \p(i) - \q(i)  \rp)^2}
$.
For density functions
$ \p, \q : \R^d \mapsto \R_{+} $, 
we have 
$
\snorm{1}{\p - \q}
\eqdef \int_{ \R^d } \abs{\p(x) - \q(x)} dx
$.
The total variation distance between
distributions $\p, \q$ is defined to be
$ \dtv(\p, \q) = \frac{1}{2} \snorm{1}{\p - \q}$.
Let $R \subset \R^d$ be a subset of the domain of $\p$. 
We denote by $\p_{|R}$ the conditional distribution of $\p$ restricted to 
$R$, i.e.,  $\p_{|R}( x ) = \p(x) / \int_{R} \p(x) dx$ for $x \in R$.
Let  $ \mathcal R = \{R_1, \cdots, R_k \}$ be a collection of disjoint sets
$R_i \subseteq \R^d$. The \emph{reduced measure} corresponding to $\p$ 
and $\mathcal R$, which we denote by $\p^{ \mathcal R}$, 
is a discrete measure on $[k]$ defined as
$\p^{\mathcal R}_i = \p(R_i)$ for $i \in [k]$.

\subsection{Organization} \label{ssec:org}
The structure of this paper is as follows: In Section~\ref{sec:alg} we develop the 
analysis tools required to design and analyze our closeneness tester. 
Section~\ref{sec:lb} contains our sample complexity lower bound. 
In Section~\ref{sec:open}, we provide some conclusions and open problems. 

\section{Closeness Testing Algorithm} \label{sec:alg}

In this section, we describe and analyze our multivariate $\Ak$-closeness tester.
The structure of this section is as follows: 
In~\Cref{sec:tester}, we present our 
algorithm and its analysis. The proof of our main structural result 
(\Cref{lem:rectangle-discrepancy}) which relies on Ramsey theory
is given in \Cref{sec:discrepancy}. In \Cref{sec:light-bin-l2-test}, 
we describe and analyze our new closeness tester for discrete distributions 
which detects discrepancies supported on a small number of light domain elements 
(\Cref{lem:small-support-l2}).
Finally, \Cref{sec:application} describes some applications of our $\Ak$ 
closeness tester to test closeness of 
structured distributions under the total variation distance.

\subsection{The Tester and its Analysis} \label{sec:tester}

We start with an overview of our algorithmic approach followed
by a detailed pseudo-code and analysis of our tester.

\paragraph{Overview of Algorithmic Approach} 
Let $\mathcal{R} = \{R_i\}_{i=1}^k$ be a collection of $k$ disjoint rectangles which 
witness the $\Ak$-distance between $\p, \q$\footnote{Note that such a collection is not necessarily unique.}.
The main technical obstacle of $\mathcal A_k$ closeness testing is 
that the algorithm does not know (a priori) such a collection of rectangles. To 
circumvent this issue, we draw samples from $\p, \q$ and use the obtained 
information to construct a set of rectangles that capture a non-trivial amount 
of discrepancy between the underlying distributions. A natural way to construct 
our rectangles is as follows. Given a collection of sample points from $\p$ and $\q$,
we group these points into disjoint pairs and make our rectangles be those 
defined by the corresponding pairs. 

Note that the number of ways to group the sample points into disjoint pairs scales exponentially with the number of samples drawn.
But before we discuss how the grouping is done in our algorithm,
we need to prove that this approach can work {\em in principle}, 
i.e., that if one draws sufficiently many samples,  
there exists a {\em small} set of rectangles (each defined by pairs of sample points)
that capture enough discrepancy between $\p$ and $\q$.

Let $x,y$ be two samples drawn from the mixture $(1/2)(\p + \q)$. Conditioned on the event 
that $x,y$ both land in some rectangle $R \in \mathcal{R}$ of the witnessing partition, 
we show that $R_{x,y}$ --- the rectangle defined by $x,y$ --- will in expectation 
capture a non-trivial amount of the discrepancy in $R$. The formal statement is 
specified in \Cref{lem:rectangle-discrepancy} and its proof is given in Section~\ref{sec:discrepancy}.

By applying \Cref{lem:rectangle-discrepancy} to each rectangle $R_i \in \mathcal{R}$, one can show the existence of a collection of $k' = O(k)$ rectangles 
defined by the sample points which capture enough discrepancy between $\p, \q$ 
(\Cref{lem:grid-point-discrepancy}).
It then remains to find these rectangles and 
invokes an appropriate closeness testing procedure 
to compare the probability mass of $\p, \q$ on them.
Trying all possible collections of rectangles defined by the sample points is certainly not
computationally feasible. 
Even worse, the natural analysis of this brute-force strategy 
would require one to union-bound the failure probabilities 
of the closeness testing steps executed on each possible 
collection of rectangles. 
As the number of possible collections scales exponentially 
with the size of the collection, i.e., $k'$, each individual 
closeness testing routine is only allowed to fail 
with exponentially small probability, 
making the sample complexity of this approach at least linear in $k$.

We instead follow an approach inspired by the idea 
of a \emph{Good Oblivious Covering} in \cite{diakonikolas2019testing}.
In particular, we consider a sub-collection of rectangles defined 
by the coordinates of the sample points that 
form a nice ``cover'' of all possible such rectangles. 
We then proceed to define the notion of ``induced'' distributions 
of $\p, \q$ on the cover such that the two corresponding induced distributions 
have large $\ell_2$-discrepancy supported on a small number of domain elements
if and only if there exists a collection of rectangles 
defined by the sample points over which the probability mass of $\p, \q$ differ significantly. 
Then, applying a novel variant of the $\ell_{1, k}$-tester from 
\cite{DKN17}
(see \Cref{lem:small-support-l2}) 
yields our final tester.

\bigskip

\noindent We are now ready to proceed with the details of the proof.

\paragraph{Discrepancy from Random Points}
Let $R$ be an axis-aligned rectangle such that $\p(R)$ and $\q(R)$ 
differ substantially and $x, y$ be sample points drawn 
from $ \frac{1}{2} (\p + \q)_{|R} $, the uniform mixture distribution 
between $\p, \q$ restricted to $R$.
We consider the rectangle defined by $x,y$, which we denote by $R_{x,y}$.
Our main structural result, serving as the direct motivation for our algorithm, shows that 
$R_{x,y}$ captures non-trivial amount of discrepancy between $\p$ and $\q$ with non-trivial probability.
\begin{proposition}[Random Point Discrepancy]
\label{lem:rectangle-discrepancy}
{Let $\p, \q$ be distributions on $\R^d$ and $R$ be an
axis-aligned rectangle $R \subset \R^d$ satisfying
$ \abs{\p(R) - \q(R)} \geq \eps ( \p(R) + \q(R) ) $}.
Let $x, y$ be random points sampled from $(1/2) (\p + \q)_{|R}$. 
Then there exists a number
$\alpha_d =  C d^2 2^{2^{d+1}}$,  
for some sufficiently large universal constant $C>0$,
such that
$ \E \lp[ \abs{\p(R_{x,y}) - \q(R_{x,y})} \rp] 
\geq \eps^{ \alpha_d }  ( \p(R) + \q(R) ) $.
\end{proposition}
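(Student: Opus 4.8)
My plan is to reduce the Proposition to the Ramsey-theoretic core, \Cref{lem:constant-mass}, and then build the discrepancy-capturing bound by an iterative argument that keeps boosting the \emph{discrepancy density}. First, by homogeneity I would replace $\p,\q$ by their restrictions to $R$ (viewed as measures), so that the two sides of the target inequality and the hypothesis are unchanged up to the same scalar; writing $\sigma:=\p+\q$ and $\mathsf{dd}(A):=|\p(A)-\q(A)|/\sigma(A)$ for the discrepancy density of $A\subseteq R$ (with $0/0:=0$), the samples $x,y$ are drawn i.i.d.\ from $\sigma_{|R}$, the hypothesis reads $\mathsf{dd}(R)\ge\eps$, and the goal is $\E[|\p(R_{x,y})-\q(R_{x,y})|]\ge\eps^{\alpha_d}\,\sigma(R)$. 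I will prove this for $\eps$ below a dimension-dependent threshold $\eps_0(d)<1$ (the regime relevant to the application; for $\eps$ close to $1$ the bound is either trivially weaker than needed or follows from the same direct estimate as the base case below). The proof proceeds by iterating a single ``boosting'' step that, from a rectangle of discrepancy density $\ge\eps$ on which the expectation is \emph{not} already large, extracts a sub-rectangle of discrepancy density larger by a definite factor; iterating, the discrepancy density grows until a direct estimate applies.

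Step 1 is to establish \Cref{lem:constant-mass}: there is a constant $c_d>0$, of the form $1/\mathrm{poly}(N_d)$ where $N_d$ is the threshold in De Bruijn's theorem (doubly exponential in $d$), such that for every probability measure $\rho$ on $\R^d$ a pair $x,y\sim\rho$ has $\rho(R_{x,y})\ge c_d$ with probability at least $c_d$. Since $\rho(R_{x,y})\le 1$, a Markov/Paley--Zygmund step reduces this to $\E_{x,y\sim\rho}[\rho(R_{x,y})]\ge 2c_d$, i.e.\ (introducing a third dummy sample) to $\Pr_{x,y,z\sim\rho}[z\in R_{x,y}]\ge 2c_d$. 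By symmetry and a union bound the event that \emph{some} one of $x,y,z$ lies in the box spanned by the other two has probability at most $3\Pr[z\in R_{x,y}]$; on the other hand, De Bruijn's theorem (\Cref{thm:erdos}) guarantees that any $N_d$ points of $\R^d$ in general position contain such a triple, so drawing $N_d$ i.i.d.\ points from $\rho$ (first dispatching the trivial case that $\rho$ has a heavy atom, otherwise arguing the points are a.s.\ distinct and, after a negligible perturbation, generic) and union-bounding over the $\binom{N_d}{3}$ triples shows a fixed triple has the property with probability $\ge 1/\mathrm{poly}(N_d)$. Combining the two bounds gives $\Pr_{x,y,z}[z\in R_{x,y}]\ge 1/\mathrm{poly}(N_d)=:2c_d$; this is the sense in which the lemma is ``essentially equivalent'' to De Bruijn's theorem.

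Step 2 is the boosting step. Fix $R$ with $\mathsf{dd}(R)\ge\eps$ and assume $\E[|\p(R_{x,y})-\q(R_{x,y})|]<\eps^{\alpha_d}\sigma(R)$ (otherwise we are done). By Markov, $|\p(R_{x,y})-\q(R_{x,y})|\ge(2\eps^{\alpha_d}/c_d)\sigma(R)$ with probability $<c_d/2$, whereas by \Cref{lem:constant-mass} $\sigma(R_{x,y})\ge c_d\sigma(R)$ with probability $\ge c_d$; so there is a pair $(x^\ast,y^\ast)$ for which $B:=R_{x^\ast,y^\ast}$ satisfies $\sigma(B)\ge c_d\sigma(R)$ and $|\p(B)-\q(B)|<(2\eps^{\alpha_d}/c_d)\sigma(R)$, which is negligible next to $\eps\,\sigma(R)$ (here I use $\eps\le\eps_0(d)$). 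Removing this heavy, nearly discrepancy-free box concentrates the discrepancy on $S:=R\setminus B$: one gets $|\p(S)-\q(S)|\ge\eps\,\sigma(R)(1-o(1))$ while $\sigma(S)\le(1-c_d)\sigma(R)$, so $\mathsf{dd}(S)\ge(1+\Omega(c_d))\eps$. By \Cref{clm:carve-complement}, $S$ is a disjoint union of at most $2d$ axis-aligned rectangles $S_1,\dots,S_{2d}$; a triangle inequality together with the elementary fact that a sum-of-ratios mediant is at most the largest ratio produces a piece $R'=S_j$ with $\mathsf{dd}(R')\ge\mathsf{dd}(S)\ge(1+\Omega(c_d))\eps$. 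Conditioning the pair drawn from $\sigma_{|R}$ on both samples landing in $R'$ and applying the Proposition to $R'$ (whose discrepancy density exceeds that of $R$ by a definite factor) then yields
\[
\E[|\p(R_{x,y})-\q(R_{x,y})|]\ \ge\ \Big(\tfrac{\sigma(R')}{\sigma(R)}\Big)^{2}\,\mathsf{dd}(R')^{\alpha_d}\,\sigma(R')\ \ge\ \Big(\tfrac{\sigma(R')}{\sigma(R)}\Big)^{3}\big((1+\Omega(c_d))\eps\big)^{\alpha_d}\,\sigma(R).
\]

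Step 3 closes the recursion and is where the main difficulty lies. The iteration terminates at a rectangle with $\mathsf{dd}\ge 1-\eta_d$ (a constant close to $1$), for which the direct estimate works: there one distribution, say $\p$, carries all but an $\eta_d$-fraction of $\sigma$; with probability $\ge 1/4$ both of $x,y$ come from $\p_{|R}$, and then by \Cref{lem:constant-mass} (with $\rho=\p_{|R}$) $\p(R_{x,y})\ge c_d\,\p(R)\gg\q(R)\ge\q(R_{x,y})$ with probability $\ge c_d$, so $\E[|\p(R_{x,y})-\q(R_{x,y})|]=\Omega_d(1)\,\sigma(R)\ge\eps^{\alpha_d}\sigma(R)$. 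Unrolling Step 2, one needs the product of the per-step mass-loss factors, cubed, to beat the product of the per-step discrepancy-density gains raised to the $\alpha_d$; with $\alpha_d=\Theta(\log(1/\lambda_d)/c_d)$, where $\lambda_d$ is a uniform lower bound on $\sigma(R')/\sigma(R)$, and tracking $c_d=2^{-2^{\Theta(d)}}$ and the $2d$ pieces of \Cref{clm:carve-complement}, this gives $\alpha_d=O(d^2 2^{2^{d+1}})$. The genuinely delicate point — the one I expect to be the main obstacle — is producing that \emph{uniform} constant $\lambda_d$: the piece $S_j$ chosen to maximize $\mathsf{dd}$ need not retain a constant fraction of $\sigma(R)$, while the piece maximizing the \emph{absolute} discrepancy retains $\ge\tfrac{1}{2d}$ of it but need not boost $\mathsf{dd}$. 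Overcoming this requires a more careful choice of $R'$, trading the discrepancy-density gain against the mass lost via a case analysis on the relative sizes of $B$, $S$ and the $S_j$ (in particular exploiting that $\sigma(S)$ being small forces $\mathsf{dd}(S)$ close to $1$, which itself helps), so that the mass shrinks by only a bounded (or sufficiently slowly varying) factor per step and the recursion closes with $\alpha_d$ depending on $d$ alone.
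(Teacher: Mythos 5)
Your overall architecture matches the paper's: reduce to the Ramsey-theoretic mass lemma (\Cref{lem:constant-mass} via De Bruijn), define discrepancy density, and run an iterative boosting process that carves out a heavy, nearly discrepancy-free rectangle $B=R_{x^*,y^*}$, splits the complement into $\le 2d$ rectangles via \Cref{clm:carve-complement}, and recurses on a piece with higher density. Steps 1 and 2 are essentially correct and parallel the paper closely (the paper works directly with $\E[|\p(R_{x,y})-\q(R_{x,y})|\,|\,E]$ and its complement rather than a Markov/Paley--Zygmund rephrasing, but that is a cosmetic difference).

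The gap is exactly where you say ``the main obstacle'' is, and your proposed fix is not the one that works. You hope to find a \emph{uniform} per-step mass retention $\lambda_d$ and then set $\alpha_d=\Theta(\log(1/\lambda_d)/c_d)$; but no such uniform $\lambda_d$ exists, because the piece $S_j$ with density $\ge(1+\Omega(c_d))\eps$ can carry arbitrarily little mass. What the paper does instead is a coupled accounting: writing $\gamma_j$ for the relative density boost of $S_j$ over $\eps$, it selects $i^*$ to (essentially) maximize $D(S_j)\gamma_j$, which yields \emph{both} $\gamma_{i^*}\ge\gamma/(2d)$ \emph{and} the inverse relation $D(S_{i^*})\gtrsim D(R\setminus B)\,\gamma/(2d\,\gamma_{i^*})$. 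The per-step mass loss is then $\gtrsim(\beta_d/d^3)(1+\gamma_{i^*})^{-3}$ --- \emph{not} uniformly bounded --- and the proof closes by observing that the running density never exceeds $1$, so $\prod_t(1+\gamma_{i^*}^{(t)})\le 1/\eps$. This constraint turns the product of the $(1+\gamma_{i^*}^{(t)})^{-3}$ factors into a lower bound of $\eps^3$, while the number of iterations is $O(d\beta_d^{-1}\log(1/\eps))$, giving $D(R^*)\ge\eps^{\tilde O(d\beta_d^{-1})}$ without any uniform per-step mass bound. Your sketch's ``case analysis on the relative sizes of $B$, $S$ and the $S_j$'' and ``sufficiently slowly varying factor'' gesture at this, but the concrete mechanism --- the choice of $i^*$ maximizing $D(S_j)\gamma_j$ and the telescoping via $\prod(1+\gamma_{i^*}^{(t)})\le1/\eps$ --- is the heart of the proof and is missing. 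As written the proposal would not close.

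One smaller note: your recursion expresses the bound by invoking the Proposition at $R'$ with the same $\alpha_d$, which makes it self-referential; the paper instead unrolls the iteration completely, tracking $D(R^{(t)})$ explicitly and applying \Cref{lem:constant-mass} only once at the terminal rectangle $R^*$. This matters because the self-referential form obscures that you must bound the \emph{cumulative} mass loss, which is exactly what the $\prod(1+\gamma_{i^*}^{(t)})\le1/\eps$ constraint does.
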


The proof of~\Cref{lem:rectangle-discrepancy} makes essential use of 
Ramsey theory and is one of the main technical contributions of this 
work. We defer its proof to \Cref{sec:discrepancy}. 

Here we comment on the quantitative aspects of this result. 
Specifically, it is not clear whether the $\eps^{\alpha_d}$ multiplicative factor
in the right hand side of the final inequality is best possible. It is a plausible conjecture that 
the optimal dependence is $\poly(\eps)$ --- independent of the dimension $d$ (see \Cref{q:eps-dependency}). 
Such an improvement would directly improve the sample complexity of our closeness tester, 
as a function of $\eps$.

\paragraph{Existence of Witnessing Grid-aligned Rectangles}

We begin with an assumption that simplifies our analysis: 
the cumulative density function of each coordinate of
$\p$ or of $\q$ is continuous. 
We will eventually remove the assumption in the proof of our main theorem.
Suppose that $\snorm{\Ak}{\p - \q} \geq \eps$.
Then there exists a collection of $k$ disjoint axis-aligned rectangles $
R_1, R_2, \cdots, R_k \subseteq \R^d$ such that 
$\sum_i \abs{\p( R_i ) - \q(R_i)} \geq \eps$. 
By~\Cref{lem:rectangle-discrepancy}, if two sample points $x, y$ happen to land 
in the same rectangle $R_i$, the rectangle $R_{x,y}$ they define 
will capture a non-trivial fraction of discrepancy in $R_i$.
For this reason, we restrict our attention to rectangles 
lying on the \emph{sample-point grid} defined below.

\begin{definition}[Sample-Point Grid] \label{def:grid}
Let $S = \{ x^{(1)}, \cdots, x^{(m)} \} \subset \R^d$ 
be a set of sample points such that no two points overlap in any of their coordinates, 
i.e., $x^{(i)}_{\ell} \neq x^{(j)}_{\ell}$ for all $i \neq j \in [m]$ and $\ell \in [d]$.
The {\em sample-point grid} {$G_S$} (with respect to $S$)
is the set of all points $z \in \R^d$ such that the $i$-th coordinate $z_i$ 
is chosen from the set $\{ x^{(1)}_i, \cdots, x^{(m)}_i \}$. 
Given an axis-aligned rectangle $R$, we say that $R$ is a {\em grid-aligned rectangle} 
with respect to $G_S$ if all its vertices are grid-points from $G_S$.
\end{definition}

Let $G_S$ be a sample-point grid with respect to a collection 
of sufficiently many i.i.d.\ samples from $(1/2)(\p + \q)$.
We first show that, with high constant probability,
there exist $O(k)$ many rectangles aligned with $G_S$ 
that capture enough discrepancy between $\p, \q$ in $\ell_2$ distance.

\begin{lemma}[Existence of a Small Set of Witnessing Grid-aligned Rectangles]
\label{lem:grid-point-discrepancy}
Let $\alpha_d>0$ be as defined in \Cref{lem:rectangle-discrepancy}.
Let $\p, \q$ be distributions over $\R^d$ satisfying
$\snorm{\Ak}{\p - \q} \geq \eps$.
Let $S$ be a set of $\Poi(m)$ i.i.d.\ samples from $(1/2)(\p + \q)$, 
where $k > m \geq C \sqrt{k}/(\eps/4)^{2 \alpha_d}$, 
for some sufficiently large universal constant $C>0$, 
and $G_S$ be the sample-point grid defined by these points.
With probability at least $9/10$, there exist $ k' \leq 3k$ disjoint grid-aligned rectangles $\tilde R_1, \cdots, \tilde R_{k'}$ with respect to $G_S$ satisfying the following:
\begin{itemize}
\item[(i)] $ \p(\tilde R_i) + \q(\tilde R_i)   \leq O(1/k)$ for all $i \in [k']$, and
\item[(ii)] $\sum_{i=1}^{k'} \lp(\p( \tilde R_i) - \q(\tilde R_i)\rp)^2 
\geq \Omega( (\eps/4)^{2 \alpha_d} \; m^2 / k^3 )$.
\end{itemize}
\end{lemma}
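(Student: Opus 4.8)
The plan is to combine \Cref{lem:rectangle-discrepancy} with a standard second-moment/birthday-paradox argument over the witnessing partition $\{R_i\}_{i=1}^k$. First I would invoke the hypothesis $\snorm{\Ak}{\p-\q}\ge\eps$ to fix disjoint rectangles $R_1,\dots,R_k$ with $\sum_i|\p(R_i)-\q(R_i)|\ge\eps$. A preprocessing step handles the non-uniformity of these rectangles: by subdividing any rectangle with $\p(R_i)+\q(R_i)$ much larger than $1/k$ into axis-aligned sub-rectangles (using the continuity assumption on the marginal CDFs to split mass arbitrarily finely, which only increases the count to $O(k)$ and preserves the total discrepancy up to a constant), we may assume each $R_i$ satisfies $\p(R_i)+\q(R_i)=O(1/k)$. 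We may also discard rectangles whose individual discrepancy is tiny; a counting argument shows that the rectangles with $|\p(R_i)-\q(R_i)|\ge\reps(\p(R_i)+\q(R_i))$ for an appropriate $\reps=\Omega(\eps)$ still collectively witness $\Omega(\eps)$ discrepancy, so after relabeling we have $\Theta(k)$ ``good'' rectangles, each of mass $\Theta(1/k)$ and each with discrepancy density at least $\reps$. This sets up exactly the hypothesis of \Cref{lem:rectangle-discrepancy}.

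Next, for each good rectangle $R_i$, let $N_i$ be the number of sample points of $S$ landing in $R_i$; since $|S|\sim\Poi(m)$ and $(1/2)(\p+\q)(R_i)=\Theta(1/k)$, by Poisson thinning $N_i\sim\Poi(\Theta(m/k))$ independently across $i$. Conditioned on $N_i\ge 2$, pick two of these points $x,y$ (say the first two in some fixed order); since $m\gg\sqrt k$ we get $\Pr[N_i\ge 2]=\Omega(\min\{1,m^2/k^2\})=\Omega(m^2/k^2)$, and conditioned on $N_i\ge 2$ the pair $(x,y)$ is distributed as two i.i.d.\ draws from $(1/2)(\p+\q)_{|R_i}$. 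Applying \Cref{lem:rectangle-discrepancy} with parameter $\reps$ gives $\E[\,|\p(R_{x,y})-\q(R_{x,y})|\,]\ge \reps^{\alpha_d}(\p(R_i)+\q(R_i))=\Omega(\reps^{\alpha_d}/k)$ for that pair. So the expected \emph{squared} discrepancy from the single rectangle $R_{x,y}$ we extract from $R_i$ is, by Jensen on the conditional expectation (or simply because the quantity is in $[0,O(1/k)]$ so its square is at least $\Omega(1/k)$ times its value when we lower-bound crudely — actually we want to be a bit careful here), of order $\Omega(\reps^{2\alpha_d}/k^2)$ conditioned on $N_i\ge 2$. Summing the contribution $\Ind[N_i\ge2]\cdot(\p(R_{x,y_i})-\q(R_{x,y_i}))^2$ over the $\Theta(k)$ good rectangles, linearity of expectation yields expected total squared discrepancy $\Omega(k)\cdot\Omega(m^2/k^2)\cdot\Omega(\reps^{2\alpha_d}/k^2)=\Omega(\reps^{2\alpha_d}m^2/k^3)$. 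Since these extracted rectangles $\tilde R_i:=R_{x,y_i}$ lie inside disjoint $R_i$ they are themselves disjoint, are grid-aligned with respect to $G_S$ by construction, and satisfy $\p(\tilde R_i)+\q(\tilde R_i)\le\p(R_i)+\q(R_i)=O(1/k)$, giving (i); there are at most $\Theta(k)\le 3k$ of them.

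Finally I would pass from expectation to a high-probability statement. The random variable $X:=\sum_i \Ind[N_i\ge2](\p(\tilde R_i)-\q(\tilde R_i))^2$ is a sum of independent bounded terms (each at most $O(1/k^2)$, and there are $O(k)$ of them), so $\Var[X]=O(1/k^3)$, which is of lower order than $\E[X]^2=\Omega(m^4/k^6)$ precisely when $m\gg\sqrt k$ — guaranteed by the hypothesis $m\ge C\sqrt k/(\eps/4)^{2\alpha_d}$. Chebyshev then gives $X\ge\Omega(\reps^{2\alpha_d}m^2/k^3)=\Omega((\eps/4)^{2\alpha_d}m^2/k^3)$ with probability $\ge 9/10$, absorbing the constant-factor losses from preprocessing into the hidden constants, which is exactly (ii). I expect the main obstacle to be the Jensen/square step: \Cref{lem:rectangle-discrepancy} controls $\E[|\p(R_{x,y})-\q(R_{x,y})|]$, a first moment, but we need a lower bound on $\E[(\p(\tilde R_i)-\q(\tilde R_i))^2]$; the clean way is to observe the discrepancy is at most $\p(R_i)+\q(R_i)=O(1/k)$ pointwise, so its square is at least $\Omega(1/k)$ times itself (when nonzero we can't say that — but we can write $\E[D^2]\ge (\E[D])^2$ only if we have a \emph{single} random variable). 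The correct route is $\E[D^2]\ge \E[D]^2$ applied to $D=|\p(\tilde R_i)-\q(\tilde R_i)|$ conditioned on $N_i\ge 2$ via Cauchy–Schwarz, giving $\Omega(\reps^{2\alpha_d}/k^2)$ — this works and is the step that needs the cleanest write-up. The rest is bookkeeping: verifying the subdivision preserves disjointness and grid-alignment, and checking the variance computation.
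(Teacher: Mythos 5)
Your overall strategy (preprocessing the witnessing partition into $O(k)$ rectangles of mass $O(1/k)$ and discrepancy density $\Omega(\eps)$, conditioning on pairs of samples landing in a common rectangle, applying Proposition~\ref{lem:rectangle-discrepancy}, squaring via Jensen/Cauchy--Schwarz, and finishing with a second-moment/Chebyshev argument) matches the paper's proof almost step by step. The Jensen step you were worried about is fine exactly as you resolve it at the end: conditioned on at least two samples landing in $R_i$, the pair $(x,y)$ is genuinely a pair of i.i.d.\ draws from $\tfrac12(\p+\q)_{|R_i}$, so $D=|\p(R_{x,y})-\q(R_{x,y})|$ is a single random variable and $\E[D^2]\ge(\E[D])^2\ge \reps^{2\alpha_d}v_i^2$ holds. (The paper conditions on \emph{exactly} two samples rather than \emph{at least} two; both choices work, since either way the extracted rectangle is grid-aligned and lies inside $R_i$.)

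The genuine gap is your variance bound. You write that each term is at most $O(1/k^2)$, there are $O(k)$ terms, hence $\Var[X]=O(k)\cdot(1/k^2)^2=O(1/k^3)$, and you then claim this is $\ll \E[X]^2 = \Omega(m^4/k^6)$ precisely when $m\gg\sqrt k$. That last inference is an arithmetic error: $1/k^3 \ll m^4/k^6$ is equivalent to $m\gg k^{3/4}$, not $m\gg\sqrt k$, so as written your argument would only establish the lemma under a hypothesis strictly stronger than the one stated. The fix is to also use that each term is supported on the event $\{N_i\ge2\}$, which has probability $O(m^2/k^2)$ (here $m<k$ matters, so the Poisson mean $\Theta(m/k)$ is below one). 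Then
$$
\Var\bigl[\Ind[N_i\ge2]\,(\p(\tilde R_i)-\q(\tilde R_i))^2\bigr]
\;\le\; \Pr[N_i\ge2]\cdot v_i^4 \;=\; O(m^2/k^2)\cdot O(1/k^4)\;=\;O(m^2/k^6),
$$
so $\Var[X]\le O(k)\cdot O(m^2/k^6)=O(m^2/k^5)$. Comparing this to $\E[X]^2=\Omega(\reps^{4\alpha_d}m^4/k^6)$ yields the condition $m\gg\sqrt k/\reps^{2\alpha_d}$, matching the lemma's hypothesis. This is exactly the variance computation carried out in the paper.
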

\begin{proof}
Let $R_1, \cdots, R_k$ be a collection of $k$ axis-aligned rectangles 
which realize the $\Ak$-distance between $\p, \q$. Namely, it holds
$\snorm{\Ak}{\p - \q} = \sum_{i=1}^k \abs{ \p(R_i) - \q(R_i) }.$
For convenience, for each rectangle $R_i$, we will denote 
$v_i \eqdef ( \p(R_i) + \q(R_i) ), \eps_i \eqdef \abs{ \p(R_i) - \q(R_i) } / v_i$.
We first perform some preliminary simplifications 
to make sure that 
$v_i = O(1/k)$ and $\eps_i \geq \eps / 4$.
Given $ v_i > 1/k$, we can subdivide $R_i$ into $ \floor{v_i \; k}$ 
sub-rectangles evenly along the first coordinate 
according to {the cumulative density function of the first coordinate of $(1/2)(\p + \q)$.}
We next discard any rectangles $R_i$ such that $\eps_i < \eps / 4$, 
which leads to us losing 
at most $\sum_{i} (\p(R_i) + \q(R_i)) \eps/4 \leq \eps/2$ discrepancy. 
In summary, after these operations, we will have a collection of 
$\tilde k \leq 3k$ rectangles $R_1, \cdots, R_{\tilde k}$ such that 
for each rectangle $R_i$ in the collection 
we have that $v_i \leq 1/k$, $\eps_i \geq \eps/4$, and 
$\sum_{i=1}^{\tilde k} v_i \eps_i \geq \eps/2$.

Let $S$ be the set of $\Poi(m)$ many \iid samples drawn 
and $G_S$ be the corresponding sample-point grid.
We define the random variable $Y_i$ as follows:
if exactly two samples $x,y \in S$ fall in the same rectangle $R_i$ for $i \in [\tilde k]$, 
then $Y_i = \lp( \p(R_{x,y}) - \q(R_{x,y}) \rp)^2 $; otherwise, $Y_i = 0$.
By the definition of $Y_i$, we know that if $Y_i > 0$, then there exists some rectangle $\tilde R \subset R_i$ aligned with $G_S$ such that 
$Y_i = \lp( \p( \tilde R ) - \q(  \tilde R ) \rp)^2$.
Hence, $\sum_{i=1}^{\tilde k} Y_i$ is always a lower bound on 
the discrepancy collected by the best collection of at most 
$\tilde k \leq 3k$ rectangles aligned with the grid $G_S$ 
for any instance of the set $S$.
Consequently, to prove the lemma, it suffices to show that 
$\sum_{i=1}^{\tilde k} Y_i \geq
\Omega \lp( (\eps/4)^{2 \alpha_d } \; m^2 / k^3 \rp)$ 
with probability at least $9/10$.

Consider the event $E_i$ that exactly two sample points land inside $R_i$.
Then it is easy to see that
$$
\Pr[ E_i] = 
\Pr \lp[\Poi(m v_i / 2) = 2  \rp]
= \Theta(1) \, (m v_i)^2 \;.
$$
Conditioned on the event $E_i$, $Y_i$ is equal to 
$\lp( \p(R_{x,y}) - \q(R_{x,y}) \rp)^2$, 
where $x,y$ are two random points from $ \frac{1}{2} (\p + \q)_{|R_i}$.
By our preliminary simplification, we have that 
$\abs{\p(R_i) - \q(R_i)} \geq (\eps/4) \lp(\p(R_i) + \q(R_i) \rp)$.
Hence, applying \Cref{lem:rectangle-discrepancy}, we obtain 
$$
\underset{{ x,y \sim \frac{1}{2} (\p + \q)_{|R_i} }} \E \lp[ \abs{\p(R_{x,y}) - \q(R_{x,y})} \rp]
\geq 
\eps_i^{\alpha_d} v_i \;.
$$
Combining this with Jensen's inequality then gives that
$$
\underset{{ x,y \sim \frac{1}{2} (\p + \q)_{|R_i} }} \E \lp[ \lp(\p(R_{x,y}) - \q(R_{x,y}) \rp)^2 \rp]
\geq 
\lp( \underset{{ x,y \sim \frac{1}{2} (\p + \q)_{|R_i} }} \E \lp[ \abs{\p(R_{x,y}) - \q(R_{x,y})} \rp] \rp)^2
\geq
\eps_i^{2\alpha_d} v_i^2 \;.
$$
Since $Y_i$ conditioned on the event $E_i$ is distributed 
as $\lp( \p(R_{x,y}) - \q(R_{x,y}) \rp)^2$ and $Y_i$ is always non-negative, 
we thus have
\begin{align} \label{eq:single-yi-bound}
\E[Y_i] 
\geq
\E[Y_i | E_i] \; \Pr[E_i]
\geq \Omega(1) \; (m \; v_i)^2 \; \eps_i^{ 2  \alpha_d }  v_i^2
\geq \Omega(1) \; \eps_i^{ 2  \alpha_d }
\; m^2 \; v_i^4 \;.    
\end{align}
Summing over all $Y_i$'s, we obtain 
\begin{align*}
\sum_{i=1}^{\tilde k} \E[Y_i] 
\geq 
\Omega(1) \, 
\sum_{i=1}^{\tilde k}  \eps_i^{ 2  \alpha_d -4 }
\, m^2 \; (v_i \eps_i) ^4
\geq 
\Omega(m^2) \, (\eps/4)^{ 2  \alpha_d -4 }
\sum_{i=1}^{\tilde k}  (v_i \eps_i)^4
\geq 
\Omega(m^2) (\eps/4)^{ 2  \alpha_d } / k^3 \, ,
\end{align*}
where the first inequality uses (\Cref{eq:single-yi-bound}), 
in the second inequality we bound from below $\eps_i$ by $\eps/4$, 
and in the third inequality we use the fact that 
$\sum_{i=1}^{b} a_i^4$ subject to $\sum_i a_i = A$, $a_i \geq 0$, 
is minimized at $a_i = A/b$.


On the other hand, since $Y_i$ is defined to be non-zero 
only when there exist two points landing in $R_i$, 
and takes values at most $v_i^2$, we have that
$
\Var[Y_i] \leq \Pr[ E_i ] \; v_i^4
= O(1)   m^2  v_i^6 \,.
$
Furthermore, since the $Y_i$'s are independently distributed, 
it follows that
\begin{align*}
    \Var \lp[ \sum_{i=1}^{\tilde k} Y_i \rp]
    = \sum_{i=1}^{\tilde k}
    \Var[ Y_i ]
    \leq 
    O(1)  \; m^2  \sum_{i=1}^{\tilde k} v_i^6
    \leq O( m^2 / k^5 ) \;,
\end{align*}
where in the last inequality we use that $v_i \leq O(1/k)$.
We then have that 
$ 
\Var[ \sum_{i=1}^{\tilde k} Y_i]  \leq (1/20) 
\lp(  
\E \lp[ \sum_{i=1}^{\tilde k} Y_i \rp] \rp)^2
$
as long as $m \geq C  \sqrt{k} / (\eps/4)^{2 \alpha_d}$ 
for some sufficiently large universal constant $C>0$. 
Then, by Chebyshev's inequality, it follows that
$$
\Pr \lp[ 
\sum_{i=1}^{\tilde k} Y_i \geq \Omega\left( (\eps/4)^{2 \alpha_d} \; m^2/k^3 \right)
\rp]
\geq 9/ 10 \;.
$$
This concludes the proof of \Cref{lem:grid-point-discrepancy}.
\end{proof}

\paragraph{Existence of Good Grid Covering}
By \Cref{lem:grid-point-discrepancy}, 
there exist $O(k)$ grid-aligned rectangles that capture 
$\Omega_{d, \eps}(m^2/k^3)$ discrepancy between $\p, \q$.
A naive tester may proceed as follows: 
Choose a set of $k'= O(k)$ disjoint rectangles aligned with the sample-point grid, 
and then perform closeness testing between the reduced distributions of $\p, \q$ 
on the chosen rectangles. Then, with non-trivial probability, 
the chosen rectangles will capture enough discrepancy between $\p, \q$, 
and a standard closeness tester would suffice.
Unfortunately, the number of ways to choose $k'$ disjoint 
grid-aligned rectangles from a grid containing $m^d$ grid points 
is at least $m^{ \Omega(d \; k') }$.
If we were to try all possible collections of $k'$ disjoint grid-aligned rectangles, 
the resulting tester would likely be inefficient, 
as discussed in our techniques overview (\Cref{sec:techniques}), 
in terms of both sample complexity and computational complexity.
To circumvent this issue, we 
will instead consider a carefully chosen subset 
of all grid-aligned rectangles with respect to the sample-point grid 
such that any grid-aligned rectangle can be decomposed into the union 
of a small number of rectangles from the family. 
Moreover, the subset is carefully constructed 
to have the subtle property that any point $x \in \R^d$ 
is contained in a small number of rectangles from the subset.
This leads us to the concept of \emph{Grid Covering}, 
which is based on the idea of \emph{Good Oblivious Covering} 
(Definition 2 from \cite{diakonikolas2019testing}).

\begin{definition}[Grid Covering] \label{def:grid-cover}
Let $m$ be a power of $2$ and $S$ be a set of $(m+1)$ points in $\R^d$ and $G_S$ be the corresponding sample-point grid. 
A {\em grid covering} is a family of rectangles aligned with the sample-point grid, 
which we denote by $\mathcal F(G_S)$, satisfying the following:
\begin{itemize}[leftmargin=*]
    \item Any rectangle aligned with the grid can be represented as the union of at most 
    $2^d \log^d m$ disjoint rectangles from $\mathcal F(G_S)$.
    \item Any point in $\R^d$ is contained in exactly $\log^d m $ rectangles.
\end{itemize}
\end{definition}

\noindent With a construction similar to that in \cite{diakonikolas2019testing}, 
we show that a Grid Covering always exists.

\begin{lemma}[Existence of Grid Covering] \label{lem:gc-exist}
Let $m$ be a power of $2$, {$S$ be a set of $(m+1)$ points from $\R^d$ and $G_S$ be the corresponding sample-point grid.}
Then there exists a grid covering $\mathcal F(G_S)$.
\end{lemma}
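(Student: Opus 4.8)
The plan is to build $\mathcal F(G_S)$ as a $d$-fold product of one-dimensional coverings, following the spirit of the Good Oblivious Covering of~\cite{diakonikolas2019testing}, and thereby reduce the statement to a purely one-dimensional claim. Fix a coordinate $\ell \in [d]$ and let $p^{(\ell)}_0 < p^{(\ell)}_1 < \cdots < p^{(\ell)}_m$ be the $\ell$-th coordinates of the $m+1$ points of $S$; these split the line into $m$ elementary cells $C^{(\ell)}_1, \ldots, C^{(\ell)}_m$ (with a half-open convention, and the two extreme cells taken to be unbounded, so that they partition all of $\R$). I would first establish the existence of a family $\mathcal F_\ell$ of grid-intervals in coordinate $\ell$ such that (a) every point of $\R$ lies in exactly $\log m$ members of $\mathcal F_\ell$, and (b) every interval whose two endpoints are among the $p^{(\ell)}_i$ is a disjoint union of at most $2\log m$ members of $\mathcal F_\ell$. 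Given such families, set $\mathcal F(G_S) = \{\, R_1 \times \cdots \times R_d : R_\ell \in \mathcal F_\ell \text{ for all } \ell \in [d] \,\}$. Then a point $x$ lies in exactly $\prod_{\ell} \log m = \log^d m$ rectangles of $\mathcal F(G_S)$; and for a grid-aligned rectangle $R = I_1 \times \cdots \times I_d$, writing each $I_\ell$ as a disjoint union of at most $2\log m$ pieces from $\mathcal F_\ell$ and distributing the product over these unions expresses $R$ as a disjoint union of at most $(2\log m)^d = 2^d \log^d m$ members of $\mathcal F(G_S)$. So the lemma reduces to the one-dimensional claim.

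For the one-dimensional construction I would use the canonical ``segment-tree'' (dyadic) decomposition. Index the elementary cells $1, 2, \ldots, m$. For each level $j \in \{0, 1, \ldots, \log m - 1\}$ (here $\log m = \log_2 m$ is a positive integer since $m$ is a power of $2$) and each $i \in [\, m/2^j \,]$, let $I^{(j)}_i$ be the grid-interval consisting of the consecutive cells $(i-1)2^j + 1, \ldots, i 2^j$, and let $\mathcal F_\ell$ be the collection of all these $I^{(j)}_i$. Property (a) is immediate: a point lies in exactly one cell $c$, and for each of the $\log m$ levels $j$ there is exactly one block of level $j$ containing $c$, so the point lies in exactly $\log m$ members. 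For property (b), an interval with endpoints $p_a < p_b$ corresponds to the block of consecutive cells $\{a+1, \ldots, b\}$; the standard canonical decomposition writes any such block as a disjoint union of maximal dyadic sub-blocks, and a simple maximality argument (three consecutive level-$j$ blocks in the decomposition could be merged into a level-$(j+1)$ block) shows at most two blocks of each level are used, hence at most $2\log m$ blocks in total. Moreover no level-$(\log m)$ block is ever needed, since $\{1,\ldots,m\}$ itself already decomposes into two blocks of width $m/2$, and any proper sub-block into blocks of width at most $m/2$. This gives (a) and (b).

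I expect the only real subtlety to be the bookkeeping with the range of levels and the boundary conventions: one must start the levels at $j=0$ (so that singleton cells are available, which is needed to decompose short intervals) yet stop at $j = \log m - 1$ rather than $\log m$ (so that the per-point count is exactly $\log m$, not $\log m + 1$), and one must fix a half-open convention for the elementary cells and for the behavior of the extreme cells outside the bounding box of $S$ so that ``every point of $\R^d$'' is literally covered exactly $\log^d m$ times; none of this affects the stated bounds. Everything else is routine: $\mathcal F(G_S)$ is clearly finite, each product piece $R_1 \times \cdots \times R_d$ is itself a grid-aligned rectangle, and disjointness of the product pieces in the decomposition of $R$ follows coordinatewise from the disjointness of the one-dimensional pieces. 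Hence the grid covering exists.
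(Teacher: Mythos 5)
Your proposal is correct and follows essentially the same approach as the paper: in each coordinate, a dyadic (segment-tree) hierarchy of grid-intervals at $\log m$ levels gives the one-dimensional covering, and taking the $d$-fold product of these families yields $\mathcal F(G_S)$, with the per-point count multiplying to $\log^d m$ and the decomposition bound multiplying to $(2\log m)^d$. The one refinement worth noting is that you explicitly make the two extreme cells unbounded (via a half-open convention) so that the ``every point of $\R^d$ lies in exactly $\log^d m$ rectangles'' condition holds literally on all of $\R^d$, whereas the paper's proof only verifies this for points inside the bounding box of $S$.
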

\begin{proof}
For each coordinate $j \in [d]$, let $x^{(1)}_j, \cdots, x^{(m+1)}_j$ be the 
$j$-th coordinates of the samples collected sorted in increasing order.
We will refer to these numbers as the ``grid values''.
For each $i \in  [\log m]$, we will define $\mathcal I_{j, i}$ 
as the partition of the interval $[x^{(1)}_j, x^{(m)}_j]$ into $2^i$ many 
sub-intervals such that each sub-interval in the partition 
contains an equal number of grid values.
Then the rectangles in $\mathcal F(G_S)$ are those of the following form: 
for $j \in [d]$, an interval $I_j \in \bigcup_i \mathcal I_{j, i}$ 
is chosen and the rectangle is simply the product of the $d$ selected intervals $I_j$.

Then it is easy to see that for any value $ z \in [ x^{(1)}_j, x^{(m+1)}_j ] $, $z$ 
is within $\log m$ intervals from $\bigcup_i \mathcal I_{j, i}$ 
(one interval from each partition). As a result, any point in $\R^d$ 
is within $ \log^d m $ rectangles from $\mathcal F(G_S)$.

Let $R$ be a grid-aligned rectangle that is the product of the intervals 
$I_1, \cdots, I_d$. Notice that the interval $I_j$ can be decomposed 
into at most $2  \log m$ intervals from $\bigcup_i \mathcal I_{j, i}$ 
(at most $2$ intervals from each partition $\mathcal I_{j,i}$). 
Thus, $R$ can be decomposed into at most $2^d \log^d m$ rectangles from $\mathcal F(G_S)$.
This completes the proof. 
\end{proof}

We next define the notion of the \emph{induced distribution} 
of $\p, \q$ on $\mathcal F(G_S)$.
\begin{definition}[Induced Distribution] \label{def:ind-d}
Given a ditribution $\p$ on $\R^d$ and a family of sets $\mathcal F$ 
{whose elements are non-empty sets in $\R^d$ that are not necessarily disjoint}, 
the {\em induced distribution $\p^{\mathcal F}$} 
is defined as follows. To draw a random sample from $\p^{\mathcal F}$, 
one first draws a random sample $x$ from $\p$. 
{If $x$ does not belong to any set in $\mathcal F$, we return the special element 
$\emptyset$. Otherwise, we return a uniformly random set $S \in \mathcal F$ such that $x \in S$.}
\end{definition}
Notice that for a rectangle $R \in \mathcal F(G_S)$, we have that 
$\p^{\mathcal F(G_S)}(R) = \p(R) / \log^d m$, 
since each point appears in exactly $\log^d m$ rectangles from $\mathcal F(G_S)$. 
This then allows us to show that the $\ell_2$-discrepancy between 
the induced distributions $\p^{\mathcal F(G_S)}, \q^{\mathcal F(G_S)}$ 
must be non-trivial if the grid $G$ satisfies the 
conclusion in \Cref{lem:grid-point-discrepancy}. 
Specifically, we show:
\begin{lemma} \label{lem:induced-discrepancy}
{Let $m$ be a power of $2$, $S$ be a set of $(m+1)$ points in $\R^d$, 
and $G_S$ be the corresponding sample-point grid.}
Moreover, suppose that the conclusion of \Cref{lem:grid-point-discrepancy} holds for $G_S$.
Then there exists a subset of rectangles $H \subset \mathcal F(G_S)$ 
such that the following conditions hold: 
\begin{itemize}
\item[(i)] $\abs{H} \leq 3k \; 2^d \log^d m$, 
\item[(ii)] $\p^{\mathcal F(G_S)}(R) + \q^{\mathcal F(G_S)}(R) \leq  O\lp( \log^{-d}(m) / k \rp)$ for all $R \in H$, and 
\item[(iii)]
$\sum_{ R \in H } \lp( \p^{\mathcal F(G_S)}(R) - \q^{\mathcal F(G_S)}(R) \rp)^2 
\geq \Omega(1) \; 
2^{-d} \; \log^{-3d} (k) \; (\eps/4)^{2\alpha_d } \; m^2 / k^3 \;.
$
\end{itemize}
\end{lemma}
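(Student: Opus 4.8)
The plan is to build $H$ directly from the collection of $k' \le 3k$ disjoint grid-aligned rectangles $\tilde R_1, \ldots, \tilde R_{k'}$ furnished by the hypothesis that the conclusion of \Cref{lem:grid-point-discrepancy} holds. By the first defining property of the grid covering in \Cref{def:grid-cover}, each $\tilde R_i$ decomposes as a disjoint union $\tilde R_i = \bigsqcup_{j=1}^{t_i} R_{i,j}$ with every $R_{i,j} \in \mathcal F(G_S)$ and $t_i \le 2^d \log^d m$. I then set $H \eqdef \{ R_{i,j} : i \in [k'],\, j \in [t_i] \}$. Since the $\tilde R_i$ are pairwise disjoint and each $R_{i,j} \subseteq \tilde R_i$, all the pieces $R_{i,j}$ are pairwise disjoint (in particular distinct as sets), so $|H| = \sum_{i=1}^{k'} t_i \le 3k\, 2^d \log^d m$, which is~(i).

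For~(ii), the containment $R_{i,j} \subseteq \tilde R_i$ together with part~(i) of \Cref{lem:grid-point-discrepancy} gives $\p(R_{i,j}) + \q(R_{i,j}) \le \p(\tilde R_i) + \q(\tilde R_i) = O(1/k)$. Dividing by $\log^d m$ and using the identity $\p^{\mathcal F(G_S)}(R) = \p(R)/\log^d m$ (and the analogous one for $\q$) recorded just before the lemma statement --- valid because each point lies in exactly $\log^d m$ sets of $\mathcal F(G_S)$ --- yields $\p^{\mathcal F(G_S)}(R_{i,j}) + \q^{\mathcal F(G_S)}(R_{i,j}) = O(\log^{-d}(m)/k)$, which is~(ii).

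For~(iii), write $\delta(R) \eqdef \p(R) - \q(R)$. Additivity over the disjoint decomposition gives $\delta(\tilde R_i) = \sum_{j=1}^{t_i} \delta(R_{i,j})$, and Cauchy--Schwarz yields $(\delta(\tilde R_i))^2 \le t_i \sum_j (\delta(R_{i,j}))^2 \le 2^d \log^d m \sum_j (\delta(R_{i,j}))^2$. Summing over $i$ and invoking part~(ii) of \Cref{lem:grid-point-discrepancy},
\[
\sum_{R \in H} (\delta(R))^2 \;=\; \sum_{i=1}^{k'} \sum_{j=1}^{t_i} (\delta(R_{i,j}))^2 \;\ge\; \frac{1}{2^d \log^d m} \sum_{i=1}^{k'} (\delta(\tilde R_i))^2 \;\ge\; \Omega\!\left( \frac{(\eps/4)^{2\alpha_d}\, m^2}{2^d \log^d m \cdot k^3} \right).
\]
Since $\p^{\mathcal F(G_S)}(R) - \q^{\mathcal F(G_S)}(R) = \delta(R)/\log^d m$ for $R \in \mathcal F(G_S)$, the left-hand side of~(iii) equals $\log^{-2d}(m) \sum_{R \in H} (\delta(R))^2$, and is therefore $\Omega(1)\, 2^{-d} \log^{-3d}(m)\,(\eps/4)^{2\alpha_d}\, m^2/k^3$; because $m < k$ in the hypothesis of \Cref{lem:grid-point-discrepancy}, we have $\log^{-3d}(m) \ge \log^{-3d}(k)$, which gives exactly the stated bound.

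I do not expect a genuine obstacle here: the argument is essentially bookkeeping that fuses the two structural properties of a grid covering with one application of Cauchy--Schwarz. The two points that need attention are (a) checking that the decomposition pieces coming from different $\tilde R_i$ stay disjoint, so that $H$ is a legitimate family of bins and the sum in~(iii) genuinely runs over distinct elements; and (b) carefully tracking the powers of logarithms --- in particular replacing the $\log m$ factors that arise naturally by the $\log k$ factors appearing in the statement, which is only legitimate in the regime $m < k$ assumed throughout.
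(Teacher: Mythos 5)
Your argument is correct and matches the paper's proof essentially step for step: the same construction of $H$ by decomposing each witnessing rectangle $\tilde R_i$ via the grid covering, the same use of the identity $\p^{\mathcal F(G_S)}(R) = \p(R)/\log^d m$ for (ii), and the same Cauchy--Schwarz step for (iii) (the paper phrases it as "$\sum_{R \in H_i}\p(R) = \p(R_i)$ and $|H_i| \le 2^d\log^d m$" but it is the same inequality). Your two flagged points---disjointness of the pieces so $|H|$ is an honest count, and the $\log m \to \log k$ substitution justified by $m<k$---are both correct and handled implicitly in the paper.
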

\begin{proof}
Since we assume that the conclusion 
in \Cref{lem:grid-point-discrepancy} is satisfied, 
there exist $k' \leq 3k$ 
many grid-aligned rectangles  $R_1, \cdots, R_{k'}$ (with respect to $G_S$) satisfying 
$  \p(R_i) + \q(R_i)   \leq O\lp( 1/k \rp)$ for all $i \in [k']$, and
\begin{align} \label{eq:initial-discrepancy}
\sum_{i=1}^{k'} \abs{\p(R_i) - \q(R_i)}^2 \geq \Omega( (\eps/4)^{2 \alpha_d}  \; m^2 / k^3 ).
\end{align}
By the definition of the grid covering, 
each $R_i$ can be decomposed into at most $2^d \; \log^d(m)$ rectangles from $\mathcal F(G_S)$.
Let $H_i$ be the set of rectangles in $\mathcal F(G_S)$ into which $R_i$ is decomposed.
We will consider $H = \bigcup_i H_i$.
It is clear that $\abs{H} \leq  3k \; 2^d \; \log^d(m)$, which shows (i).
Moreover, by the definition of the induced distribution, 
for any rectangle $R \in \mathcal F(G_S)$, 
we have $\p^{\mathcal F(G_S)}(R) = \log^{-d} (m) \; \p(R) $.
Therefore, for each $R \in H$, it holds 
$ \p^{\mathcal F(G_S)}(R) + \q^{\mathcal F(G_S)}(R) \leq   \log^{-d} (m) \; O (1/k)$, which shows (ii).

It remains to show (iii). 
For each $H_i$, we have
\begin{equation*}
\sum_{R \in H_i} \lp( \p(R) - \q(R) \rp)^2
\geq 2^{-d} \; \log^{-d}(m) \; \lp( \p(R_i) - \q(R_i) \rp)^2 \, ,
\end{equation*}
since $ \sum_{R \in H_i} \p(R) = \p(R_i) $ (and the same for $\q$) 
and $\abs{H_i} \leq 2^{d} \; \log^{d} (m) $.
Combining this with the fact that $\p^{\mathcal F(G_S)}(R) = \log^{-d} (m) \, \p(R)$ 
and~\Cref{eq:initial-discrepancy} gives (iii). 
This completes the proof. 
\end{proof}

Unlike the naive testing approach (running an $\ell_1$-closeness tester
on many different pairs of reduced distributions), 
we can now run a closeness tester just on the induced distributions 
$\p^{\mathcal F(G_S)}, \q^{\mathcal F(G_S)}$.
A technical issue is that the domain size of $\mathcal F(G_S)$ is still very large. 
This makes the black-box application of any $\ell_1$-closeness tester sample inefficient.
Instead, we need to leverage the fact that a non-trivial fraction 
of the discrepancy between the two distribution 
is supported on a small number of elements.
Interestingly, a tester with similar guarantees was developed in \cite{DKN17} (see Lemma 2.5). 
However, as is, that tester is not sufficient for our purposes.
More specifically, we essentially need to develop an $\ell_2$-version of it. 
The reason is that we need to distinguish between the cases $\p = \q$ 
versus the case that a non-trivial amount of $\ell_2$-discrepancy is supported 
on a few elements {\em that are themselves not too heavy}.
In particular, using tools developed in \cite{DiakonikolasK16} and \cite{chan2014optimal}, we show the following:
\begin{lemma}
\label{lem:small-support-l2}
Let $\p, \q$ be discrete distributions on $[n]$ and $s \in [n]$.
Given $\eps > 0$ and $\Poi(m)$ many \iid samples from $\p, \q$, 
for $ m = \Theta \lp(  \max \lp( \eps^{-4/3}, \eps^{-2} / \sqrt{s} \rp) \rp)$,
there exists a tester \textsc{ Flatten-Closeness } that distinguishes 
between the following cases with probability at least $9/10$: 
(a) $\p = \q$ versus (b) there exists a set of elements $H$ of size $s$ such that
(i)  $\sum_{i \in H} ( p_i - q_i )^2 \geq \eps^2$
and (ii)  $ \max_{i \in S} (p_i + q_i)/2 \leq 1/s $.
\end{lemma}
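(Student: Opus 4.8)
\textbf{Proof proposal for \Cref{lem:small-support-l2}.}

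The plan is to follow the standard two-phase recipe for closeness testers on large domains: a ``flattening'' step that splits heavy domain elements into many light copies so that all relevant masses become $O(1/s)$, followed by an $\ell_2$-based test whose threshold is tuned to the promised $\ell_2$-gap $\eps^2$. First I would recall the flattening/splitting trick from \cite{chan2014optimal}: given the samples, one uses the (Poissonized) empirical counts to decide how finely to subdivide each element; an element whose combined empirical mass is about $a/m$ is split into roughly $a$ new elements, each carrying mass $\approx 1/m$. Poissonization makes the counts in disjoint elements independent, so after this randomized splitting the new pair of distributions $\p',\q'$ satisfies $\|\p'-\q'\|_2^2 = \|\p-\q\|_2^2$ in the ``Yes-distance'' sense while, with high probability, $\max_i (p'_i+q'_i) = O(1/m)$. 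The key quantitative point, which I would state as a sub-claim, is that restricted to the promised set $H$ (whose elements already have mass $\le 1/s$), splitting does not decrease the surviving $\ell_2$-discrepancy by more than a constant factor, because squaring is super-additive under splitting in the wrong direction --- so here one argues instead that the $\ell_2^2$ mass on $H$ can only drop by the fraction of $H$-mass that ``spills'' due to empirical-count fluctuations, which is controlled since each $p_i+q_i \le 1/s$ and $m \gtrsim \eps^{-2}/\sqrt s$ ensures the relevant multiplicative Chernoff bounds.

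Next I would run the $\ell_2$-closeness statistic of \cite{chan2014optimal}/\cite{DiakonikolasK16} on $\p',\q'$: draw $\Poi(m)$ samples from each, let $X_i,Y_i$ be the per-element counts, and form $Z = \sum_i \bigl[(X_i-Y_i)^2 - X_i - Y_i\bigr]$, an unbiased estimator of $m^2\sum_i(p'_i-q'_i)^2$ (up to the usual Poissonized normalization). In the $\p=\q$ case $\E[Z]=0$; in the far case $\E[Z] \gtrsim m^2\eps^2$ by the sub-claim. For the variance one uses the now-standard bound $\Var[Z] = O\!\bigl(m^2 \|\p'+\q'\|_2^2 + m^3\|\p'-\q'\|_2^2 \max_i(p'_i+q'_i)\bigr)$, and here the flattening pays off: $\|\p'+\q'\|_2^2 \le \max_i(p'_i+q'_i) \le O(1/m)$ gives the first term $O(m)$, and $\max_i(p'_i+q'_i) = O(1/m)$ turns the second term into $O(m^2 \eps^2)$. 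Plugging into Chebyshev, the test succeeds once $m^2\eps^2 \gg \sqrt{m + m^2\eps^2}$, i.e. $m \gg \eps^{-4/3}$ (from the $\sqrt m$ term) --- which matches the claimed $\eps^{-4/3}$; the second stated term $\eps^{-2}/\sqrt s$ I would attribute to the flattening step itself (needing enough samples that each of the $s$ light elements of $H$ is estimated accurately enough that its $\ell_2^2$-contribution is preserved), i.e. it is the ``cost of not knowing $H$'' combined with the mass bound $1/s$.

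The main obstacle I anticipate is the sub-claim that flattening preserves the $\ell_2^2$-discrepancy on $H$ up to constants. Naively, splitting an element of mass $p_i$ into $t$ pieces replaces $(p_i-q_i)^2$ by $\sum_{j}(p_i/t - q_i/t)^2 \cdot(\text{something})$, which can shrink the $\ell_2^2$ contribution by a factor $t$; the point is that the \emph{number of pieces is governed by the empirical count, not by an adversary}, so an element of $H$ with true mass $\le 1/s$ gets split into only $O(1 + m/s)$ pieces with high probability, and one has to check that this is small enough (given $m \lesssim$ the sample budget) that the per-element $\ell_2^2$ loss is a constant factor --- and then union-bound the fluctuation events over the $s$ elements of $H$, which is exactly where the $\eps^{-2}/\sqrt s$ sample lower bound enters. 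I would handle this by conditioning on the typical event that every element's empirical mass is within a constant factor of its true mass (valid for true mass $\gtrsim 1/m$) and separately arguing that elements with true mass $\ll 1/m$ contribute negligibly to $\sum_{H}(p_i-q_i)^2$ anyway, since $(p_i-q_i)^2 \le (p_i+q_i)^2$ is tiny there. The remaining steps (unbiasedness of $Z$, the variance bound, Chebyshev, and boosting $9/10$ to any constant by repetition) are routine and I would cite \cite{chan2014optimal, DiakonikolasK16} rather than reproduce them.
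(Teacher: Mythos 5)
Your high-level plan --- flattening followed by an $\ell_2$-test, leveraging the lightness constraint $\max_{i\in H}(\p_i+\q_i)/2 \le 1/s$ so that elements of $H$ are barely subdivided --- is exactly the paper's approach, and the intuitive version of your sub-claim (an element of $H$ gets split into $O(1+m/s)$ pieces, so capping the flattening budget at $m \lesssim s$ keeps the per-element loss to a constant) is precisely right and matches the paper's computation. However, the rigorous version you sketch in the last few sentences has a genuine logical error. You propose to condition on ``every element's empirical mass is within a constant factor of its true mass (valid for true mass $\gtrsim 1/m$)'' and then dismiss elements of true mass $\ll 1/m$ on the grounds that ``$(p_i-q_i)^2 \le (p_i+q_i)^2$ is tiny there'' so they ``contribute negligibly.'' This is backwards and, taken literally, throws away all the signal: under the hypotheses, \emph{every} element of $H$ has mass $\le 1/s$, which is at most order $1/m$ once $m \lesssim s$, so your ``negligible'' bucket can be the entirety of $H$ --- and yet $\sum_{i\in H}(\p_i-\q_i)^2 \ge \eps^2$ by assumption. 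The correct statement in this regime is the opposite: an element of mass $\ll 1/m$ receives zero flattening samples with probability $1-O(m/s)$, so it is \emph{not} split at all and its $\ell_2^2$ contribution is preserved essentially exactly, not discarded.

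The paper avoids both the conditioning and the spurious case split by computing the relevant expectation in closed form: with $X_i \sim \Poi(\lambda_i)$, $\lambda_i = m(\p_i+\q_i)/2$, one has $\E[(\p_i-\q_i)^2/(X_i+1)] = (\p_i-\q_i)^2(1-e^{-\lambda_i})/\lambda_i$, and the constraint $\lambda_i \le m/s \le 1/100$ makes this factor $\ge 0.99$ uniformly over $i\in H$; a one-line second-moment bound then gives the required high-probability statement. This is the cleaner way to make your ``$O(1+m/s)$ pieces'' intuition rigorous. One further small clarification: the $\eps^{-2}/\sqrt{s}$ term does not come from ``estimating each element of $H$ accurately'' during flattening, but from the hard cap $m \le s/100$ on the flattening budget (imposed so that $H$ stays unsplit), which leaves $\|\p_S\|_2^2 \gtrsim 1/s$ and thus forces the downstream $\ell_2$-tester to spend $\Theta(\sqrt{1/s}\,\eps^{-2})$ samples. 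I would also drop the aside that flattening preserves $\|\p'-\q'\|_2^2$ --- flattening preserves $\ell_1$ (Fact 2.5/Lemma 2.6 of \cite{DiakonikolasK16}) but shrinks $\ell_2^2$; controlling that shrinkage on $H$ is the whole point of the sub-claim.
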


The proof of \Cref{lem:small-support-l2} builds on the approach of the 
$\ell_{1,k}$ tester. An important difference is that we 
now need to carefully incorporate the upper bound on the mass 
of the elements witnessing the discrepancy into the analysis. 
We defer the proof to \Cref{sec:light-bin-l2-test}.

We are now ready to present the pseudo-code of our testing algorithm and provide its 
proof of correctness. 

\begin{algorithm}[H]
\caption{Multidimensional $\mathcal{A}_k$ Closeness Tester} \label{alg:multi-ak-test}
    \begin{algorithmic}[1]
    \Require  sample access to $\p, \q$ on $\R^d$; 
    accuracy $\eps$.
    \State Set $m \gets C' \; k^{6/7} \; \eps^{ -2 \alpha_d / 3 } \; \log^d(k) \; 2^{d/3}$, 
    where $C'$ is a sufficiently large constant and 
    $\alpha_d$ is defined in \Cref{lem:rectangle-discrepancy}.
    \State Draw $\Poi(m)$ samples from $(1/2) (\p + \q)$ and denote the set of samples by $S$.
    \State Add arbitrarily some distinct points to $S$ such that $|S|$ is a power of $2$. \label{line:addpoint}
    \State Construct the grid $G_S$ (\Cref{def:grid}) and the grid covering $\mathcal F(G_S)$ (\Cref{def:grid-cover}).
    \State Run the $\ell_2$-closeness tester of \Cref{lem:small-support-l2} 
    on the induced distributions $\p^{\mathcal F(G_S)}, \q^{\mathcal F(G_S)}$ 
    with accuracy parameter 
    $ \kappa = c \; 2^{-d} \log^{-3d} k \; (\eps/4)^{2 \alpha_d} \; m^2 / k^3 $ 
    for some sufficient small constant $c>0$.
    \State Accept if that closeness tester accepts; otherwise Reject.
    \end{algorithmic}
\end{algorithm}

\begin{proof}[Proof of Upper Bound in Theorem~\ref{thm:main-intro}]
We first present the analysis assuming that 
$\p, \q$ are continuous distributions and in the end give a preprocessing step to make sure the algorithm works for general distributions.
Let $F(G_S)$ be defined as in Algorithm~\ref{alg:multi-ak-test}.
If $\p = \q$, we have $\p^{\mathcal F(G_S)} = \q^{\mathcal 
F(G)}$. Therefore, the tester will accept with probability at least $2/3$ by \Cref{lem:small-support-l2}.

Next, we consider the case $\snorm{\Ak}{\p - \q} > \eps$. 
We claim that with probability at least $9/10$ 
there exist $k' \leq k$ grid-aligned rectangles (with respect to $G_S$) 
such that the conclusion of \Cref{lem:grid-point-discrepancy} is satisfied. 
{Without the operation of adding extra points into $S$ in Line~\ref{line:addpoint} of 
Algorithm~\ref{alg:multi-ak-test}, the claim just follows 
from \Cref{lem:grid-point-discrepancy}.}
Now it is easy to see that $R_1, \cdots, R_{k'}$ 
are still grid-aligned rectangles with the extra points. 
Hence, the claim follows. 

Condition on the event that the conclusion of \Cref{lem:grid-point-discrepancy} holds. 
We can then apply \Cref{lem:induced-discrepancy}, 
which gives us that there exists a set of elements $H \subset \mathcal F(G_S)$ 
such  that 
(i) $\abs{H} \leq 2^d \log^d m \; k$
(ii) $\p(R) + \q(R) \leq  O(1/k)$ for all $R \in H$, and 
(iii)
$$\sum_{ R \in H } \lp( \p^{\mathcal F(G_S)}(R) - \q^{\mathcal F(G_S)}(R) \rp)^2 
\geq \Omega(1) \; 
2^{-d} \; \log^{-3d} (k) \; (\eps/4)^{2\alpha_d } \; m^2 / k^3.
$$
Then, applying \Cref{lem:small-support-l2}, 
gives that the tester rejects with probability at least $9/10$ 
given that 
$$
m \geq C  \max \lp(  {\kappa}^{-2/3}
,{\kappa}^{-1} / \sqrt{k}   \ \rp) \;,
$$
where 
$\kappa =  \Theta(1) \;
2^{-d} \; \log^{-3d} (k) \; (\eps/4)^{2\alpha_d } \; m^2 / k^3 $ 
and $C>0$ is a sufficiently large constant.
One can verify that
$
m = 
C' \; k^{6/7} \; \eps^{ -2 \alpha_d / 3 } \; \log^d(k) \; 2^{d/3}
$
suffices, 
where $C'$ is a sufficiently large constant. 

Now let us relax the assumption that the marginal distributions 
of $\p, \q$ in each coordinate have continuous cumulative density functions.
We begin with the observation that the algorithm's output 
essentially depends only on the order information of the sample points. 
That is, given two different sets of samples 
$S = \{ x^{(1)}, \cdots, x^{(m)} \}, \tilde S = \{ \tilde x^{(1)}, \cdots,   \tilde x^{(m)}\}$ 
such that the relative orders of $x^{(1)}_j, \cdots, x^{(m)}_j$ and 
$\tilde x^{(1)}_j, \cdots, \tilde x^{(m)}_j$ are the same for each coordinate $j \in [d]$, 
the output of the algorithm will always be the same.

Based on this observation, we know that the algorithm will satisfy 
the same guarantee if we give it only the ``rank'' information of the samples. 
For $j \in [d]$, we sort $x^{(1)}_j, \cdots, x^{(m)}_j$ in increasing order.  
We will denote by $\pi(j)_i$ the rank of $x^{(i)}_j$ in the sorted sequence.
Then, for each $i \in [m]$, we replace the original sample 
with the new sample $\hat x^{(i)}$ defined as $\hat x^{(i)}_j = \pi(j)_i$. 

If the marginal distributions of $\p$ or $\q$ are not continuous, 
we may observe multiple samples sharing the same value at some coordinates.
Then, when computing the rank information of the samples, 
we will break ties uniformly at random.
Now consider the distributions $\p', \q'$ obtained 
by stretching any point-mass of their marginal distributions 
at any coordinate into an interval.
If the algorithm takes samples from $\p', \q'$ instead, 
the guarantees are satisfied, 
since $\p', \q'$ are both continuous distributions and 
$\snorm{\Ak}{\p - \q} = \snorm{\Ak}{\p' - \q'}$. 
On the other hand, the order of samples taken from $\p', \q'$ 
has the same distribution as the order of samples taken from $\p, \q$ 
after we break ties uniformly at random. This then concludes the proof.
\end{proof}


\subsection{Proof of \Cref{lem:rectangle-discrepancy}} \label{sec:discrepancy}

Let $R$ be an axis-aligned rectangle such that 
$ \abs{\p(R) - \q(R)} \geq \eps ( \p(R) + \q(R) )$. 
Let $x, y$ be samples from $(1/2) (\p + \q)_{|R}$ -- the uniform mixture
of $\p, \q$ restricted to $R$. We want to show that in expectation over $x, y$ the discrepancy $\abs{\p(R_{x,y}) - \q(R_{x,y})}$ is large.

\paragraph{Warm-up: Special case $\p(R) > 0$ and $\q(R) = 0$.}
Towards establishing the desired statement, we first analyze the special 
case that $\p(R) > 0$ and $\q(R) = 0$. The proof for this case also serves as 
intuition regarding why selecting the interval $R_{x,y}$ is a good choice. 

In this case, the discrepancy between $\p$ and $\q$ is simply $\p(R_{x,y})$ --- 
the probability mass of $R_{x,y}$ with respect to $\p$.
Therefore, whether $R_{x,y}$ captures enough discrepancy boils down 
to the following question: {\em Let $x,y$ be random points drawn 
from an \emph{arbitrary} distribution $D$ over $\R^d$. 
What is the minimum amount of mass captured by the rectangle $R_{x,y}$ in expectation?} 
We show the quantity is indeed non-trivial.

Interestingly, the proof of this statement relies on a certain 
generalized version of the famous Erd\H{o}s-Szekeres theorem.
In particular, the generalized Erd\H{o}s-Szekeres theorem  bounds from above 
the minimum length of a sequence consisting of points in $\R^d$ 
such that there exists a subsequence of points that 
is monotonic in each coordinate.
\begin{lemma}
\label{lem:constant-mass}
Let $x,y$ be random samples independently drawn from a distribution $D$ on $\R^d$. 
Then it holds
$
\E_{x,y \sim D} [ D( R_{x,y} ) ] \geq 
\beta_d \, ,
$
where $\beta_d = \lp( 2^{2^{d-1}} + 1 \rp)^{-3}$.
Moreover, there exists a distribution $D$ such that
$\E_{x,y \sim D} [ D( R_{x,y} ) ] \leq \frac{2}{ 2^{2^{d-1}} }$.
\end{lemma}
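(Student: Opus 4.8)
The first step is to rewrite the target quantity as a three-point ``containment'' probability: by Fubini, if $x,y,z$ are drawn i.i.d.\ from $D$, then $\E_{x,y\sim D}[D(R_{x,y})] = \Pr_{x,y,z}[z\in R_{x,y}]$. Writing $p(D)$ for this probability, both claims in the lemma become statements about $p(D)$, and each is meant to be settled by the corresponding direction of Fact~\ref{thm:erdos} (De Bruijn's generalization of Erd\H{o}s--Szekeres): the lower bound by the existential statement, and the upper bound by its matching extremal configuration.

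For the lower bound, set $t \eqdef 2^{2^{d-1}}+1$ and draw $X_1,\dots,X_t$ i.i.d.\ from $D$. First I would reduce to the case where $D$ is ``generic'', meaning that the $X_i$ almost surely have pairwise-distinct coordinates: replacing $D$ by its convolution $D_n$ with a Gaussian of vanishing variance yields such a distribution, $D_n\to D$ weakly, and since $\{(x,y,z):z\in R_{x,y}\}$ is a closed subset of $(\R^d)^3$, the portmanteau lemma gives $p(D)\ge\limsup_n p(D_n)$; so it suffices to prove the bound for $D_n$. In the generic case, Fact~\ref{thm:erdos} guarantees that \emph{every} outcome of $X_1,\dots,X_t$ contains three points one of which lies in the axis-aligned rectangle spanned by the other two; listing that triple with the ``central'' point in the middle shows that the event $\bigcup_{(i,j,k)}\{X_j\in R_{X_i,X_k}\}$ (union over ordered triples of distinct indices in $[t]$) has probability $1$. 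Since the $X_i$ are exchangeable, each of the $t(t-1)(t-2)$ events in this union has probability exactly $p(D)$, so a union bound gives $1\le t(t-1)(t-2)\,p(D)\le t^3 p(D)$, i.e.\ $p(D)\ge t^{-3}=\beta_d$.

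For the upper bound, I would take $D$ to be the uniform distribution on a set $S$ of $M\eqdef 2^{2^{d-1}}$ points of $\R^d$ with pairwise-distinct coordinates such that no point of $S$ lies in the axis-aligned rectangle spanned by two others of $S$; this is exactly the extremal point set certifying that Fact~\ref{thm:erdos} is tight (produced by the standard recursive construction, e.g.\ $S=\{(2,1),(1,2),(4,3),(3,4)\}$ for $d=2$). Then $p(D) = M^{-3}\,|\{(x,y,z)\in S^3 : z\in R_{x,y}\}|$, and by the defining property of $S$ a triple contributes only when $z=x$ or $z=y$: if $x,y,z$ are distinct then $z\notin R_{x,y}$ by construction, while if $x=y\ne z$ then $R_{x,y}=\{x\}$ does not contain $z$. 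There are $2M^2-M$ such triples, giving $\E_{x,y\sim D}[D(R_{x,y})] = (2M-1)/M^2 < 2/M = 2/2^{2^{d-1}}$.

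I expect the main obstacle to be purely one of bookkeeping: pinning down the exact form and parameters of Fact~\ref{thm:erdos} --- the threshold $2^{2^{d-1}}+1$ for the existence of a containment triple, together with the matching extremal set of size $2^{2^{d-1}}$ --- and carefully carrying out the reduction to generic $D$. Once those are in hand, the lower bound is a single union bound over exchangeable events and the upper bound is an elementary count, so neither presents a conceptual difficulty.
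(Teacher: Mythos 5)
Your proof follows essentially the same route as the paper's: rewrite $\E_{x,y}[D(R_{x,y})]$ as $\Pr_{x,y,z}[z\in R_{x,y}]$ via Fubini, observe that among $N=2^{2^{d-1}}+1$ i.i.d.\ draws De Bruijn's theorem forces a containment triple, and deduce the $1/N^3$ bound by counting ordered triples (the paper phrases this through the empirical distribution $D_N$; you phrase it as a union bound over exchangeable events, but the counting is the same). Your upper-bound construction is also the paper's, with a marginally tighter tally $(2M-1)/M^2$ versus the paper's blanket bound $D(R_{x,y})\le 2/M$. The one genuine (if small) addition is your explicit portmanteau reduction to a ``generic'' $D$; the paper skips this, which is defensible since Fact~\ref{thm:erdos} as stated allows non-strict monotonicity and hence already tolerates ties, but your version is cleaner on this point.
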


\noindent We note that the above statement is  
qualitatively nearly tight as a function of $d$.

\begin{proof}[Proof of \Cref{lem:constant-mass}]
To prove the lemma, we make essential use of the following 
generalized version of the Erd\H{o}s-Szekeres theorem proved by De Brujin.
\begin{fact}[De Brujin's Generalized Erd\H{o}s-Szekeres Theorem, see  \cite{kruskal1953monotonic}]
\label{thm:erdos}
Let $\psi(n, d)$ denote the least integer $N$ such that every sequence of points 
$x^{(1)}, \cdots, x^{(N)}$ in $\R^d$ contains a monotonic subsequence 
$x^{(i_1)}, \cdots, x^{(i_n)}$ of length $n$ satisfying the following:
for each coordinate $j \in [d]$, we have either that 
$x^{(i_1)}_j \leq \cdots \leq x^{(i_n)}_j$
or that $x^{(i_1)}_j \geq \cdots \geq x^{(i_n)}_j$. 
Then it holds $\psi(n, d) = (n-1)^{2^d} + 1$.
\end{fact}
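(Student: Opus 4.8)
The claim packages two bounds: the \emph{Ramsey upper bound} $\psi(n,d)\le (n-1)^{2^d}+1$ (every long enough sequence contains a long monotone subsequence) and a matching \emph{extremal construction} $\psi(n,d)\ge (n-1)^{2^d}+1$. I would prove both by induction on the dimension $d$, peeling off one coordinate at a time; the engine is the classical one–dimensional Erd\H{o}s--Szekeres theorem, which is exactly the base case $d=1$. Concretely, the inductive skeleton is the identity $\psi(n,d)=\psi(\psi(n,d-1),1)$, together with $\psi(m,1)=(m-1)^2+1$; since $\psi(n,d-1)-1=(n-1)^{2^{d-1}}$ by induction, this gives $\psi(n,d)-1=\bigl((n-1)^{2^{d-1}}\bigr)^2=(n-1)^{2^d}$.

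\textbf{Base case and the upper bound.} For $d=1$ I would run the standard pigeonhole argument: assign to each index $i$ of the sequence the pair $(a_i,b_i)$, where $a_i$ (resp.\ $b_i$) is the length of the longest non-decreasing (resp.\ non-increasing) subsequence ending at the $i$-th point; if no $a_i$ or $b_i$ reaches $n$, then all pairs lie in $[n-1]^2$, so among $(n-1)^2+1$ terms two indices $i<j$ share a pair, and comparing the two entries forces $a_j>a_i$ or $b_j>b_i$, a contradiction. For the inductive step $d-1\to d$, set $M=(n-1)^{2^{d-1}}$ and take any sequence of $(n-1)^{2^d}+1=M^2+1$ points in $\R^d$. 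Since $M^2+1=\psi(M+1,1)$, the one–dimensional theorem applied to the first coordinates yields a subsequence of $M+1=\psi(n,d-1)$ points monotone in coordinate $1$; applying the inductive hypothesis to coordinates $2,\dots,d$ of those points yields a further subsequence of $n$ points monotone in coordinates $2,\dots,d$, which is still monotone in coordinate $1$, hence monotone in all $d$ coordinates. This establishes $\psi(n,d)\le (n-1)^{2^d}+1$.

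\textbf{The lower bound (the main obstacle).} For $d=1$ the extremal sequence is standard: concatenate $n-1$ blocks, each a strictly decreasing run of $n-1$ numbers, with every entry of a block below every entry of the next; a non-decreasing subsequence meets each block at most once and a non-increasing one stays inside a single block, so no monotone subsequence reaches length $n$, giving a valid sequence of length $(n-1)^2$. The real work is the inductive step: from an extremal $(d-1)$-dimensional sequence $S=(s_1,\dots,s_L)$ of length $L=(n-1)^{2^{d-1}}$ and the extremal one–dimensional blocked sequence of length $L^2$ (with $L$ decreasing blocks of size $L$), build a $d$-dimensional sequence of length $L^2=(n-1)^{2^d}$ with no length-$n$ all-coordinates-monotone subsequence — this is De Bruijn's construction. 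The naive ``product'' — first coordinate from the one–dimensional gadget, and a copy of $S$ placed in coordinates $2,\dots,d$ inside each block — \emph{fails}: a subsequence that takes one point from each of many blocks may always choose the points carrying the \emph{same} value $s_1$, producing an arbitrarily long constant (hence monotone) subsequence; and making the $S$-label depend only on the block index makes each block internally constant, with the same effect. The construction must therefore couple the ``outer'' (cross-block) and ``inner'' (within-block) data so that \emph{both} a subsequence that is non-decreasing in coordinate $1$ (which meets each block at most once) and one that is non-increasing in coordinate $1$ (which lives in a single block) are forced to trace out an \emph{increasing-index} subsequence of $S$, and thus have length $<n$ by the $(d-1)$-dimensional extremal property. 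De Bruijn realizes this via a careful sign-and-place-value encoding of the $2^d$ ``digits'' indexing the points into the $d$ coordinates; I would reproduce a streamlined version of that encoding and check the two first-coordinate monotonicity cases separately, each time discharging the coordinate-$1$ constraint to reduce to $S$. I expect this coupling, and the bookkeeping verifying that no monotone subsequence ever exceeds length $n-1$ uniformly over the encoding's sign patterns, to be the crux; the remaining steps are routine.
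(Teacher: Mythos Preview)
The paper does not give its own proof of this statement: it is recorded as a \emph{Fact} with a citation to \cite{kruskal1953monotonic} and used as a black box (only the upper bound for $n=3$ is ever invoked, in Corollary~\ref{cor:xyz} and Lemma~\ref{lem:constant-mass}). So there is nothing in the paper to compare your argument against; you are supplying what the authors outsourced to the literature.

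Your upper-bound argument is correct and standard. For the lower bound you correctly diagnose why the naive block product fails and you name the right organizing identity $\psi(n,d)=\psi(\psi(n,d-1),1)$, but you stop short of actually specifying the coupling, deferring instead to ``De Bruijn's sign-and-place-value encoding.'' If a self-contained proof is the goal, that is the piece to write out. One clean realization: index the $(n-1)^{2^d}$ points by tuples $(a_\sigma)_{\sigma\in\{+1,-1\}^d}\in[n-1]^{2^d}$, list them in lexicographic order with respect to a fixed enumeration $\sigma^{(1)},\dots,\sigma^{(2^d)}$ of the sign vectors, and set coordinate $j$ of the point indexed by $(a_\sigma)_\sigma$ to $\sum_{t=1}^{2^d} \sigma^{(t)}_j\, a_{\sigma^{(t)}}\, n^{\,2^d-t}$. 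For any two points $p\prec p'$ in this sequence, if $t^\ast$ is the first index at which their tuples differ then $a'_{\sigma^{(t^\ast)}}>a_{\sigma^{(t^\ast)}}$, and the base-$n$ weighting forces $\sign(x'_j-x_j)=\sigma^{(t^\ast)}_j$ for every $j$. Hence in any subsequence that is monotone with sign pattern $\eps\in\{+1,-1\}^d$, every consecutive pair has its first differing index equal to the position of $\eps$ in the enumeration, so $a_{\eps}$ strictly increases along the subsequence and its length is at most $n-1$. With this (or an equivalent) construction spelled out, your proof is complete.
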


\noindent As an immediate corollary, we obtain the following: 
\begin{corollary} \label[corollary]{cor:xyz}
Let $S \subset \R^d$ be a set of points with size $|S| \geq 2^{2^{d-1}}+1$. 
Then there exists a triple $x,y,z \in S$ such that $z \in R_{x,y}$.
Furthermore, there exists a set of points $S$ of size $|S| = 2^{2^{d-1}}$ such that there is no triple $x,y,z \in S$ satisfying $z \in R_{x,y}$.
\end{corollary}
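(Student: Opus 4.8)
The plan is to deduce \Cref{cor:xyz} directly from \Cref{thm:erdos} (De Bruijn's generalized Erd\H{o}s--Szekeres theorem) by applying it with the right parameters and then re-interpreting a monotonic triple as a containment relation. First, I would set $n = 3$ in \Cref{thm:erdos}, so that $\psi(3,d) = 2^{2^d} + 1$. Wait --- that gives $2^{2^d}+1$, not $2^{2^{d-1}}+1$, so I need to be a bit more careful: the correct move is to observe that a monotonic subsequence is determined by a choice of direction (increasing/decreasing) in each of the $d$ coordinates, but $R_{x,y}$ only cares about the coordinate-wise interval spanned by $x$ and $y$, which is insensitive to simultaneously reversing the order in a coordinate. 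So I would apply \Cref{thm:erdos} to the modified problem where ``monotonic'' only needs to hold up to a global reversal, effectively halving the exponent of $2$ in one place; concretely, I would argue that among any $2^{2^{d-1}}+1$ points one can find $3$ of them, say $x,y,z$ in the order they appear in the sequence, that are coordinate-wise monotonic, and hence the middle one $z$ satisfies $\min(x_i,y_i) \le z_i \le \max(x_i,y_i)$ for every $i$, i.e. $z \in R_{x,y}$.

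More carefully, the cleanest route is: take the $|S| \ge 2^{2^{d-1}}+1$ points in an arbitrary order to form a sequence, look at the first coordinate, and use a one-dimensional Erd\H{o}s--Szekeres / Dilworth argument recursively. Actually the slickest approach is to reduce the exponent directly: I would fix an arbitrary ordering of $S$ as a sequence, and for each pair of consecutive-in-value behaviors, reason that a sequence of length $2^{2^{d-1}}+1$ contains a subsequence of length $3$ that is monotonic in the first $d-1$ coordinates \emph{and} such that we don't even need to control the $d$-th coordinate's direction --- here I'd want to invoke the precise statement $\psi(n,d)=(n-1)^{2^d}+1$ with $n=3$, but account for the fact that $R_{x,y}$ is symmetric under reversal by passing to a set of $2^{2^{d}} $ candidates and then... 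Let me restate the plan more honestly: the first part of the corollary is \emph{exactly} the content of \Cref{thm:erdos} once one checks that for $n=3$ the relevant threshold, after accounting for reversal-invariance of the rectangle, is $2^{2^{d-1}}+1$; I would verify this arithmetic and spell out why a monotonic triple $x^{(i_1)}, x^{(i_2)}, x^{(i_3)}$ forces $x^{(i_2)} \in R_{x^{(i_1)}, x^{(i_3)}}$.

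For the furthermore (tightness) part, I would exhibit an explicit set $S$ of size $2^{2^{d-1}}$ with no such triple, again reading it off from the extremal construction underlying \Cref{thm:erdos}: De Bruijn's theorem is tight, so there is a sequence of $(n-1)^{2^d} = 2^{2^d}$ points with no monotonic subsequence of length $3$; after quotienting by the reversal symmetry (which at most squares the count the other way), I would extract a sub-configuration of size $2^{2^{d-1}}$ that avoids all containment triples. Concretely, for $d=1$ this is just two points (no point lies strictly between... well, one always does, so one needs the right convention on open vs.\ closed rectangles); I would match the convention used in the definition of $R_{x,y}$ (closed rectangle) and build the extremal example as a suitable ``staircase of staircases'' --- the $d$-fold iterated construction where at the top level one uses a decreasing-then-increasing pattern, recursing into each block.

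The main obstacle I anticipate is getting the exponent bookkeeping exactly right: \Cref{thm:erdos} as stated gives the threshold $2^{2^d}+1$ for a triple that is monotonic with a \emph{prescribed} direction pattern, whereas $R_{x,y}$ is invariant under flipping the direction in any coordinate, which should buy a factor that brings the exponent down from $2^d$ to $2^{d-1}$ (the $2^d$ possible direction patterns collapse to $2^{d-1}$ equivalence classes under global reversal, and one extra factor comes from whether we read the triple forwards or backwards). Pinning down precisely why this collapse yields $2^{2^{d-1}}+1$ rather than, say, $2^{2^d - 1}+1$, and making sure the extremal construction for the ``furthermore'' part achieves exactly $2^{2^{d-1}}$, is the delicate part; everything else is a direct translation between the language of monotonic subsequences and the language of rectangle containment.
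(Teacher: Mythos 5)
Your proposal identifies the right target ($2^{2^{d-1}}+1$, not $2^{2^d}+1$) but the mechanism you lean on — quotienting the $2^d$ sign patterns by global reversal to get $2^{d-1}$ ``effective'' colors — does not actually produce a containment triple. Run the Ramsey argument with the reversal-quotiented coloring and you get a triple $i_1<i_2<i_3$ in sequence order for which both consecutive difference vectors $\sgn(x^{(i_2)}-x^{(i_1)})$ and $\sgn(x^{(i_3)}-x^{(i_2)})$ lie in some class $\{v,-v\}$. When the two signs agree you do get $x^{(i_2)}\in R_{x^{(i_1)},x^{(i_3)}}$, but when they are opposite you do not, and no relabeling rescues you: take $d=2$ with $x^{(i_1)}=(0,2)$, $x^{(i_2)}=(3,3)$, $x^{(i_3)}=(2,0)$ — the consecutive sign vectors are $(+,+)$ and $(-,-)$, hence identified under reversal, yet none of the three points lies in the rectangle spanned by the other two. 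So the collapse from $2^d$ to $2^{d-1}$ colors, while giving the correct arithmetic, is not a valid reduction, and you correctly flag that you are unsure why it would yield $2^{2^{d-1}}+1$ — it doesn't.

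The actual reduction is the one you brush past in your ``more carefully, the cleanest route'' aside, but carried out with the crucial twist you omit: \emph{sort} the points of $S$ by their first coordinate before forming the sequence, rather than taking an arbitrary ordering. Once sorted, every subsequence is automatically non-decreasing in the first coordinate, so you may drop that coordinate and apply \Cref{thm:erdos} directly in $\R^{d-1}$ with $n=3$, which gives exactly the threshold $\psi(3,d-1)=2^{2^{d-1}}+1$. The middle element of the resulting monotonic triple is then sandwiched in every coordinate, giving $z\in R_{x,y}$. Your phrasing ``we don't even need to control the $d$-th coordinate'' is the opposite of the truth — you very much need to control it, and sorting is how you get it for free. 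For the tightness direction, the construction is similarly concrete: take an extremal De Bruijn sequence of length $2^{2^{d-1}}$ in $\R^{d-1}$ with no monotonic triple, and lift it to $\R^d$ by prepending the sequence index as the first coordinate; any putative containment triple $z\in R_{x,y}$ would force $z$ to be the middle element in sequence order (by the first coordinate) and hence to project to a monotonic triple in $\R^{d-1}$, a contradiction. Your discussion of the $d=1$ base case is also off: $2^{2^0}=2$, so the extremal set has two points and there is simply no triple to worry about; there is no convention issue about open versus closed rectangles.
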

\begin{proof}
Let $S$ be an arbitrary set of $m$ points in $\R^d$ .
Let $x^{(1)}, \cdots, x^{(m)}$ be a sequence of points in $\R^{d-1}$ 
obtained by (1) sorting the points in $S$ based on their first coordinates, 
and (2) throwing away their first coordinates.
Applying \Cref{thm:erdos} with $n = 3$ gives us that we will have 
a sub-sequence
$x^{(i_1)}, x^{(i_2)}, x^{(i_3)}$ such that the points 
are either monotonically increasing or monotonically decreasing 
in each of the $(d-1)$ coordinates \emph{if and only if $m \geq 2^{2^{d-1}}+1$}.
Furthermore, by our construction of the sequence of $x^{(i)}$'s, 
the first coordinates of the corresponding points in $S$ 
are always monotonically increasing.
This concludes the proof.
\end{proof}

It is worth noting that for a set of points $ S \subset \R^d$ 
to contain a triple $x,y,z$ such that $z \in R_{x,y}$, 
the size of $S$ needs to be doubly exponential in $d$; 
this bound is 
tight 
since the corollary 
is essentially equivalent to \Cref{thm:erdos}, 
which is itself quantitatively tight.
Now let $S$ be the set of $2^{2^{d-1}}$ points such that there 
is no triple $x,y,z \in S$ satisfying $z \in R_{x,y}$ 
(\Cref{cor:xyz} ensures the existence of such a set of points).
Let $D$ be the uniform distribution over $S$.
One can see that $D(R_{x,y}) \leq  2 / 2^{2^{d-1}}$ for any $x,y \in S$.
It hence follows that
$
\E_{x,y \sim D} \lp[ D(R_{x,y})  \rp]
\leq 2 / 2^{2^{d-1}} \;,
$
showing that our lower bound is qualitatively tight.

To relate \Cref{lem:constant-mass} to the Generalized Erd\H{o}s-Szekeres 
theorem (\Cref{thm:erdos}), we make the following observations: 
(i) the probability mass of $R_{x,y}$ under $D$ is equal to 
the probability that a third random point $z$ drawn from $D$ 
happens to land in $R_{x,y}$, and 
(ii) drawing three random samples from $D$ is equivalent to 
first drawing $N$ random samples from $D$ and then choosing 
$3$ \emph{distinct} points from these $N$ points uniformly at random.

Let $D_N$ be the empirical distribution obtained after drawing 
$N$  \iid samples from $D$.
The observations above allow us to conclude that
\[
\underset{x,y\sim D}{\E}[D(R_{x,y})] = \underset{x,y,z\sim D}{\Pr}[z \in R_{x,y}] \geq
\underset{ D_N  }\E \lp[
\underset{\underset{\text{without replacement}}{x,y,z\sim D_N}}{\Pr}[z \in R_{x,y}] ] \rp] \;.
\]
If we have $N \geq 2^{2^{d-1}}+1$, \Cref{cor:xyz} guarantees the existence 
of a triple $x,y,z \in D_N$ such that $z \in R_{x,y}$.
Hence, the probability in the last equation above is at least $1/N^3$. 
This completes the proof of \Cref{lem:constant-mass}.
\end{proof}

\paragraph{General Case.} 
We are now ready to handle the general case and complete 
the proof of \Cref{lem:rectangle-discrepancy}.
To do so, we leverage 
the concept of the \emph{discrepancy density} defined below.

\begin{definition}[Discrepancy Density] \label{def:density}
Let $\p,\q$ be distributions over $\R^d$. 
For a set $S \subseteq \R^d$, we define the discrepancy density of $S$ 
with respect to $\p,\q$ as follows: 
\begin{align*}
    \rho(S;\p, \q) \eqdef 2 \; \abs{ \p(S) - \q(S) } / (\p(S) + \q(S) ) \;.
\end{align*}
\end{definition}

The high-level intuition is the following. 
Let $R_{x,y}$ be the axis-aligned rectangle 
defined by $x,y$, where $x, y$ are independent random samples 
drawn from $(1/2) (\p + \q)_{|R}$. 
By \Cref{lem:constant-mass}, the probability mass of $R_{x,y}$ 
(with respect to the mixture distribution $(1/2)(\p + \q)$)
is a non-trivial fraction of the mass of $R$ in expectation.
If the discrepancy between $\p, \q$ within $R_{x,y}$ is a non-trivial 
fraction of the mass of $R_{x,y}$, we are done. 
Otherwise, if we were to ``remove'' the region $R_{x,y}$ from $R$, 
we would discard about approximately equal amounts of $\p$ mass 
and $\q$ mass. Therefore, the discrepancy density of the remaining space, 
$\rho(R \backslash R_{x,y};\p, \q)$, must have increased. 
We can then carve the remaining space into at most $2d$ many 
axis-aligned sub-rectangles and pick a sub-rectangle 
with significantly higher discrepancy density to restart the process.
When the discrepancy density approaches one, the situation 
qualitatively resembles the special case where $\q(R) = 0$ 
(or $\p(R) = 0$); and if the mass of $R_{x,y}$ is non-trivial, 
the discrepancy captured will also be non-trivial.
The formal proof follows.


\begin{proof}[{\bf \em Proof of \Cref{lem:rectangle-discrepancy}}]
For notational convenience, we will denote $D := (1/2) \lp( \p + \q \rp)$.
Let $x,y$ be two sample points drawn from
$D_{|R}$, the restriction of $D$ to $R$. 
Then, by \Cref{lem:constant-mass}, it holds
$$
\E_{ x,y \sim D_{|R} } \lp[  D ( R_{x,y} ) \rp] 
\geq \beta_d \; D(R) \;,
$$
for some $\beta_d$ depending only on $d$.
We will use $E$ to denote the event 
$\lp\{ D ( R_{x,y} ) \geq \beta_d \; D(R)  / 2 \rp\}$.
Then we must have that 
$\underset{x,y\sim  D_{|R}} \Pr \lp[ E \rp] \geq \beta_d / 2$, 
since otherwise $\E_{ x,y \sim D_{|R} } \lp[  D ( R_{x,y} ) \rp]$ 
will be no more than $\beta_d \; D(R)$.

We consider two complementary cases.
First, if
\begin{align} \label{eq:good-case}
\underset{x,y \sim D_{|R} } \E \lp[ \abs{ \p (R_{x,y}) - \q(R_{x,y}) }   \big| E\rp] > \frac{\eps}{2}  \; 
\underset{x,y \sim D_{|R} }
\E \lp[  D ( R_{x,y} ) \big| E \rp] \;,     
\end{align}
we will have
\begin{align*}
\underset{x,y \sim D_{|R} } \E \lp[ \abs{ \p_{|R} (R_{x,y}) - \q_{|R}(R_{x,y}) } \rp]
&\geq 
\underset{x,y \sim D_{|R} } \E \lp[ \abs{ \p_{|R} (R_{x,y}) - \q_{|R}(R_{x,y}) } \big | E \rp] \; \underset{x,y \sim D_{|R} } \Pr[ E ] \\
& \geq \eps/2 \; \E_{ x,y \sim D_{|R} } \lp[  D_{|R} ( R_{x,y} ) \big| E \rp] \; \underset{x,y \sim D_{|R} } \Pr[ E ] \\
& \geq \eps \; \beta_d^2 \; D(R) / 8
\end{align*}
and we are done.

Otherwise, it holds
\begin{align}
\label{eq:other-case}
\underset{x,y \sim D_{|R} } \E
\lp[ \eps/2 \; D \lp( R_{x,y}\rp) - \abs{ \p\lp( R_{x,y} \rp) -\q \lp( R_{x,y} \rp) } \big| E \rp] \geq 0  \;.  
\end{align}
Since we also condition on the event $E$, we know that there exists a 
rectangle $\tilde R \subset  R$  such that
\[
(\eps/2) \; D ( \tilde R ) - \abs{ \p ( \tilde R ) -\q ( \tilde R ) } \geq 0 \;, \,
D(\tilde R) \geq \beta_d D(R)/2 \;.
\]
We consider the remaining space $R \backslash \tilde R$. 
We have that its discrepancy density satisfies
$$ \abs{ \p(R \backslash \tilde R) - \q(R \backslash \tilde R) } / D(R \backslash \tilde R) \geq 
\frac{\eps \; D(R) - \eps/2 \; D(\tilde R) }
{ D(R) - D(\tilde R) }
= \eps \; (1 + \gamma) \, ,
$$
where we denote 
$\gamma =  \frac{ D(R) - D(\tilde R) / 2   }{  D(R) - D(\tilde R) } - 1$ 
for convenience. 
{
Notice that
\begin{align*}
\gamma = \frac{ D(R) - D(\tilde R) / 2   }{  D(R) - D(\tilde R) } - 1
= 
\frac{  D(\tilde R) / 2   }{  D(R) - D(\tilde R) }
\geq 
\frac{  (\beta_d /4) \; D(R)   }{  D(R)  } = \beta_d/4 \;,
\end{align*}
where in the inequality above we bound below 
$D(\tilde R)$ by $\beta_d D(R)/2$ by our choice of $\tilde R$. 
This then gives that
$\gamma \geq \beta_d/4$.}

It turns out that remaining space $R \backslash \tilde R$, 
can be carved into $2d$ many axis-aligned rectangles.
An illustration of the $d=2$ case is given in \Cref{fig:2d-carve}. 
\begin{figure}
    \centering
    \includegraphics[width=20em]{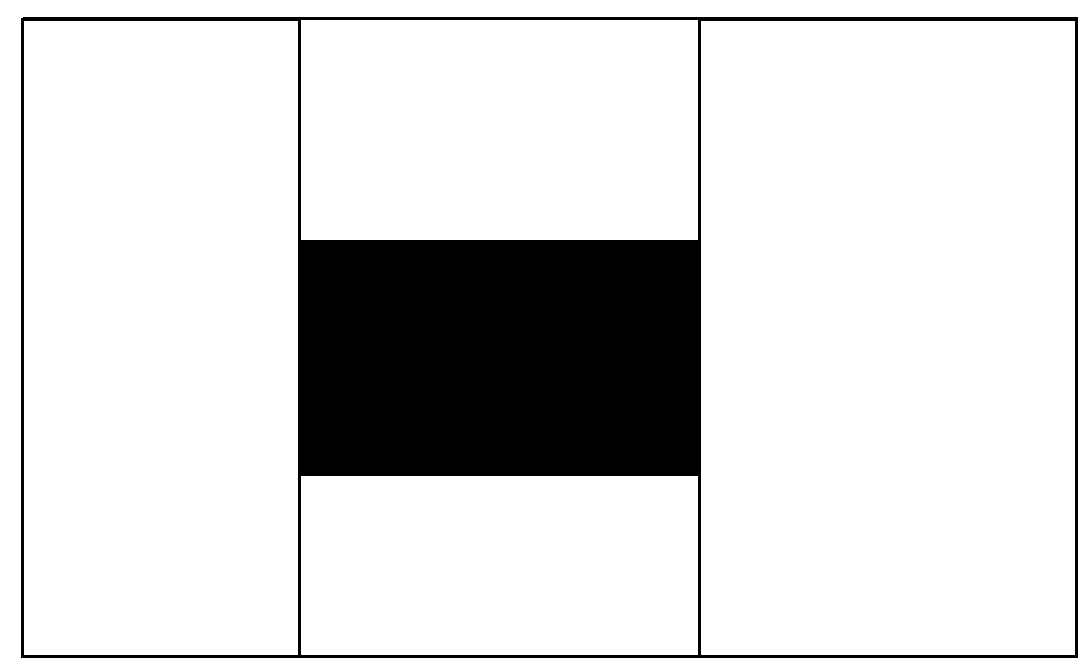}
    \caption{{The black rectangle represents $\tilde R \subset R$ in $\R^2$. One can see that there is a natural way to carve the remaining space $R \backslash \tilde R$ into four axis-aligned rectangles.}}
    \label{fig:2d-carve}
\end{figure}

Specifically, we show the following:
\begin{claim}
\label{clm:carve-complement}
Let $\tilde R \subseteq R$ be an axis-aligned rectangle. 
The set $R \backslash \tilde R$ can be decomposed into 
$2d$ axis-aligned rectangles $R_1, \cdots, R_{2d}$.
\end{claim}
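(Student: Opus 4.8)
The plan is to prove Claim~\ref{clm:carve-complement} by a direct geometric construction, writing $R \setminus \tilde R$ as a disjoint union of $2d$ axis-aligned rectangles obtained by ``peeling off'' one coordinate slab at a time. Write $R = \prod_{i=1}^d [a_i, b_i]$ and $\tilde R = \prod_{i=1}^d [c_i, d_i]$ with $a_i \le c_i \le d_i \le b_i$ for each $i \in [d]$ (this containment is exactly the hypothesis $\tilde R \subseteq R$). The idea is to process coordinates $1, 2, \ldots, d$ in order: at step $i$, we carve off the part of the remaining region where the $i$-th coordinate lies outside $[c_i, d_i]$, which splits into a ``low'' piece (coordinate $i$ in $[a_i, c_i]$) and a ``high'' piece (coordinate $i$ in $[d_i, b_i]$), giving two rectangles per coordinate and $2d$ in total.

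Concretely, I would define for each $i \in [d]$ the two rectangles
\[
R_i^{\mathrm{low}} \eqdef [c_1,d_1] \times \cdots \times [c_{i-1}, d_{i-1}] \times [a_i, c_i] \times [a_{i+1}, b_{i+1}] \times \cdots \times [a_d, b_d]
\]
and
\[
R_i^{\mathrm{high}} \eqdef [c_1,d_1] \times \cdots \times [c_{i-1}, d_{i-1}] \times [d_i, b_i] \times [a_{i+1}, b_{i+1}] \times \cdots \times [a_d, b_d] \,,
\]
so that in $R_i^{\mathrm{low}}$ and $R_i^{\mathrm{high}}$ coordinates $1,\ldots,i-1$ are constrained to the ``inner'' interval $[c_j, d_j]$, coordinate $i$ is in one of the two outer slabs, and coordinates $i+1, \ldots, d$ range over the full interval of $R$. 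Each of these is clearly an axis-aligned rectangle (a product of $d$ intervals), and it is contained in $R$ since each factor interval is contained in the corresponding factor of $R$.

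The two things to check are (a) disjointness and (b) that the union is exactly $R \setminus \tilde R$. For disjointness: two rectangles $R_i^{\bullet}$ and $R_j^{\bullet}$ with $i < j$ differ in their $i$-th factor, since $R_i^{\bullet}$ has $i$-th factor $[a_i,c_i]$ or $[d_i,b_i]$ (disjoint interiors from $[c_i,d_i]$) while $R_j^{\bullet}$ has $i$-th factor $[c_i, d_i]$; and $R_i^{\mathrm{low}}, R_i^{\mathrm{high}}$ are disjoint because their $i$-th factors $[a_i, c_i]$ and $[d_i, b_i]$ meet in at most the measure-zero set where the coordinate equals $c_i = d_i$ (one can break ties by using half-open intervals, which does not affect any of the probability-mass arguments that invoke this claim). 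For the union: a point $x \in R$ fails to lie in $\tilde R$ iff there is some coordinate $i$ with $x_i \notin [c_i, d_i]$; taking the \emph{smallest} such $i$, we have $x_j \in [c_j, d_j]$ for all $j < i$ and $x_i \in [a_i, c_i] \cup [d_i, b_i]$ and $x_j \in [a_j, b_j]$ trivially for $j > i$, which places $x$ in exactly $R_i^{\mathrm{low}}$ or $R_i^{\mathrm{high}}$. Conversely every point of each $R_i^{\bullet}$ lies in $R$ and has $i$-th coordinate outside $[c_i,d_i]$, hence outside $\tilde R$. This establishes $R \setminus \tilde R = \bigsqcup_{i=1}^d (R_i^{\mathrm{low}} \sqcup R_i^{\mathrm{high}})$, a disjoint union of $2d$ axis-aligned rectangles, as claimed.

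I do not anticipate a genuine obstacle here — the statement is a standard ``shelling'' of a box minus a sub-box — so the main care is purely bookkeeping: getting the nesting of inner versus outer interval constraints right across the coordinate ordering, and handling the shared-boundary issue cleanly (either by declaring the decomposition to be up to measure zero, which is all that the downstream discrepancy-density argument needs, or by a consistent half-open convention). If one wanted the cleanest writeup, I would state the half-open convention once at the start of the claim's proof and then the disjointness is literally exact.
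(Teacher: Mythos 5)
Your proof is correct, and the decomposition you construct is in fact the same one the paper produces — you just arrive at it directly rather than by induction. The paper proceeds by induction on $d$: at each step it peels off the two slabs of $R$ where the first remaining coordinate lies outside the corresponding interval of $\tilde R$, restricts the first coordinate to the inner interval $[\tilde x_1, \tilde y_1]$, projects, and recurses. If you unroll that induction, the rectangle produced at step $i$ has precisely the form of your $R_i^{\mathrm{low}}$/$R_i^{\mathrm{high}}$: coordinates $1,\ldots,i-1$ pinned to the inner intervals, coordinate $i$ in one of the two outer slabs, coordinates $i+1,\ldots,d$ free. So you haven't found a different decomposition, only a closed-form description of the same one, together with a direct verification of disjointness and coverage (which the inductive proof gets implicitly). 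Your explicit write-up arguably buys a bit more transparency — the boundary/measure-zero subtlety you flag (and resolve via a half-open convention) is also present in the paper's version but is not called out there; since the claim is only invoked to split probability masses over rectangles, measure-zero overlaps are harmless in either treatment.
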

\begin{proof}
The proof proceeds via induction.
The base case ($d=2$) is clear, as shown in \Cref{fig:2d-carve}.
Assume that the statement holds for $d = k$.
We proceed to show that it still holds for $d = k+1$.
Suppose that $R$ is defined by points
$x,y \in \R^{k+1}$ and $\tilde R$ 
is defined by points $\tilde x, \tilde y \in \R^{k+1}$.
We let $R_{2k+1}$ be the rectangle that occupies 
the interval $[x_1, \tilde x_1]$ in the first dimension 
and occupies the same intervals as $R$ in the other dimensions;  
similarly, let $R_{2k+2}$ be the rectangle that occupies the interval 
$[\tilde y_1, y_1]$ in the first dimension and occupies the same 
intervals as $R$ in the other dimensions.
Then the remaining space $R \backslash (R_1 \cup R_2)$ 
lies entirely in the interval $[\tilde x_1, \tilde y_1]$ 
in the first dimension.
We can then discard the first dimension. 
We denote the projection of $R \backslash (R_1 \cup R_2)$ 
into the remaining subspace $\R^k$ as $R'$, 
and the projection of $\tilde R$ as $\tilde R'$.
We can then apply our inductive hypothesis on $R'$ and $\tilde R'$ 
to obtain $2k$ rectangles $R_1', \cdots, R_{2k}'$ 
that live in the subspace of $\R^k$.
Then let $R_i$ to be the product of $R_1'$ 
and the interval $[\tilde x_1, \tilde y_1]$ for $i = [2k]$. 
It is easy to verify that $R_1, \cdots R_{2k+2}$ 
partitions the space $R \backslash \tilde R$.
\end{proof}

We denote the rectangles obtained by applying the above claim to 
$R \backslash \tilde R$ as $R_1, \cdots, R_{2d}$. 
Furthermore, we will denote 
$\gamma_i = \frac{\abs{ \p( R_i ) - \q(R_i) }}{ D(R_i) \; \eps } - 1$.
In other words, 
$\abs{ \p(R_i) - \q(R_i) } / D(R_i) = (1 + \gamma_i) \; \eps_i$.

We claim that there exists a rectangle $R_{i^*}$ in the remaining space such that
\begin{align} \label{eq:i-condition}
D(R_{i^*}) \geq  \frac{ \gamma \; \lp(    D(R) - D(\tilde R) \rp)}{ 2d \; \gamma_{i^*} } \;, 
\gamma_{i^*}
\geq  \frac{ \gamma }{2d}.    
\end{align}
By definition of $D(R_{i})$ and $\gamma_i$, we have
\begin{align}
    \sum_{i=1}^{2d} D(R_{i}) &=  D(R) - D(\tilde R) \;, \label{eq:total-mass} \\
    \sum_{i=1}^{2d} D(R_{i}) \; \eps \; ( 1 + \gamma_i ) &\geq  \lp( D(R) - D(\tilde R) \rp) \; \eps \; (1 + \gamma) \;, \label{eq:discrepancy-sum}
\end{align}
where the first equality follows from the fact
that the rectangles form a partition of the remaining space, 
and the second equality follows from the fact 
that the sum of discrepancies in each rectangle must be 
at least the total discrepancy in the remaining space.

Substituting \Cref{eq:total-mass} into 
\Cref{eq:discrepancy-sum} and simplifying the result gives
$
\sum_{i=1}^{2d} D(R_{i}) \gamma_i \geq 
\lp( D(R) - D(\tilde R) \rp) 
\; \gamma \;.
$
Therefore, there exists $i^*$ such that 
$$ 
D(R_{i^*}) \gamma_{i^*} \geq 
\lp( D(R) - D(\tilde R) \rp)  \gamma / (2d) \;.
$$
{Since $R_{i^*} \subseteq R \backslash \tilde R$}, 
we must have that
$D(R_{i^*}) \leq  D(R) - D(\tilde R) $, 
which implies that
$ \gamma_{i^*} \geq   \gamma / 2d$.
On the other hand, we also have that
$$
D(R_{i^*}) \geq 
\frac{\gamma}{2d}
 \; 
 \frac{ D(R) - D(\tilde R) }{ \gamma_{i^*} } \;.
$$
This then establishes the existence of an $i^*$ such that 
\Cref{eq:i-condition} is satisfied.

We can inductively restart the process with $R' = R_{i^*}$
and $\eps' = \eps (1 + \lambda_{i^*})$.
In each iteration, 
the discrepancy density must increase 
by at least a multiplicative factor of 
$ (1 + \gamma / 2d) \geq 
(1 + C \; \beta_d/2d)$, for some universal constant $C>0$.
Since the discrepancy density is at most one, 
the process must terminate in 
$O \lp(   d \; \beta_d^{-1} \; \log(1/\eps ) \rp)$ many iterations,
and we will eventually find some rectangle $R^*$ such that 
\Cref{eq:good-case} is satisfied.

It remains to show that the mass $D(R^*)$ is {bounded below}.
Suppose that in the $t$-th iteration, we start from the rectangle 
$R^{(t)}$ with discrepancy density 
$\eps^{(t)} := \frac{ \abs{\p(R^{(t)}) - \q(R^{(t)})} }{ D(R^{(t)}) }$ and end with the rectangle $R^{(t+1)}:=R^{(t)}_{i^*}$ 
with discrepancy density
$\eps^{(t+1)} := \eps^{(t)} \; \lp(  1 + \gamma_{i^*}^{(t)} \rp)$.
Denote by $\tilde R^{(t)}$ the rectangle discarded 
and $\eps^{(t+1/2)} := \eps^{(t)} \; \lp(  1 + \gamma^{(t)} \rp) $ 
the discrepancy density of the remaining space 
$R^{(t)} \backslash \tilde R^{(t)}$.
We analyze how much the mass of the rectangle 
can shrink in each iteration, as follows: 
\begin{align*}
\frac{ D(R^{(t)}_{i^*}) }{ D(R^{(t)}) }
&\geq 
\frac{ \gamma^{(t)} }{ 2d } \; \lp( 1 - D(\tilde R^{(t)} ) /  D(R^{(t)}) \rp) \; \frac{1}{ \gamma_{i^*}^{(t)} }  \\
&\geq
\frac{ \gamma^{(t)} }{ 2d } \; 
\lp( \frac{ 1 -  D(\tilde R^{(t)} ) /  D(R^{(t)})  }
{1 - { D(\tilde R^{(t)} ) / \lp( 2D(R^{(t)}) \rp) }} \rp)^2
\; \frac{1}{ \gamma_{i^*}^{(t)} } \\
& =
\frac{  \gamma^{(t)} }{ 2d } \; 
\lp( \frac{1}{1 +  \gamma^{(t)}} \rp)^2
\; \frac{1}{ \gamma_{i^*}^{(t)} }  \\
&\geq
\Omega(1) \; \frac{ \beta_d }{ d^3 } \; \lp( \frac{1}{1 + \gamma_{i^*}^{(t)}} \rp)^3
\, ,
\end{align*}
where the first line uses our choice of $R^{(t)}_{i^*}$ such that 
\Cref{eq:i-condition} is satisfied, 
the second line uses the elementary inequality
$ (1-x) \geq (1-x)^2/(1-x/2)^2 $ for any $x \leq 1$, 
the third line uses the definition of $\gamma^{(t)}$, 
and the last line uses the facts 
$\gamma_{i^*}^{(t)} \geq \gamma^{(t)}/2d$ by our choice of 
$R^{(t)}_{i^*}$ such that \Cref{eq:i-condition} is satisfied
and $\gamma^{(t)} \geq \Omega( \beta_d )$ 
by our choice of $\tilde R^{(t)}$.

Notice that the discrepancy density increases 
by a multiplicative factor of $1 + \gamma_{i^*}^{(t)}$ 
in the $t$-th iteration.
Thus, we have
\begin{align} \label{eq:density-bound}
\eps \; \prod_{t} \lp( 1 + \gamma_{i^*}^{(t)} \rp) \leq 1 \;.
\end{align}
Therefore, $D(R^*)$ is at least
$$
\prod_t C \; \frac{ \beta_d }{ d^3 } \; \lp( \frac{1}{1 + \gamma_{i^*}^{(t)}} \rp)^3
\geq 
\lp( C \; \frac{ \beta_d }{ d^3 } \rp)^{ O \lp(   d \; \beta_d^{-1} \; \log(1/\eps ) \rp) }
\; \eps^3 
\geq 
\eps^{ \tilde O \lp( d \beta_d^{-1} \rp) } \;,
$$
where we used the fact that the process terminates 
in at most $O \lp(   d \; \beta_d^{-1} \;  \log(1/\eps ) \rp)$ 
iterations and \Cref{eq:density-bound}.
This then shows that there exists a rectangle $R^* \subseteq R$ 
such that $D(R^*) \geq \eps^{\alpha_d}$ for some 
$\alpha_d = \tilde O \lp( d \beta_d^{-1} \rp)$ 
and \Cref{eq:good-case} is satisfied.
Then it holds
\begin{align*}
\underset{x,y \sim D_{|R^*} } \E \lp[ \abs{ \p (\tilde R) - \q(\tilde R) } \rp]
&\geq 
\eps/2 \; 
\frac{1}{2^{2^d}} \; D(R^*)^2
\geq 
\eps/2 \; 
\frac{1}{2^{2^d}} \; \eps^{2\alpha_d}
\; D(R)
\geq \eps^{\alpha_d'} \; D(R) \;,
\end{align*}
{where $\alpha_d' = \tilde O \lp( d \beta_d^{-1} \rp) \leq C \; d^2 \; 2^{2^{d+1}}$, for some sufficiently large universal constant $C$.}
This concludes the proof of \Cref{lem:rectangle-discrepancy}.
\end{proof}

\subsection{Proof of \Cref{lem:small-support-l2}}
\label{sec:light-bin-l2-test}
Given a discrete distribution $\p$, 
flattening~\cite{DiakonikolasK16} is the technique of using a small 
set of samples from $\p$ to appropriately subdivide its bins (domain elements) 
aiming to reduce the $\ell_2$-norm of the distribution. 
Formally, the flattening technique yields what was described 
in \cite{DiakonikolasK16} as a \emph{split distribution}.
\begin{definition}[Definition 2.4 from \cite{DiakonikolasK16}]
Given a distribution $\p$ on $[n]$ and a multiset $S$ of elements of $[n]$, 
define the \emph{split distribution} $p_S$ on $[n + |S|]$ as follows: 
For $1 \leq i \leq n$, let $a_i$ denote $1$ plus the number of elements of $S$ 
that are equal to $i$. Thus, $\sum_{i=1}^n a_i = n + |S|$. 
We can therefore associate the elements of $[n + |S|]$ to elements 
of the set $B = \{ (i,j) : i \in [n], 1 \leq j \leq a_i \}$. 
We now define a distribution $\p_S$ with support $B$, 
by letting a random sample from $\p_S$ be given by $(i,j)$, 
where $i$ is drawn randomly from $p$ and $j$ is drawn randomly from $[a_i]$.
\end{definition}

We will use the following basic facts 
about split distributions.

\begin{fact}
[Fact 2.5 and Lemma 2.6 from \cite{DiakonikolasK16}]
\label{lem:flatten}
Let $\p$ and $\q$ be probability distributions on $[n]$, and $S$ a given multiset of $[n]$. Then:
(i) We can simulate a sample from the split distributions
$\p_S$ or $\q_S$ by taking a single sample from $\p$ or $\q$, respectively. 
(ii) It holds $\snorm{1}{\p_S - \q_S} = \snorm{1}{\p - \q}$. 
(iii) For any multisets $S \subseteq S' \subseteq [n]$,
$\snorm{2}{\p_{S'}} \leq \snorm{2}{\p_{S}}$. 
(iv) If $S$ is obtained by drawing $\Poi(m)$ samples from $\p$, 
then $\E \lp[ \snorm{2}{\p_S}^2 \rp] \leq 1/m$.
\end{fact}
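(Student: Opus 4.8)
The plan is to work throughout from the explicit form of the split distribution. Writing $a_i$ for $1$ plus the multiplicity of element $i$ in the multiset $S$, a sample from $\p_S$ is the pair $(i,j)$ with $i \sim \p$ and $j$ uniform on $[a_i]$, so that $\p_S((i,j)) = \p(i)/a_i$ for every $(i,j)$ in the support $B = \{(i,j) : i \in [n],\ 1 \le j \le a_i\}$; likewise $\q_S((i,j)) = \q(i)/a_i$, crucially with the \emph{same} denominators, since $a_i$ depends only on $S$. Given this formula, parts (i)--(iii) are bookkeeping and part (iv) reduces to a single short Poisson moment computation.

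For (i), I would observe that to produce a sample $(i,j) \sim \p_S$ one draws $i \sim \p$, using exactly one sample from $\p$, and then draws $j$ uniformly from $[a_i]$, which uses only the known multiset $S$ and no further samples; the same works for $\q_S$. For (ii), splitting each bin $i$ into its $a_i$ copies and summing gives
\[
\snorm{1}{\p_S - \q_S} = \sum_{i=1}^n \sum_{j=1}^{a_i} \lp| \frac{\p(i) - \q(i)}{a_i} \rp| = \sum_{i=1}^n a_i \cdot \frac{\abs{\p(i) - \q(i)}}{a_i} = \snorm{1}{\p - \q}.
\]
For (iii), the analogous accounting gives $\snorm{2}{\p_S}^2 = \sum_{i=1}^n a_i \,(\p(i)/a_i)^2 = \sum_{i=1}^n \p(i)^2/a_i$; since $S \subseteq S'$ forces $a_i(S) \le a_i(S')$ for every $i$, comparing the two sums termwise yields $\snorm{2}{\p_{S'}}^2 \le \snorm{2}{\p_{S}}^2$.

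For (iv), I would invoke Poissonization: when $S$ consists of $\Poi(m)$ \iid samples from $\p$, the multiplicities $\{N_i\}_{i \in [n]}$ of the elements in $S$ are independent with $N_i \sim \Poi(m\p(i))$, so $a_i = 1 + N_i$ and, by the formula from (iii),
\[
\E\lp[ \snorm{2}{\p_S}^2 \rp] = \sum_{i=1}^n \p(i)^2 \ \E\lp[ \frac{1}{1+N_i} \rp].
\]
The one step with any content is the elementary identity that for $N \sim \Poi(\lambda)$ with $\lambda > 0$,
\[
\E\lp[ \frac{1}{1+N} \rp] = \sum_{k=0}^\infty \frac{1}{k+1}\, e^{-\lambda}\, \frac{\lambda^k}{k!} = \frac{e^{-\lambda}}{\lambda} \sum_{k=0}^\infty \frac{\lambda^{k+1}}{(k+1)!} = \frac{1 - e^{-\lambda}}{\lambda} \le \frac{1}{\lambda},
\]
obtained by reindexing the series. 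Substituting $\lambda = m\p(i)$ for each bin with $\p(i) > 0$ — the bins with $\p(i) = 0$ contribute nothing — gives $\E[\snorm{2}{\p_S}^2] \le \sum_{i=1}^n \p(i)^2/(m\p(i)) = (1/m)\sum_{i=1}^n \p(i) = 1/m$, as claimed. I expect this Poisson moment identity to be the main (and only) obstacle; beyond it, the remaining care is the standard justification of Poissonization (that the per-bin counts are genuinely independent Poissons) and the handling of the degenerate bins with $\p(i) = 0$.
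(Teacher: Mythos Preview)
Your proof is correct in all four parts. The paper does not supply its own proof of this statement: it is recorded as a Fact imported from \cite{DiakonikolasK16}, so there is nothing to compare against. Your argument is the standard one; in particular, the identity $\E[1/(1+N)] = (1-e^{-\lambda})/\lambda$ for $N\sim\Poi(\lambda)$ that you derive is exactly the computation the paper itself uses later in the proof of \Cref{lem:small-support-l2}.
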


We will also leverage the following $\ell_2$-distance estimator
to develop our final tester. 
\begin{lemma}
[Proposition 6 from \cite{chan2014optimal}]
\label{lem:robust-l2-estimate}
Let $\p$ and $\q$ be unknown distributions on $[n]$.
There exists an algorithm that on input $n, \eps > 0$, and 
$b \geq \max \lp(  \snorm{2}{\p}^2, \snorm{2}{\q}^2 \rp)$, 
it draws $\Poi(m)$ samples from $\p, \q$, 
where $m = \Theta \lp( \sqrt{b} / \eps^2 
+ \sqrt{b} \; \snorm{4}{\p - \q}^2 / \eps^4
\rp) $,
and with probability $3/4$ estimates $\snorm{2}{\p-\q}$ 
up to accuracy $\pm \eps$.
\end{lemma}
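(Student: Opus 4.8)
The plan is to exhibit an unbiased estimator of $\snorm{2}{\p-\q}^2$ built from Poissonized samples, bound its variance via a per-coordinate computation, and conclude by Chebyshev's inequality together with a square-root post-processing step. Concretely, I would first Poissonize: draw $\Poi(m)$ samples from $\p$ and, independently, $\Poi(m)$ samples from $\q$, and for $i\in[n]$ let $X_i$ and $Y_i$ be the numbers of samples from $\p$ and from $\q$ respectively that land on $i$; then the $2n$ variables $\{X_i\},\{Y_i\}$ are mutually independent with $X_i\sim\Poi(mp_i)$ and $Y_i\sim\Poi(mq_i)$. Set
\[
Z \eqdef \frac{1}{m^2}\sum_{i=1}^n \Big( (X_i-Y_i)^2 - X_i - Y_i \Big),
\]
and have the algorithm output $\wh\sigma \eqdef \sqrt{\max\{0,Z\}}$. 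For independent $X\sim\Poi(u)$, $Y\sim\Poi(v)$ one has $\E[(X-Y)^2] = (u-v)^2 + (u+v)$ and $\E[X+Y]=u+v$, so every summand of $m^2 Z$ has expectation $m^2(p_i-q_i)^2$ and hence $\E[Z] = \snorm{2}{\p-\q}^2$.

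The crux --- and the only place requiring a genuine (if routine) calculation --- is the variance bound. Since the $n$ summands of $m^2 Z$ are independent, it suffices to control the variance of $f(X,Y)\eqdef (X-Y)^2 - X - Y$ for independent $X\sim\Poi(u)$, $Y\sim\Poi(v)$. Writing $s=u+v$ and $\mu=u-v$, I would compute the needed third and fourth raw moments of $X-Y$; the cleanest route is via cumulants, since a difference of two independent Poissons has cumulant sequence $(\mu,s,\mu,s)$, and this yields the clean identity $\Var[f] = 2s^2 + 4 s\mu^2$ (worth double-checking against the degenerate cases $v=0$, where $f=X^2-X$, and $u=v$). Substituting $s=m(p_i+q_i)$, $\mu=m(p_i-q_i)$ and summing over $i$ gives
\[
\Var[Z] = \frac{2}{m^2}\sum_{i=1}^n (p_i+q_i)^2 + \frac{4}{m}\sum_{i=1}^n (p_i+q_i)(p_i-q_i)^2 .
\]
I would then bound $\sum_i (p_i+q_i)^2 \le 2\snorm{2}{\p}^2 + 2\snorm{2}{\q}^2 \le 4b$ and, by Cauchy--Schwarz, $\sum_i (p_i+q_i)(p_i-q_i)^2 \le \big(\sum_i (p_i+q_i)^2\big)^{1/2}\snorm{4}{\p-\q}^2 \le 2\sqrt{b}\,\snorm{4}{\p-\q}^2$, so that $\Var[Z] \le 8b/m^2 + 8\sqrt{b}\,\snorm{4}{\p-\q}^2/m$.

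Finally I would pick the sample size: for $m$ at least a sufficiently large constant multiple of $\sqrt{b}/\eps^2 + \sqrt{b}\,\snorm{4}{\p-\q}^2/\eps^4$, both terms of the variance bound are at most $\eps^4/32$, so $\Var[Z]\le \eps^4/16$, and Chebyshev's inequality gives $\Pr\big[\,|Z-\snorm{2}{\p-\q}^2|\ge \eps^2/2\,\big]\le 1/4$. On the complementary event (probability at least $3/4$) we have $|Z - \snorm{2}{\p-\q}^2| < \eps^2/2$; combining this with the elementary bound $|\sqrt a-\sqrt b|\le \sqrt{|a-b|}$ for $a,b\ge 0$ (and noting that if $Z<0$ then $\snorm{2}{\p-\q}^2<\eps^2/2$ already, so $\wh\sigma=0$ is within $\eps$) yields $|\wh\sigma - \snorm{2}{\p-\q}| < \eps$, which is the claimed accuracy. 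Apart from verifying the identity $\Var[f]=2s^2+4s\mu^2$, the entire argument is Poissonization bookkeeping plus one application of Chebyshev, so the per-coordinate moment computation is the step I expect to be the main (and essentially only) obstacle.
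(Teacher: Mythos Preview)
The paper does not prove this lemma; it is quoted verbatim as Proposition~6 of \cite{chan2014optimal} and used as a black box. Your proposal reconstructs the standard proof from that reference: the unbiased estimator $Z=\tfrac{1}{m^2}\sum_i\big((X_i-Y_i)^2-X_i-Y_i\big)$, the per-coordinate variance identity $\Var[f]=2s^2+4s\mu^2$, the Cauchy--Schwarz bound to extract the $\snorm{4}{\p-\q}^2$ term, Chebyshev, and the square-root post-processing. All steps are correct (your variance identity checks out in both degenerate cases you flag), so your argument is exactly the intended one.
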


We can easily convert the above $\ell_2$-distance estimator 
to an $\ell_2$-closeness tester, which is more applicable to our setting.

\begin{corollary}
\label{lem:robust-l2}
Let $\p$ and $\q$ be unknown distributions on $[n]$.
There exists an algorithm that on input $n, \eps > 0$, and 
$b \geq \max \lp(  \snorm{2}{\p}^2, \snorm{2}{\q}^2 \rp)$, 
the algorithm draws $\Poi(m)$ samples from $\p, \q$, 
where $m = \Theta \lp( \sqrt{b} / \eps^2 \rp) $,
and with probability $3/4$ distinguishes between 
the cases $\p = \q$ versus $\snorm{2}{\p-\q} > \eps$.
\end{corollary}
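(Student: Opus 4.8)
The plan is to turn the $\ell_2$-distance estimator of \Cref{lem:robust-l2-estimate} into a tester by thresholding its output: run it with accuracy parameter $\eps/2$ and reject iff the returned estimate exceeds $\eps/2$. If $\p=\q$ this is immediate, since then $\snorm{4}{\p-\q}=0$, so the estimator needs only $\Theta(\sqrt{b}/\eps^2)$ samples and returns a value within $\eps/2$ of $0$, and we accept. The only real subtlety is the second term $\sqrt{b}\,\snorm{4}{\p-\q}^2/\eps^4$ in the estimator's sample complexity: in the far case $\snorm{2}{\p-\q}>\eps$ this term depends on the unknown quantity $\snorm{4}{\p-\q}$ and can be much larger than $\sqrt{b}/\eps^2$, so a single invocation ``at scale $\eps$'' with a budget of $\Theta(\sqrt{b}/\eps^2)$ samples need not produce a reliable estimate.

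To get around this I would exploit the elementary bound $\snorm{4}{\p-\q}\le\snorm{2}{\p-\q}$ (monotonicity of $\ell_p$-norms) and run the estimator on a geometric ladder of scales. Set $\eps_j=2^j\eps$ for $j=0,1,\dots,J$ with $J=\lceil\log_2(\sqrt{2}/\eps)\rceil$; since $\snorm{2}{\p-\q}\le\sqrt{2}$ always, $\eps_J$ exceeds the largest possible value of the $\ell_2$-distance. At scale $j$, feed the estimator of \Cref{lem:robust-l2-estimate} a batch of $\Poi(C'\sqrt{b}/\eps_j^2)$ samples with accuracy parameter $\eps_j/3$ and parameter $b$, for a sufficiently large universal constant $C'$, and amplify its success probability to $1-2^{-(j+5)}$ by taking the median of $O(j+1)$ independent repetitions. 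The tester accepts iff for every $j$ the (median) estimate returned at scale $j$ is at most $\eps_j/2$, and rejects otherwise.

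For correctness, if $\p=\q$ then at every scale $\snorm{4}{\p-\q}=0$, so the estimator's sample requirement reduces to $\Theta(\sqrt{b}/\eps_j^2)$, which $C'\sqrt{b}/\eps_j^2$ exceeds; hence with probability $1-2^{-(j+5)}$ it returns a value within $\eps_j/3<\eps_j/2$ of $0$, and a union bound over the $J+1$ scales (with $\sum_{j\ge0}2^{-(j+5)}<1/10$) shows the tester accepts with probability at least $9/10$. If $\snorm{2}{\p-\q}>\eps$, let $j^\star$ be the unique scale with $\eps_{j^\star}\le\snorm{2}{\p-\q}<2\eps_{j^\star}$, so $0\le j^\star\le J$; then $\snorm{4}{\p-\q}\le\snorm{2}{\p-\q}<2\eps_{j^\star}$, so the estimator's requirement at scale $j^\star$ is $\Theta(\sqrt{b}/\eps_{j^\star}^2+\sqrt{b}(2\eps_{j^\star})^2/\eps_{j^\star}^4)=\Theta(\sqrt{b}/\eps_{j^\star}^2)$, again below $C'\sqrt{b}/\eps_{j^\star}^2$. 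With probability at least $1-2^{-(j^\star+5)}\ge3/4$ it therefore returns a value within $\eps_{j^\star}/3$ of $\snorm{2}{\p-\q}\ge\eps_{j^\star}$, i.e.\ at least $2\eps_{j^\star}/3>\eps_{j^\star}/2$, and the tester rejects. The total sample count is $\sum_{j=0}^{J}O(j+1)\cdot C'\sqrt{b}/\eps_j^2=\Theta(\sqrt{b}/\eps^2)\sum_{j\ge0}(j+1)/4^j=\Theta(\sqrt{b}/\eps^2)$, as claimed.

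The main point to get right is exactly this last step: the per-scale amplification costs a factor $O(j+1)$ at scale $j$, but the budgets decay like $1/4^j$, so $\sum_j (j+1)/4^j$ still converges and the ladder does not inflate the complexity beyond $\Theta(\sqrt{b}/\eps^2)$. (Alternatively, one could bypass the ladder by opening up the bias-corrected Poisson $\ell_2$-statistic underlying \Cref{lem:robust-l2-estimate} — whose expectation is proportional to $m^2\snorm{2}{\p-\q}^2$ and whose variance is $O\!\left(m^2 b+m^3\sqrt{b}\,\snorm{4}{\p-\q}^2\right)$ — and thresholding it directly; but the ladder reduction keeps \Cref{lem:robust-l2-estimate} a black box and avoids reproving these moment bounds.)
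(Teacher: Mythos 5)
Your proof is correct. The paper states this corollary without a written proof, treating it as an immediate consequence of \Cref{lem:robust-l2-estimate}, so there is no paper proof to match — but you have correctly identified that the ``immediate'' route has a real gap: the estimator's sample bound contains the term $\sqrt{b}\,\snorm{4}{\p-\q}^2/\eps^4$, which depends on an unknown quantity and can swamp $\sqrt{b}/\eps^2$, so a single black-box call with budget $\Theta(\sqrt{b}/\eps^2)$ at scale $\eps$ carries no guarantee in the far case. Your geometric-ladder fix is sound: the choice of $j^\star$ with $\eps_{j^\star}\le\snorm{2}{\p-\q}<2\eps_{j^\star}$, together with $\snorm{4}{\p-\q}\le\snorm{2}{\p-\q}$, collapses the problematic term back to $\Theta(\sqrt{b}/\eps_{j^\star}^2)$, the per-scale budget $C'\sqrt{b}/\eps_j^2$ decays geometrically, and the $O(j{+}1)$ median-amplification cost is absorbed since $\sum_j(j{+}1)/4^j$ converges, giving $\Theta(\sqrt{b}/\eps^2)$ total. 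Your parenthetical alternative — thresholding the bias-corrected Poisson $\ell_2$ statistic directly and bounding its variance by $O(m^2 b+m^3\sqrt{b}\,\snorm{4}{\p-\q}^2)$, then using $\snorm{4}{\cdot}\le\snorm{2}{\cdot}$ and $\snorm{2}{\p-\q}>\eps$ to show the far-case threshold is cleared with $m=\Theta(\sqrt{b}/\eps^2)$ — is almost certainly what the authors had in mind, and avoids the ladder at the cost of reproving moment bounds that \cite{chan2014optimal} already supplies. Either way the statement holds; your version has the advantage of using \Cref{lem:robust-l2-estimate} strictly as a black box.
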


Let $S$ be a multiset of $\Poi(m)$~\iid samples from 
$(1/2) \lp(\p +\q \rp)$ for some $m \leq s / 100$. 
First, we argue that the $\ell_2$-distance 
between $\p, \q$ will not decrease by too much 
after flattening, 
by taking advantage of the fact that the $\ell_2$-discrepancy between $\p, \q$ 
is  supported on a few light elements.

Let $X_i$ be the random variable denoting the number of samples from $S$ landing
in the $i$-th element, i.e., $X_i \sim \Poi(m \; \lp( \p_i +\q_i\rp) / 2  )$.
Then the expected discrepancy restricted to the elements from 
the set of elements witnessing the discrepancy, i.e., $H$, 
after flattening  is at least 
\begin{align*}
\sum_{i \in H} \E \lp[  \frac{\lp(\p_i - \q_i \rp)^2}{ X_i +1 } \rp]
= 
\sum_{i \in H} \lp(\p_i - \q_i \rp)^2 \; (1 - e^{- \lambda_i})/ \lambda_i \, ,
\end{align*}
where $\lambda_i \eqdef m \; \lp( \p_i +\q_i\rp) / 2$.
Notice that $(1 - e^{-x})/x$ is a decreasing function with respect to $x$. 
Since $m \leq s / 100$ and $(\p_i +\q_i)/2 \leq s$, it follows that 
$(1 - e^{- \lambda_i})/ \lambda_i$
is bounded below by
$ 1 - e^{ -0.01 } / 0.01 \geq 0.99$.
This then gives us
\begin{align} \label{eq:expect-lower-bound}
\E\lp[ \sum_{i \in H}  \frac{\lp(\p_i - \q_i \rp)^2}{ X_i +1 } \rp]
\geq 
0.99 \sum_{i \in H} \lp( \p_i - \q_i \rp)^2.
\end{align}
On the other hand, the variance
of the discrepancy restricted to the elements in $H$, after flattening,
is bounded above by
\begin{align*}
\Var \lp[ \sum_{i \in H}  \frac{\lp(\p_i - \q_i \rp)^2}{ X_i +1 } \rp]
&= 
\E \lp[ \lp(\sum_{i \in H}  \frac{\lp(\p_i - \q_i \rp)^2}{ X_i +1 } \rp)^2 \rp]
- \E^2 \lp[ \lp(\sum_{i \in H}  \frac{\lp(\p_i - \q_i \rp)^2}{ X_i +1 } \rp)^2 \rp] \\
&\leq 
 \lp(\sum_{i \in H} \lp( \p_i - \q_i \rp)^2\rp)^2
 - (0.99)^2 \lp(\sum_{i \in H} \lp( \p_i - \q_i \rp)^2 \rp)^2 \, ,
\end{align*}
where in the last inequality we use the fact that $\frac{1}{1 +X_i}$ 
is at most $1$ and \eqref{eq:expect-lower-bound}.
This then gives us
\begin{align} \label{eq:var-upper-bound}
    \sqrt{ \Var \lp[ \sum_{i \in H}  \frac{\lp(\p_i - \q_i \rp)^2}{ X_i +1 } \rp] }
    \leq 0.15 \;  \sum_{i \in H} \lp( \p_i - \q_i \rp)^2.
\end{align}
Combining \eqref{eq:expect-lower-bound} and~\eqref{eq:var-upper-bound}, 
we obtain
\begin{align} \label{eq:good-discrepancy}
    \Pr \lp[ \snorm{2}{ \p_S - \q_S }^2 <  \frac{1}{3} \sum_{i \in H} \lp( \p_i - \q_i \rp)^2  \rp] 
    \leq 1/4 \;.
\end{align}
On the other hand, by \Cref{lem:flatten} and Markov's inequality, it holds
\begin{align} \label{eq:good-l2}
    \Pr \lp[ \max \lp( \snorm{2}{ \p }^2, \snorm{2}{ \q }^2 \rp) > 40 / m  \rp]  \leq 1/10 \;.
\end{align}
By the union bound, \eqref{eq:good-discrepancy},~\eqref{eq:good-l2} and 
\Cref{lem:robust-l2}, it follows that the $\ell_2$-closeness tester
of \Cref{lem:robust-l2} 
succeeds with probability at least $2/3$, if we take  
$m' = C \sqrt{ \frac{1}{m} } \; \eps^{-2}$ many samples, for a sufficiently large constant $C$. 
Balancing $m$ and $m'$ (with the restriction that $m \leq s / 100$ in mind) 
then gives us that 
the overall tester succeeds with probability at least $2/3$
if we draw $\Poi(m)$ many~\iid samples with 
$$
m = \Theta \lp(  \max \lp( \eps^{-4/3}, \eps^{-2} / \sqrt{s} \rp) \rp) \;.
$$
{This concludes the proof of Lemma~\ref{lem:small-support-l2}.} \qed

\subsection{
Applications: Closeness Testing of Multivariate Structured Distributions under Total Variation Distance} \label{sec:application}
The most direct application of our multivariate $\Ak$-closeness tester 
is for the problem of testing closeness of multivariate histogram 
distributions --- distributions that are piecewise constant over (the same) \emph{unknown} 
collection of axis-aligned rectangles --- with respect to the total variaton distance.

This follows directly from our main theorem, since for any pair of $k$-histogram distributions
$\p, \q$ with respect to the same set of rectangles,  we have 
$\dtv(\p, \q) = \frac{1}{2} \sum_{i=1}^k \abs{ \p(R_i) - \q(R_i) } = \frac{1}{2} \snorm{\Ak}{\p - \q}$. Formally, we have the following: 

\begin{corollary} \label{cor:md-hist}
Let $\{R_i\}_{i=1}^k$ be a set of axis-aligned rectangles in $\R^d$.
Suppose $\p, \q$ are distributions over $\R^d$ that are piecewise constant over each of $\{R_i\}_{i=1}^k$, i.e., $\p(x) = \p(y)$ for any $x,y \in R_i$ and the same for $\q$.
Then there exists a tester which distinguishes between 
$\p = \q$ and $\dtv(\p, \q) > \eps$ with sample complexity
$
C \; k^{6/7} \; \eps^{ -2 \alpha_d / 3 } \; \log^d(k) \; 2^{d/3}
$, 
where $C$ is a sufficiently large universal constant and $\alpha_d = O(d^2 2^{2^{d+1}})$.
\end{corollary}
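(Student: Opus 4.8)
The plan is to derive this as an immediate consequence of Theorem~\ref{thm:main-intro}, using the elementary fact that for two $k$-histograms over the \emph{same} partition the total variation distance coincides (up to the factor $1/2$) with the $\Ak$ distance.

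First I would prove the identity $\dtv(\p,\q) = \tfrac12 \snorm{\Ak}{\p-\q}$. Since $\p$ and $\q$ are both constant on each $R_i$ and supported on $\bigcup_{i=1}^k R_i$, the difference $\p-\q$ is constant on each $R_i$ (equal to $(\p(R_i)-\q(R_i))/\mathrm{vol}(R_i)$) and vanishes outside $\bigcup_i R_i$; integrating gives $\snorm{1}{\p-\q} = \sum_{i=1}^k \abs{\p(R_i)-\q(R_i)}$, so $\dtv(\p,\q) = \tfrac12\sum_{i=1}^k\abs{\p(R_i)-\q(R_i)}$. For the two directions of the $\Ak$ comparison: the collection $\{R_i\}_{i=1}^k$ is itself an admissible family of $k$ disjoint axis-aligned rectangles, hence $\snorm{\Ak}{\p-\q} \ge \sum_{i=1}^k\abs{\p(R_i)-\q(R_i)} = 2\dtv(\p,\q)$; conversely, for any disjoint axis-aligned rectangles $S_1,\dots,S_k$ one has $\sum_{j}\abs{\p(S_j)-\q(S_j)} \le \snorm{1}{\p-\q} = 2\dtv(\p,\q)$, so taking the maximum gives $\snorm{\Ak}{\p-\q}\le 2\dtv(\p,\q)$. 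Combining the two bounds yields the claimed equality.

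Given the identity, I would simply observe that distinguishing $\p=\q$ from $\dtv(\p,\q)>\eps$ is \emph{exactly} the task of distinguishing $\p=\q$ from $\snorm{\Ak}{\p-\q} > 2\eps$, and run the algorithm of Theorem~\ref{thm:main-intro} with accuracy parameter $2\eps$. Its sample complexity is $C\,2^{d/3}k^{6/7}\log^{3d}(k)/(2\eps)^{\alpha_d}$, which is of the stated form after folding the constant $2^{\alpha_d}$ into $C$. The edge case $k=1$ is trivial: both histograms are uniform on $R_1$, hence identical, so the tester accepts with no samples; for $k\ge 2$ the main theorem applies verbatim.

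I do not expect any genuine obstacle here; the one point requiring attention is that the two histograms must share the same (unknown) partition. If their partitions differed, the common refinement could have as many as $k^d$ pieces, and $\snorm{\Ak}{\cdot}$ with parameter $k$ would in general underestimate $\dtv$, so the reduction would no longer be lossless. It is also worth remarking that the reduction uses no knowledge of the rectangles $\{R_i\}$, which is consistent with the fact that the $\Ak$-tester of Theorem~\ref{thm:main-intro} itself requires no such knowledge.
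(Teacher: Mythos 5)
Your proposal is correct and matches the paper's own (very brief) proof: both establish the identity $\dtv(\p,\q) = \tfrac12\sum_{i=1}^k |\p(R_i)-\q(R_i)| = \tfrac12\snorm{\Ak}{\p-\q}$ for two $k$-histograms over a common partition, and then invoke Theorem~\ref{thm:main-intro} directly. Your write-up supplies the two-sided $\Ak$ comparison and the remark about the partition being shared, which the paper leaves implicit; the only minor quibble is that "folding $2^{\alpha_d}$ into $C$" conflicts slightly with $C$ being called a \emph{universal} constant, though the paper is equally loose on this point.
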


We now proceed with our second application. 
We consider the binary hypothesis class $H$ consisting 
of all possible $k$-unions of 
axis-aligned rectangles within the unit cube $[0,1]^d$. Given two 
hypotheses $h_1, h_2 \in H$, we can test whether $h_1$ is 
equivalent to $h_2$ or they are far from each other under the 
uniform distribution over the unit cube $[0,1]^d$.

\begin{corollary}
\label{cor:rectangle-hypothesis}
Let $H$ be the class of all possible $k$-unions of axis-aligned 
rectangles within the unit cube $[0,1]^d$, i.e., $$H = \lp\{h |
h = \bigcup_{i=1}^k R_i \text{ where } \{R_i\}_{i=1}^k \subset 
[0,1]^d \text{ are disjoint axis-aligned rectangles over } [0,1]^d \rp\}.
$$
Let $h_1, h_2$ be two unknown hypotheses from $H$.
Given $\eps > 0$ and sample access to $(x, h_i(x))$, 
where $x$ follows the uniform distribution over $[0,1]^d$, 
there exists an efficient algorithm 
which distinguishes with probability at least $2/3$
between (i) $h_1(x) = h_2(x)$ for all $x$, and 
(ii) $ \E_{ x \sim U } \lp[ \mathbbm 1 \{ h_1(x) \neq h_2(x) \} \rp] > \eps $, 
where $U$ is uniform distribution over $[0,1]^d$. 
Moreover, the algorithm has sample complexity
$
C \; k^{6/7} \; \eps^{ -2 \alpha_d / 3 } \; \log^d(k) \; 2^{d/3}
$, 
where $C$ is a sufficiently large constant and $\alpha_d = O(d^2 2^{2^{d+1}})$.
\end{corollary}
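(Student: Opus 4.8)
The plan is to reduce this disagreement‑testing task to $\mathcal{A}_k$‑closeness testing and then invoke \Cref{thm:main-intro}, mirroring what \Cref{cor:md-hist} does for histograms. Writing $h_i=\bigcup_{j=1}^k R^{(i)}_j$ with $\{R^{(i)}_j\}_{j}$ disjoint axis‑aligned rectangles inside $[0,1]^d$, I would fix an auxiliary point $\bot:=(2,\dots,2)\notin[0,1]^d$ and define, for $i\in\{1,2\}$, a distribution $\mathcal D_i$ on $\R^d$ by: draw $x\sim U$, output $x$ if $h_i(x)=1$, and output $\bot$ otherwise. Each given sample $(x,h_i(x))$ yields a sample from $\mathcal D_i$ in constant time, so the reduction is efficient; since $\mathcal D_i$ has an atom at $\bot$, I would use the remark accompanying \Cref{thm:main-intro} that the $\mathcal{A}_k$‑tester applies to arbitrary (not necessarily continuous) distributions. (An alternative avoiding the auxiliary point is to let $\mathcal D_i$ be the law of $(x,h_i(x))\in\R^{d+1}$, but this costs an extra dimension, and hence a worse $2^{d/3}$ and $\log^d$ factor, so the $\bot$‑trick is preferable.)

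For correctness, the easy direction is $h_1\equiv h_2$: then $\mathcal D_1=\mathcal D_2$ identically, since for every measurable $A\subseteq[0,1]^d$ we have $\mathcal D_i(A)=\Pr_{x\sim U}[x\in A,\ h_i(x)=1]$ and $\mathcal D_i(\{\bot\})=\Pr_{x\sim U}[h_i(x)=0]$, all independent of $i$, so the tester accepts with probability $\ge 2/3$. The step requiring care is the far direction, because the disagreement region $\{x:h_1(x)\neq h_2(x)\}$ is the symmetric difference of two $k$‑unions and need not be a union of $O(k)$ axis‑aligned rectangles, so one cannot witness it with a single $\mathcal{A}_k$ collection directly. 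The fix is to use a one‑sided witness together with the fact that $\snorm{\Ak}{\cdot}$ is a maximum over collections: since $R^{(1)}_j\subseteq\{h_1=1\}$, one gets $\mathcal D_1(R^{(1)}_j)-\mathcal D_2(R^{(1)}_j)=\Pr_{x\sim U}[x\in R^{(1)}_j,\ h_1(x)=1,\ h_2(x)=0]\ge 0$, and summing over the disjoint rectangles $R^{(1)}_1,\dots,R^{(1)}_k$ (none of which contain $\bot$) gives $\snorm{\Ak}{\mathcal D_1-\mathcal D_2}\ge\Pr_{x\sim U}[h_1(x)=1,\ h_2(x)=0]$; symmetrically, using $h_2$'s rectangles, $\snorm{\Ak}{\mathcal D_1-\mathcal D_2}\ge\Pr_{x\sim U}[h_1(x)=0,\ h_2(x)=1]$. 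Since $\Pr_{x\sim U}[h_1(x)\neq h_2(x)]>\eps$, one of these two probabilities exceeds $\eps/2$, hence $\snorm{\Ak}{\mathcal D_1-\mathcal D_2}>\eps/2$ and the tester rejects with probability $\ge 2/3$.

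To finish, I would run the $\mathcal{A}_k$‑closeness tester of \Cref{thm:main-intro} on $\mathcal D_1,\mathcal D_2$ with accuracy $\eps/2$, accepting iff it accepts. Its sample complexity is $O\big(k^{6/7}(\eps/2)^{-2\alpha_d/3}\log^d(k)\,2^{d/3}\big)$, which after absorbing the $\eps$‑independent factor into the leading constant becomes $C\,k^{6/7}\eps^{-2\alpha_d/3}\log^d(k)\,2^{d/3}$ with $\alpha_d=O(d^2 2^{2^{d+1}})$, as claimed, and computational efficiency is inherited from \Cref{thm:main-intro}. The only genuinely non‑routine point is the one in the previous paragraph: the one‑sided disagreement masses $\Pr[h_1=1,h_2=0]$ and $\Pr[h_1=0,h_2=1]$ are each individually realizable as the $\mathcal{A}_k$‑discrepancy of a cheaply simulable pair of distributions, which is exactly what lets us avoid paying for the (possibly large) rectangle‑complexity of the symmetric difference.
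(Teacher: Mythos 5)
Your proof is correct and follows essentially the same route as the paper: you construct the same pair of distributions (sample $x\sim U$, output $x$ if $h_i(x)=1$, else a fixed point outside the cube), observe that $h_1\equiv h_2$ gives identical distributions, and in the far case lower-bound $\snorm{\Ak}{\mathcal D_1-\mathcal D_2}$ by the larger of the two one-sided disagreement masses, each witnessed by the $k$ disjoint rectangles of one hypothesis (for which $\mathcal D_1(R^{(1)}_j)-\mathcal D_2(R^{(1)}_j)\ge 0$). The paper packages the same computation slightly differently — it first bounds $\Pr[h_1\neq h_2]$ by twice the max of the two one-sided terms and then equates the one-sided term with $\sum_j\p(R_j)-\q(R_j)$ — but there is no substantive difference.
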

\begin{proof}
Consider the distributions $\p, \q$ defined as follows. 
To draw a sample from $\p$, we take a sample $(x, h_1(x))$ where $x \sim U$.
If $h_1(x) = 1$, we return $x$. 
Otherwise, we return some arbitrarily chosen point $s \not \in [0, 1]^d$.
We define $\q$ similarly based on $h_2$.
If $h_1$ and $h_2$ are identical, it is easy to see that $\p = \q$.
If $\E_{ x \sim U } \lp[ \mathbbm 1 \{ h_1(x) \neq h_2(x) \} \rp]$, 
we claim that $\snorm{\Ak}{\p- \q} \geq \eps/2 $.
Suppose that $h_1$ is the union of the rectangles $\{R_i\}_{i=1}^k$ 
and $h_2$ is the union of the rectangles $\{R_i'\}_{i=1}^k$.
Then we have that 
\begin{align*}
&\E_{ x \sim U } \lp[ \mathbbm 1 \{ h_1(x) \neq h_2(x) \} \rp] \\
&= \int_{x \in [0,1]^d} U(x) \; \mathbbm 1 \lp\{ x \in \bigcup_{i=1}^k R_i  \backslash \bigcup_{i=1}^k R_i' \rp\} 
+ \int_{x \in [0,1]^d} U(x) \; \mathbbm 1 \lp\{ x \in \bigcup_{i=1}^k R_i'  \backslash \bigcup_{i=1}^k R_i \rp\} 
\\
&\leq 2 \max \lp( 
\sum_{i=1}^k \int_{x \in [0,1]^d} U(x) \; \mathbbm 1 \lp\{ x \in R_i \backslash \bigcup_{i=1}^k R_i' \rp\} \, , \, 
\sum_{i=1}^k \int_{x \in [0,1]^d} U(x) \; \mathbbm 1 \lp\{ x \in R_i' \backslash \bigcup_{i=1}^k R_i \rp\}
\rp).
\end{align*}
Without loss of generality, we assume that the first term is larger. 
Then we have that
\begin{align*}
\sum_{i=1}^k \int_{x \in [0,1]^d} U(x) \; \mathbbm 1 \lp\{ x \in R_i \backslash \bigcup_{i=1}^k R_i' \rp\}
\geq \frac{1}{2} \eps \;,
\end{align*}
if $\E_{ x \sim U } \lp[ \mathbbm 1 \{ h_1(x) \neq h_2(x) \} \rp] > \eps$.
On the other hand, 
we also have
\begin{align*}
\sum_{i=1}^k \int_{x \in [0,1]^d} U(x) \; \mathbbm 1 \lp\{ x \in R_i \backslash \bigcup_{i=1}^k R_i' \rp\}
=
\sum_{i=1}^k \p(R_i) - \q(R_i).
\end{align*}
Thus, this gives $\snorm{\Ak}{\p - \q} \geq \eps/2$.
Therefore, we can distinguish between the two cases 
by performing $\Ak$-closeness testing between $\p, \q$ 
with accuracy parameter $\eps/2$.
\end{proof}

\section{Sample Complexity Lower Bound} \label{sec:lb}

In this section, we prove our sample complexity lower bound.
Specifically, we show that the task of $\mathcal A_k$-closeness testing
gets information-theoretically harder as we go from one dimension to two 
dimensions. For the one-dimensional case, it was shown in~\cite{diakonikolas2015optimal} that 
the sample complexity of $\mathcal A_k$-closeness testing is 
$\Theta \lp( \max  \lp( k^{4/5} \eps^{-6/5}, k^{1/2} \eps^{-2} \rp)  \rp)$. Perhaps surprisingly, for two-dimensional distributions, 
we prove a sample complexity lower bound of 
$\Omega\lp( k^{6/7} / \eps^{8/7}\rp)$ in the sublinear regime, 
where $\eps > k^{-1/8}$. 
This lower bound clearly dominates the sample complexity of one-dimensional $\Ak$ testing in the same regime.

At a very high level, we build on the lower bound framework of \cite{diakonikolas2015optimal}.
In particular, our lower bound proof consists of two steps. 
First, we argue that, if the domain size is a sufficiently large function of $d, k$, 
we can assume without loss of generality that the output of the tester 
only depends on the {\em relative order} of samples ranked in each coordinate. 
This is shown in Section~\ref{sec:order-base}.

Then, for such ``order-based'' testers, we present two explicit families 
of pairs of two-dimensional distributions such that a random pair of distributions 
from the first family are identical, 
and a random pair of distributions from the second family 
are far from each other in $\Ak$-distance.
Moreover, a random pair of distributions from the first family 
is hard (i.e., requires many samples)
to distinguish from a random pair from the second.
This step requires a carefully designed gadget consisting 
of distributions over $\R^2$ supported on the edges of a square.
We present the construction and analyze its key properties in Section~\ref{sec:square-edge}.

Next we appropriately replicate the gadget many times 
to create the full hard-instance of $2$-dimensional $\Ak$-closeness testing. 
The description of the hard instance and its detailed analysis can be found in Section~\ref{sec:construction}.

Finally, we provide an alternative way to prove a sample complexity 
lower bound against general $\Ak$ testers, 
while requiring the domain size to be at most 
doubly exponential in {$k$}. 
This involves a careful application of randomly chosen 
monotonic transformations to the $x$ and $y$ coordinates 
of all points in order to hide extra ``non-order based'' 
information that a tester can retrieve from the numerical values 
of the sample coordinates. 
This more refined construction and its analysis are presented 
in Section~\ref{sec:domain-optimize}.

\subsection{Order-Based Testers} \label{sec:order-base}
{Here we define the class of order-based testers 
and show that we can translate lower bounds against order-based testers 
to general testers at the cost of increasing the domain size.}
More formally, we consider algorithms which are restricted 
to obtain information from what we call the \emph{Order Sampling} process, 
as opposed to the usual direct sampling.
{This can be thought of as follows. 
We first draw \iid samples from the unknown distributions. 
Then, instead of feeding them directly to the algorithm, 
we perform an appropriate pre-processing to extract 
only the information related to the order of the coordinates 
of the samples, and reveal only the order information to the algorithm.}
\begin{definition}[Order Sampling]
\label{def:order-sampling}
Let $\p, \q$ be a pair of distributions in $\R^2$. 
Let $\{(x_i, y_i), \ell_i\}_{i=1}^m$ be $m$ \iid samples, where
$(x_i, y_i)$ are sampled from $(1/2)(\p + \q)$ 
and $\ell_i$ records whether the sample comes from $\p$ or $\q$.
Let $\sigma(x), \sigma(y) \in \mathbb S_m$ be the permutation representing the rank of the $x$-coordinates and $y$-coordinates accordingly.
The \emph{Order Tuple} associated with the $m$ samples is given by $ 
\Order( \{x_i, y_i, \ell_i\}_{i=1}^m ) = ( \sigma(x), \sigma(y), \ell )$.
Furthermore, we will use 
$\mathcal D(\p, \q, m)$ to denote the distribution over the tuple $( \sigma(x), \sigma(y), \ell )$ obtained through this process. 
\end{definition}

As our first structural lemma, we show that if an algorithm is able to perform $\mathcal A_k$-closeness testing with direct sample access on a domain of size $N \times N$, then we can always use it to build another algorithm which performs the test with only the order tuple of the same number of samples --- albeit on a smaller domain of  size $n \times n$. 
The proof uses a Ramsey-theoretic argument 
and generalizes Theorem 13 in \cite{diakonikolas2015optimal}.
\begin{lemma} \label{lem:order-reduction}
For all $n, m, k \in \Z^+$ where $m < n$ and $\eps > 0$, there exist  $N_1,N_2 \in \Z^+$ such that the following holds: If there
exists an algorithm $A$ that for every pair of distributions $\p,\q$ over $[N_1] \times [N_2]$ distinguishes the case $\p = \q$ from the case  $\snorm{\mathcal A_k}{\p-\q} > \eps$ with probability at least $4/5$ while taking $m$ samples from $\p$ and $\q$, then there exists an algorithm $A'$ that for every pair of distributions $\p', \q'$ over $[n] \times [n]$ distinguishes the case $\p' = \q'$ versus $\snorm{ \mathcal A_k }{\p'-\q'} > \eps$ with probability at least $2/3$ given a tuple $T$ from the order sampling process $\D(\p', \q', m)$.
\end{lemma}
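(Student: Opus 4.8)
The plan is to prove \Cref{lem:order-reduction} by a Ramsey-theoretic argument that, given the direct-access tester $A$ on a huge domain $[N_1]\times[N_2]$, extracts a large ``order-homogeneous'' subgrid on which $A$'s behavior depends only on the relative order of the coordinates it sees. Concretely, think of $A$ as a (randomized) function of the $m$ sample points; fixing its internal randomness, $A$ is a deterministic map from $m$-tuples of points in $[N_1]\times[N_2]$ (each tagged with a $\pm$ label) to $\{$accept, reject$\}$. Since $m < n$, we want to find subsets $X \subseteq [N_1]$ and $Y \subseteq [N_2]$ with $|X| = |Y| = n$ such that the value $A(\{(x_i,y_i),\ell_i\})$ depends, for samples drawn from $X\times Y$, only on the permutations $\sigma(x),\sigma(y)$ and the labels $\ell$. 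Once we have this, we build $A'$: given an order tuple $T = (\sigma(x),\sigma(y),\ell)$ from $\D(\p',\q',m)$ where $\p',\q'$ live on $[n]\times[n]$, we embed $[n]\hookrightarrow X$ and $[n]\hookrightarrow Y$ (by the unique order-preserving bijections), realize concrete sample points consistent with $T$, and feed them to $A$. Because $\Ak$-distance is invariant under the coordinatewise order-preserving embedding, the pushed-forward distributions $\tilde\p,\tilde\q$ on $X\times Y$ satisfy $\tilde\p=\tilde\q$ iff $\p'=\q'$ and $\snorm{\Ak}{\tilde\p-\tilde\q}=\snorm{\Ak}{\p'-\q'}$; and by order-homogeneity $A$'s output is a function of $T$ alone, so $A'$ inherits $A$'s success probability (the drop from $4/5$ to $2/3$ absorbs the loss from fixing/averaging over $A$'s randomness, or one can just keep $4/5$).

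The core step is producing $X,Y$. I would do this in the two-fold manner the overview describes. First handle the first coordinate: color each $m$-subset $\{a_1<\dots<a_m\}\subseteq[N_1]$ by the function that records, for every choice of second coordinates $b_1,\dots,b_m \in [N_2]$ (with multiplicity) and labels, whether $A$ accepts --- i.e. the color is the entire ``response table'' of $A$ as the first coordinates range over that $m$-subset in increasing order. The number of colors is finite (bounded by $2^{(N_2)^m \cdot 2^m}$ or so), so by the hypergraph Ramsey theorem, if $N_1$ is large enough (a tower-type function of $m, N_2$, and the color count), there is a subset $X_0\subseteq[N_1]$ of any prescribed size on which all $m$-subsets get the same color; this makes $A$'s output independent of \emph{which} elements of $X_0$ are used as first coordinates, depending only on their order. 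Then, \emph{with $X_0$ fixed}, repeat on the second coordinate: color each $m$-subset of $[N_2]$ by the (now much smaller) response table of $A$ when first coordinates are drawn from $X_0$; another application of Ramsey on a large enough $N_2$ yields $Y\subseteq[N_2]$ of size $n$ that is order-homogeneous for the second coordinate. Finally, take $X\subseteq X_0$ of size $n$. The adaptivity/order of quantifiers matters: we must pick $N_2$ large enough that the \emph{second} Ramsey application succeeds with the target size $n$, and then pick $N_1$ large enough that the \emph{first} application succeeds with a size big enough to later pass through the second step --- so $N_1, N_2$ are chosen by working from the inside out. I would also note the standard subtlety that samples may collide in a coordinate; since we may assume $\p',\q'$ have been ``spread out'' (or handle ties by the random tie-breaking used in the algorithm's proof), and collisions have controlled probability, this does not affect the argument.

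The main obstacle --- and the place where care is genuinely needed versus the one-dimensional proof of \cite{diakonikolas2015optimal} --- is exactly this interaction between the two coordinates. In one dimension a single Ramsey application on $[N]$ suffices; here a naive attempt to color $m$-subsets of the product $[N_1]\times[N_2]$ runs into trouble because the relevant combinatorial object is a pair of orderings, not one, and a point's first-coordinate rank and second-coordinate rank are independent data. The fix is the sequential scheme above: make the tester blind to first-coordinate \emph{values} first, which shrinks the description of ``$A$'s behavior'' to a function of first-coordinate \emph{order} plus second-coordinate values, and only then Ramsey away the second-coordinate values. One must be careful that the color used in the first step is well-defined, i.e. that ``$A$'s response as a function of the second coordinates'' really is determined by the chosen $m$-subset of first coordinates once we agree to list them in increasing order --- this is where we use that any $m$-subset of $X_0$ is interchangeable. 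Getting the quantifier nesting right (so that $N_1, N_2$ exist as honest, if enormous, integers) is the delicate bookkeeping; everything else --- the embedding, the $\Ak$-invariance, transferring the success probability --- is routine. This is also the source of the tower-type domain-size blowup that \Cref{sec:domain-optimize} later works to reduce.
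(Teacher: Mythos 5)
Your overall plan is the same as the paper's: a sequential, two-step Ramsey argument that first homogenizes the first coordinate and then, with that fixed, the second, with the quantifier order ``inside out'' (choose $N_2$ as a function of $n,m$, then $N_1$ as a function of $N_2,n,m$). The embedding and the $\Ak$-invariance transfer are likewise the same. So the structure is right, and the identification of the genuinely two-dimensional subtlety (interleaving the two coordinate orderings) matches the paper.

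There is, however, a real gap in how you handle the randomness of $A$, and this is exactly the step that produces the $4/5 \to 2/3$ loss in the statement. You propose to ``fix $A$'s internal randomness'' so that $A$ becomes deterministic, making the color (a response table of accept/reject bits over all second-coordinate and label choices) finite. But for a fixed random seed $r$, the deterministic $A_r$ does \emph{not} satisfy the $4/5$ guarantee on each input; only the average over $r$ does, and different seeds generally require different Ramsey subgrids, so you cannot average afterward. If instead you color by the true acceptance probability of the randomized $A$, the color set is $[0,1]$ and Ramsey does not apply. Your parenthetical ``or one can just keep $4/5$'' is therefore not correct: some discretization is essential. The paper's device is to round $A$'s acceptance probability to the nearest multiple of $1/10$, so the color (the function $g_{\mathcal X}$, or in the second step its analogue for the $y$-coordinates) takes at most $11$ values, and the rounded tester loses at most $1/10$ in success probability, which is exactly why the conclusion is stated with $2/3$ rather than $4/5$. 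With this fix inserted, your coloring and both Ramsey applications go through as you describe, including the order-preserving embedding $[n]\hookrightarrow V$ and the observation that only the relative order of the chosen coordinates survives, so $A'$ is an honest order-based tester.
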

\begin{proof}
Suppose we are given the algorithm $A$ 
which can perform $\mathcal A_k$-closeness testing 
over the domain $[N_1] \times [N_2]$ given direct \iid sample access to $\p, \q$. 
We show that we can use $A$ to construct another algorithm $A'$ 
which performs the test with only tuples obtained from the order sampling process 
over the domain $[n] \times [n]$. 

Let $\{(x_i, y_i), \ell_i\}_{i=1}^m$ be the samples drawn by $A$.
We will write $A( \{ (x_i, y_i), \ell_i\}_{i=1}^m )$ to denote the probability 
that $A$ outputs ``YES'' given these samples.
Before we specify our construction, we remark that we can without loss of generality assume 
that the image of $A( \{(x_i, y_i), \ell_i\}_{i=1}^m )$ has size at most {$11$}. 
This is because we can always round the probability to the nearest multiples of {$1/10$} 
and lose only {$1/10$} in the overall success probability.

Let $\p,\q$ be the unknown distributions supported on $[n] \times [n]$.
The key step is to argue the existence of two monotonic transformations 
$f_x: [n] \mapsto [N_1]$, $f_y: [n] \mapsto [N_2]$, 
{where $N_2$ is chosen to be a sufficiently large function of $n$, 
and  $N_1$ is chosen to be a sufficiently large function of $n$ and $N_2$}, 
such that if one feeds the samples $  \{ \lp( f_x(x_i), f_y(y_i) \rp), \ell_i \} $ to $A$, 
the output of $A$ becomes a function only of  $\Order( \{ \lp(x_i, y_i\rp), \ell_i\} )$. 
In other words, we 
want to find two mappings $f_x, f_y$ such that
$$
A( \{ \lp( f_x(x_i), f_y(y_i) \rp), \ell_i \}  ) =  A( \{ \lp( f_x(x_i'), f_y(y_i') \rp), \ell_i' \}  ) \;,
$$ 
as long as $ \Order \lp( \{ (x_i, y_i), \ell_i\} \rp) = \Order\lp( \{(x_i', y_i'), \ell_i'\} \rp) $. 
Given such mappings, we can then define 
$A'(\{ (x_i, y_i), \ell_i) \}  ) := A(\{ \lp(f_x(x_i), f_y(y_i) \rp), \ell_i \}  ) $.
Then, it is easy to see that $A'$ is an order-based tester.
Furthermore, since $f_x, f_y$ are both monotonic, the domain transformation 
will preserve the $\Ak$ distance between $\p$ and $\q$. 
Hence, $A'$ enjoys the same guarantee and gives the correct answer 
with probability at least $2/3$.

We next show the existence of such a pair of transformations $f_x, f_y$. 
We do so in two steps. First, we show the existence of the transformation 
$f_x$ which will make the output of algorithm $A$
independent of the actual values of the $x$-coordinate.
This then allows us to construct an algorithm $A_x$ that depends only on the 
rank information of the $x$-coordinates, the $y$-coordinates and the labels.
Then we show the existence of $f_y$, which is defined with respect to $A_x$, 
that makes the output of $A_x$ independent of the actual values of the $y$-coordinates. 
This then allows us to conclude the existence of the algorithm $A'$.

For convenience, we will rewrite the tuples $\{(x_i, y_i,\ell_i)\}_{i=1}^m$ 
as $( \mathcal X, \sigma(x), \{y_i\}_{i=1}^m, \{\ell_i\}_{i=1}^m )$, 
where $\mathcal X$ is the set of $x$-coordinates and $\sigma(x)$ 
is the permutation which maps $i \in [m]$ to the rank of $x_i$ among $\{x_i\}_{i=1}^m$. 
For each $\mathcal X \in [N_1]^m$, we can define a mapping 
$g_{\mathcal X}: \mathbb S_m \times [N_2]^m \times \{0,1\}^m \mapsto [0,1]$ 
induced by the algorithm $A$ as 
$g_{\mathcal X}(\sigma(x), \{y_i\}_{i=1}^m, \{\ell_i\}_{i=1}^m  ) := A( \mathcal X, \sigma(x), \{y_i\}_{i=1}^m, \{\ell_i\}_{i=1}^m )$.
Notice that {the set of values that $g_{\mathcal X}$ has size at most $11$}, 
since we assume the acceptance probability of $A$ conditioned 
on any input can take at most $11$ different values.
We note that there can be at most 
{${11}^{m! N^m 2^m}$} many different types of mapping $g_{\mathcal X}$.

If we view $\mathcal X$ as a hyper-edge of the hypergraph $ {{[N_1]} \choose m}$ 
and the associated mapping $g_{\mathcal X}$ as the coloring of the hyper-edge, 
by Ramsey's theorem, there exists a subset of vertices $V$ of size $n$ 
such that the coloring of the hyper-edges in the sub-graph ${V \choose m}$ 
are all the same as long as $N_1$ is sufficiently large compared to $N_2$ and $m$. 
In other words, there exists a subdomain $V \subset [N_1]$ such that 
if the $x$ coordinates of the samples are all from this subdomain, 
the acceptance probability of algorithm $A$ becomes a function of only $y_i, \ell_i, \sigma(x)$ 
and independent of the actual $x$-coordinates $\mathcal X \subseteq V$. 
We will then choose $f_x$ as the order-preserving mapping from $[n]$ to $[N_1]$, 
where the image is exactly $V$. 

We next consider the algorithm $A_x$ which first applies the transformation 
$f_x$ and then runs the testing algorithm $A$ on the resulting samples. 
From the argument above, we know that algorithm $A_x$ 
depends only on $\sigma(x), \{y_i\}_{i=1}^m, \{\ell_i\}_{i=1}^m$. 
Similarly, we can rewrite the tuple as $\sigma(x), \sigma(y), \mathcal Y, \{\ell_i\}_{i=1}^m$, 
where $\mathcal Y$ is the set of $y$-coordinates and $\sigma(y)$ is the permutation 
which maps $i$ to the rank of $y_i$. With a similar argument, 
as long as $N_2$ is sufficiently large compared to $n, m$, 
we can show the existence of an order-preserving mapping $f_y$ such that 
if we apply the mapping $f_y$ first and then run $A_x$, 
the output of $A_x$ becomes only a function of $\sigma_x, \sigma_y, \{\ell_i\}_{i=1}^m$ 
and independent of the actual set of $y$ coordinates $\mathcal Y$. 
Notice that $\sigma_x, \sigma_y, \{\ell_i\}_{i=1}^m$ is exactly the order tuple 
$\Order\lp( \{ (x_i, y_i), \ell_i\}_{i=1}^m \rp)$. 
Hence, such a pair of transformations $f_x, f_y$ are exactly what we need 
to construct algorithm $A'$.
{Setting $A'(\{ (x_i, y_i), \ell_i) \}  ) := A(\{ \lp(f_x(x_i), f_y(y_i) \rp), \ell_i \}  )$ then concludes the proof}.
\end{proof}

\subsection{Square-Edge Distributions } \label{sec:square-edge}
We now present the building block of our lower bound construction, which consists of distributions supported on the edges of a square.
{Notice that though the domain is $\R^2$, the supports of such distributions are lower-dimensional. 
We will use $\t, \r$ to represent such distributions} and
one can refer to \Cref{fig:square} for a visual illustration.
\begin{definition}[Square-Edge Distributions] \label{def:square-edge}
Consider a square in $\R^2$ whose diagonals are parallel to the $x$-axis and $y$-axis.
We define $\t$ as the uniform distribution supported on the upper-left and lower-right edges and $\r$ as the uniform distribution supported on the remaining two edges.
\end{definition}
Let $(a,b)$ be a point lying on the edges of the square. The space can be divided into four regions by drawing one horizontal and one vertical lines across $(a,b)$.
The most important property that we will rely on in our analysis is the following: For any such point $(a,b)$, any of the resulting four regions have the same mass under $\t$ as under $\r$.
\begin{fact} \label{clm:equal-mass}
Let $\t, \r$ be the square-edge distributions defined as in \Cref{def:square-edge}.
Consider a point $(a,b) \in \supp(\t) \cup \supp(\r)$.
Denote the four regions as
$
 R_{a,b}^{(1)} = \{ x > a, y > b | (x,y) \in \R^2  \},
 R_{a,b}^{(2)} = \{ x < a, y < b | (x,y) \in \R^2  \},
 R_{a,b}^{(2)} = \{ x > a, y < b | (x,y) \in \R^2  \},
 R_{a,b}^{(2)} = \{ x < a, y > b | (x,y) \in \R^2  \}
$.
Then, it holds
$
\t \lp( R_{a,b}^{(i)} \rp)
= \r \lp( R_{a,b}^{(i)} \rp)
$ for all $i$.
\end{fact}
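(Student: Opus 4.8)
The plan is to reduce the statement to a short explicit computation on a canonical square, after first using symmetry to cut down the number of cases. By translating and rescaling the picture---operations that preserve all the masses appearing in the statement---I may assume the square $S$ has vertices $(0,1),(1,0),(0,-1),(-1,0)$, so that its four edges are the $45^\circ$ segments $\mathrm{UL}=\{(u,u+1):u\in[-1,0]\}$, $\mathrm{UR}=\{(u,1-u):u\in[0,1]\}$, $\mathrm{LR}=\{(u,u-1):u\in[0,1]\}$ and $\mathrm{LL}=\{(u,-1-u):u\in[-1,0]\}$, with $\t$ uniform on $\mathrm{UL}\cup\mathrm{LR}$ and $\r$ uniform on $\mathrm{UR}\cup\mathrm{LL}$; then $\supp(\t)\cup\supp(\r)=\partial S$. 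A useful preliminary fact I would record is that projecting the uniform length measure on a $45^\circ$ segment onto either coordinate axis yields the uniform measure on the corresponding interval, so both $\t$ and $\r$ have the uniform distribution on $[-1,1]$ as their $x$-marginal and as their $y$-marginal. In particular the lines $\{x=a\}$ and $\{y=b\}$ are null under both measures, so $R^{(1)}_{a,b},\dots,R^{(4)}_{a,b}$ partition $\R^2$ up to $\t$- and $\r$-null sets, and also $\t(\{x>a\})=\r(\{x>a\})$ and $\t(\{y>b\})=\r(\{y>b\})$. Hence it suffices to prove the claim for a single region, say $R^{(1)}_{a,b}=\{x>a,\ y>b\}$: the equalities for $\{x>a,\ y<b\}$ and $\{x<a,\ y>b\}$ then follow by subtracting this from the (equal) half-plane masses, and the equality for $\{x<a,\ y<b\}$ follows since the four masses sum to $1$ for each of $\t,\r$.

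Second, I would invoke the dihedral symmetry group $D_4$ of $S$. Every element of $D_4$ maps the unordered pair $\{\t,\r\}$ to itself: the two reflections across the edge-midpoint axes ($\{y=x\}$ and $\{y=-x\}$) and the rotations by $0^\circ$ and $180^\circ$ fix $\t$ and $\r$ individually, whereas the two reflections across the vertex diagonals ($\{x=0\}$ and $\{y=0\}$) and the rotations by $\pm 90^\circ$ interchange $\t$ and $\r$; and $D_4$ acts transitively on the four edges of $S$. Since $(a,b)\in\partial S$, applying a suitable element of $D_4$ reduces us to the case where $(a,b)$ lies on the upper-left edge, i.e. $b=a+1$ with $a\in[-1,0]$. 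For such a point I would trace the boundary arc $\partial S\cap R^{(1)}_{a,b}$: starting at $(a,a+1)$ it runs up $\mathrm{UL}$---along which $x>a$ and $y>a+1$ are the same condition---to the vertex $(0,1)$, then down $\mathrm{UR}$ to the point $(-a,a+1)$ where it exits the region, and it meets $\mathrm{LR}\cup\mathrm{LL}$ only in a null set. The $\mathrm{UL}$-part of this arc lies in $\supp(\t)$ and the $\mathrm{UR}$-part lies in $\supp(\r)$, and each part projects onto an $x$-interval of length $|a|$, hence has equal length; since $\t$ and $\r$ are uniform (with the same constant density) on their respective supports, $\t(R^{(1)}_{a,b})=\r(R^{(1)}_{a,b})=|a|/2$, which finishes the argument.

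Everything here is elementary, so I do not expect a genuine obstacle beyond careful bookkeeping; the one point worth flagging is that the hypothesis $(a,b)\in\supp(\t)\cup\supp(\r)$ is essential and is used precisely in the last step. Indeed, if all four quadrant masses agreed for every $(a,b)\in\R^2$, then $\t$ and $\r$ would have the same two-dimensional distribution function, forcing $\t=\r$---which is false---so the cancellation that produces $|a|/2$ must (and does) rely on the relation $b=a+1$ that comes from $(a,b)$ lying on an edge of the square.
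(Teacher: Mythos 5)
Your proof is correct, and it is actually more than the paper supplies: the paper asserts \Cref{clm:equal-mass} without proof, merely pointing at Figure~\ref{fig:square} with the remark that ``it is easy to verify.'' Your argument supplies the missing verification. The normalization to the square with vertices $(\pm 1,0),(0,\pm 1)$ is harmless, the observation that both $\t$ and $\r$ have uniform $x$- and $y$-marginals on $[-1,1]$ (so that only one quadrant, say $R^{(1)}_{a,b}$, needs to be checked directly, the other three following by inclusion--exclusion against the equal half-plane masses) is exactly the right reduction, and the $D_4$ symmetry cleanly collapses the case analysis to a point on the upper-left edge. The final computation --- both boundary arcs $\partial S \cap R^{(1)}_{a,a+1}$ project to $x$-intervals of length $|a|$, giving $\t(R^{(1)})=\r(R^{(1)})=|a|/2$ --- is right. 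Your closing remark about why the hypothesis $(a,b)\in\supp(\t)\cup\supp(\r)$ is genuinely needed (since otherwise $\t=\r$ would follow) is a nice sanity check and correctly identifies where the edge constraint $b=a+1$ enters. In short: correct, complete, and a reasonable way to flesh out a step the paper takes as self-evident.
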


\begin{figure}
    \centering
    \includegraphics[width=10cm]{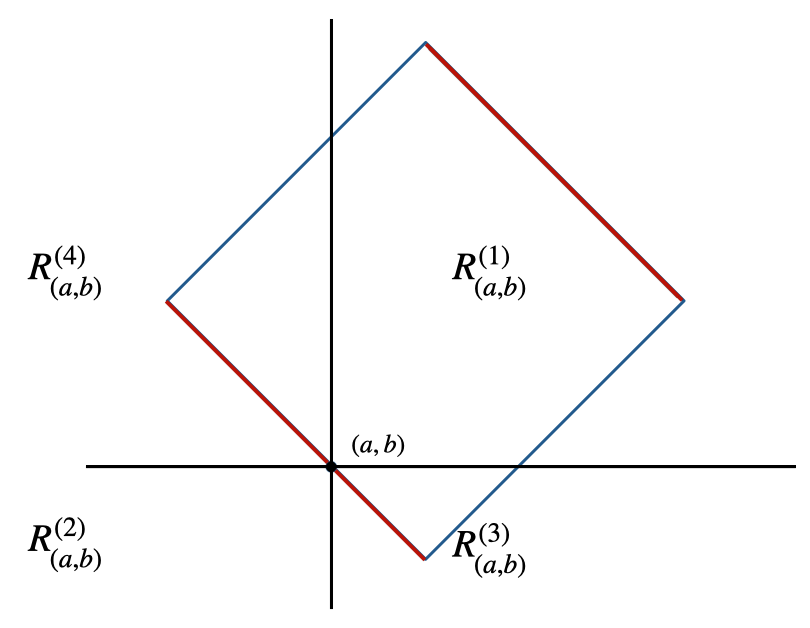}
    \caption{Square Edge Distributions}
    \label{fig:square}
    \medskip
\small
The red lines represent the distribution $\t$ and the blue lines represent the distribution $\r$. 
For any point $(a,b)$ on the edges of the square, it is easy to verify that the four regions 
$R^{(i)}_{(a,b)}$ in \Cref{clm:equal-mass} have the same probability mass under $\t$ as under $\r$.
\end{figure}
Intuitively, the above fact says that if one partitions the space based 
on one sample $(a,b)$, the tester cannot distinguish between $\t$ and 
$\r$ simply based on their mass on any of the regions $R_{a,b}^{(i)}$. 
As a consequence, to distinguish $\t$ and $\r$, one needs to take more 
samples to partition the space into finer pieces (for example, taking 
two samples and considering the rectangle formed by the two samples). 

To formalize this intuition, we will consider the  distribution obtained by performing order sampling under a pair of distributions composed of the square-edge distributions.
{In particular, imagine the following scenario, which can be thought of 
as a toy example of $\Ak$ closeness testing for $k=4$.
In the YES case, we have
$\p_{\Yes} = \q_{\Yes} = (\t + \r)/2$. 
Then we obtain order sampling with $m$ samples drawn 
from $\p_{\Yes}, \q_{\Yes}$, 
according to \Cref{def:order-sampling}. 
The resulting order tuple will then have the distribution
$\mathcal D( (\t + \r)/2, (\t + \r)/2, m)$ 
over $\mathbb S_m \times \mathbb S_m \times \{0,1\}^m$.
In the NO case, with probability $1/2$, we have $\p_{\No} = \t, \q_{\No} = \r$. 
Otherwise, we have $\p_{\No} = \r$ and $\q_{\No} = \t$.
Then, if we perform order sampling with $m$ samples from $\p_{\No}, \q_{\No}$, 
we obtain an order tuple following the uniform mixture of 
$\frac{1}{2} \lp( \mathcal D(  \t, \r, m ) + \mathcal D(\r, \t, m) \rp)$.
Notice that in the YES case, we have $\p_{\Yes} = \q_{\Yes}$; 
in the NO case, we have $\snorm{\Ak}{ \p_{\Yes} - \q{\Yes} } = 1$ deterministically, 
even for $k = 4$.
Yet, we show in the next lemma that the distributions over order-tuples 
in the two cases are the same when $m$ is no more than $3$.
This immediately gives us that no order-based algorithm 
can distinguish between the two cases with fewer than $4$ samples.
}
\begin{lemma}
\label{lem:diamond-order-match}
We have that
$
\mathcal D( (\t + \r)/2, (\t + \r)/2,  m) = 
\lp( \D(  \t,  \r, m ) +  \D(\r, \t, m) \rp)/2$ for $m = 1,2,3$.
\end{lemma}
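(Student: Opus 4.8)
The plan is to establish the claimed equality of distributions over order tuples by a direct coupling argument, using the symmetries of the square-edge construction together with \Cref{clm:equal-mass}. First I would observe that it suffices to prove, for every realization $(\sigma(x),\sigma(y),\ell)$ of an order tuple and every $m\leq 3$, that
\[
\D\big((\t+\r)/2,\,(\t+\r)/2,\,m\big)(\sigma(x),\sigma(y),\ell)
= \tfrac12\D(\t,\r,m)(\sigma(x),\sigma(y),\ell) + \tfrac12\D(\r,\t,m)(\sigma(x),\sigma(y),\ell).
\]
Both sides are computed by drawing $m$ labelled points, where the labels $\ell$ have an i.i.d.\ $\mathrm{Bernoulli}(1/2)$ distribution in all three experiments, and then recording only the $x$- and $y$-rank permutations. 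So I can condition on the label vector $\ell$ once and for all; conditioned on $\ell$, the $\Yes$ experiment draws each point from $(\t+\r)/2$, the first $\No$ sub-experiment draws the $\t$-labelled points from $\t$ and the $\r$-labelled points from $\r$, and the second draws them the other way around. Thus the target identity reduces to showing that the $\ell$-conditional distribution of the pair of rank permutations is the same in all three experiments (for any fixed $\ell\in\{0,1\}^m$, $m\leq 3$).

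Next I would exploit the key geometric symmetry: the square-edge pair $(\t,\r)$ is exchanged by the $90^\circ$ rotation of $\R^2$ about the center of the square (equivalently, by the reflection swapping the two diagonals). This rotation is an order-reversing map in one coordinate and order-preserving in the other — so on the level of rank permutations it induces a fixed involution $\tau$ on $\mathbb S_m\times\mathbb S_m$. Consequently $\D(\r,\t,m)$ is the pushforward of $\D(\t,\r,m)$ under (the appropriate coordinate version of) $\tau$ combined with swapping $0\leftrightarrow1$ in $\ell$; after averaging over the symmetric label distribution this shows that $\tfrac12\D(\t,\r,m)+\tfrac12\D(\r,\t,m)$ is invariant under $\tau$. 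Then the real content is to show this symmetrized $\No$ distribution equals the $\Yes$ distribution, and here is where \Cref{clm:equal-mass} enters: for $m=1$, the single point is distributed identically under $\t$, under $\r$, and under $(\t+\r)/2$ in terms of which of the four quadrant-regions any later point would fall in — more precisely, by \Cref{clm:equal-mass} the conditional law of ``where a second point lands relative to the first'' does not depend on whether each point came from $\t$ or from $\r$. I would formalize this by induction on the number of points: having placed $j$ points, the partition of $\R^2$ they induce into rectangular cells has, by repeated application of \Cref{clm:equal-mass} (one coordinate cut at a time through each placed point), the property that $\t$ and $\r$ assign equal mass to every cell; hence the cell into which point $j+1$ falls has a distribution independent of that point's label, and its rank data relative to the previous points is determined by the cell. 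This propagates the identity from $j$ to $j+1$ points, for $j+1\leq 3$.

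The main obstacle — and the reason the statement is restricted to $m\leq 3$ — is that \Cref{clm:equal-mass} only guarantees equal $\t$/$\r$ mass on the four quadrant-regions cut by a \emph{single} point, i.e.\ on cells that are determined by \emph{one} horizontal and \emph{one} vertical threshold. With two previously placed points one has two horizontal and two vertical thresholds, creating a $3\times 3$ grid of cells, and not all of these nine cells have equal $\t$- and $\r$-mass (the genuinely ``two-sample'' rectangle spanned by the two points is exactly where $\t$ and $\r$ differ — that is the whole point of the construction). So the inductive step I sketched above actually needs care: for the third point I can only use the coarsening of this grid that \Cref{clm:equal-mass} controls, and I must check that the \emph{rank information} carried by the third point (its $x$-rank and $y$-rank among the three points) is a function only of a coarse region assignment on which $\t$ and $\r$ do agree. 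Verifying precisely which regions suffice — and that for $m=3$ the needed regions are indeed ones where the two distributions coincide, whereas for $m=4$ they are not — is the delicate combinatorial core of the argument; once that case check is done, assembling the three cases $m=1,2,3$ and re-inserting the i.i.d.\ label randomness to recover the full order-tuple identity is routine.
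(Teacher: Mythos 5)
Your reduction is correct and is essentially the Bayes-inverse of the paper's: the paper conditions on the geometric pattern $(\sigma(x),\sigma(y))$ and shows the conditional distribution of the label vector $\ell_{\No}$ is uniform, whereas you condition on $\ell$ and show the conditional rank distribution is experiment-independent; these are equivalent. You also correctly identify the obstacle for $m=3$: two placed points induce a $3\times 3$ grid of cells, and $\t,\r$ do \emph{not} assign equal mass to all nine of them (the spanned rectangle is exactly where they disagree), so a naive ``place one point at a time, use \Cref{clm:equal-mass} at each step'' induction breaks at step three.

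However, you stop precisely there and do not resolve it --- and the resolution is the entire content of the lemma. Worse, the fix you gesture at (``check that the rank information of the third point is a function only of a coarse region on which $\t$ and $\r$ agree'') cannot work as stated: the $x$- and $y$-ranks of a third point among all three \emph{do} require knowing which of the nine cells it lands in, so no coarsening of that grid determines the rank while also being $\t/\r$-balanced. The paper's argument for $m=3$ is not an induction on points but a targeted conditioning that avoids the $3\times 3$ grid entirely: after reducing by symmetry to $\sigma(x)=(1,2,3)$ and $\sigma(y)\in\{(1,2,3),(1,3,2)\}$, one conditions (in the first case) on the \emph{middle} point together with its label $\ell_2$, which pins down which of $(\t,\r)$ or $(\r,\t)$ is in force and renders points $1$ and $3$ independent, each living in a \emph{single} quadrant of the middle point --- so \Cref{clm:equal-mass} applies to each separately. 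In the second case one conditions on the extremal point, making points $2,3$ conditionally independent samples from one quadrant, and then combines with the first case. This decoupling trick is what the proof actually needs; also, as a minor aside, your $90^\circ$-rotation symmetry should not swap $0\leftrightarrow 1$ in $\ell$, since the rotation carries $\t$-labelled (i.e.\ $\p$-side) mass to $\r$, which is again the $\p$-side of the rotated pair $(\r,\t)$.
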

\begin{proof}
Let $(\sigma(x), \sigma(y), \ell)$ be an order tuple.
We remark that the tuple can be decomposed into two parts: (i) the permutation patterns $\sigma(x), \sigma(y) \in \mathbb S_3$, which encodes the ``geometric pattern'' of the three points sampled and (ii) a bit string $\ell \in \{0,1\}^3$, which indicates whether the samples come from $\p$ or $\q$.
Now let $(\sigma(x)_{\Yes}, \sigma(y)_{\Yes}, \ell_{\Yes}) \sim \mathcal D( (\t + \r)/2, (\t + \r)/2,  m)$
and $(\sigma(x)_{\No}, \sigma(y)_{\No}, \ell_{\No}) \sim \lp( \D(  \t,  \r, m ) +  \D(\r, \t, m) \rp)/2$.
We begin with the following observations.
\begin{enumerate}
    \item The marginal distribution over the ``geometric pattern'' is identical for the two cases, i.e.
    $
    \Pr\lp[  \sigma(x)_{\Yes} =  \pi, \sigma(y)_{\Yes} = \pi' \rp] = 
    \Pr \lp[  \sigma(x)_{\No} =  \pi, \sigma(y)_{\No} = \pi' \rp]
    $
    for all $\pi, \pi' \in \mathbb S_m$.
    This is because the samples, ignoring the labels, in both cases come from the distribution supported uniformly on the four edges of the square.
    \item The distribution of $\ell_{\Yes}$ conditioned on any ``geometric pattern'' will be uniform over all possible bit strings, i.e. $\Pr \lp[ \ell_{\Yes} = \beta |  \sigma(x)_{\Yes} =  \pi, \sigma(y)_{\Yes} = \pi' \rp]$ is the same for all $ \beta  \in \{0,1\}^{m}$ and $\pi, \pi' \in \mathbb S_m$. {This is because $(\sigma(x)_{\Yes}, \sigma(y)_{\Yes}, \ell_{\Yes})$ is obtained by performing order sampling from two identical distributions (both are $(\t + \r)/2$).}
\end{enumerate}

Hence, it suffices to show that the distribution over the label vector $\ell_{\No}$ conditioned on any geometric patterns $\sigma(x)_{\No}, \sigma(y)_{\No}$ is uniform.

With this observation in mind, the $m = 1$ case is trivial since there is only $1$ geometric pattern and it is clear that the label $\ell_{\No}$ is uniform.
For $m = 2$, let the coordinate of the first sample be $(a,b)$, which divides the space into four quadrants.
Then, by \Cref{clm:equal-mass}, it holds that no matter which of the four quadrants the second sample fall into,
the probability that the point comes from $\t$ is the same as it comes from $\r$. Hence, the uniformity of $\ell_{\No}$ follows.

For $m=3$, we make some preliminary simplifications.
Let $\{(x_i, y_i), \ell_i\}_{i=1}^m$ be three \iid samples drawn. 
Since they are all identically distributed and independent, the sampling order does not matter.
Hence, we can without loss of generality just examine the case $x_1 < x_2 < x_3 $ (and accordingly $\sigma(x) = (1,2,3)$).
Secondly, observe that our construction is invariant under reflections over $x$- or $y$-axis, and rotations of angle $\pi/4, \pi/2, 3 \pi /4$.
After reflection over the $x$-axis, any three points that have the pattern $\sigma(y) = (1,2,3)$ ($x_1 < x_2 < x_3, y_1 < y_2 < y_3$) then becomes $\sigma(y) = (3,2,1)$ ($x_1 < x_2 < x_3, y_1 > y_2 > y_3$). After rotations, the pattern $(1,3,2)$ yields $(2,3,1)$, $(2,1,3)$ and $(3,1,2)$. Hence, by symmetry, we can simply focus on the argument for $\sigma(x) = (1,2,3), \sigma(y) = (1,2,3)$ and $
\sigma(x) = (1,2,3),
\sigma(y) = (1,3,2)$.

We will begin with $\sigma(x) = (1,2,3), \sigma(y) = (1,2,3)$ and show that $\ell_{\No}$ is uniform conditioned on that.
We claim that this is true even 
if we further condition on the coordinates of the ``middle point'': we will condition on that $x_2 = x, y_2 = y$ for some arbitrarily chosen point $(x,y)$ from the support.
It is easy to see that the marginal distribution of $\ell_2$ is uniform since it only depends on whether we are sampling from $\mathcal D( \t, \r, 3)$ or $\mathcal D( \r, \t, 3)$.
For the same reason, further conditioning on the value of $\ell_2$ then completely determines whether we are sampling from $\mathcal D( \t, \r)$ or  $\mathcal D( \r, \t, 3)$.
Consequently, $(x_1, y_1, \ell_1), (x_3, y_3, \ell_3)$ are now independent samples from the lower left quadrant $R_{x,y}^{(2)}$ and upper right quadrant $R_{x,y}^{(4)}$ of the point $(x,y)$ respectively.
By \Cref{clm:equal-mass}, the amount of mass from $\p^{(i)}_{\No}$ and from $\q^{(i)}_{\No}$ in $R_{x,y}^{(2)}$ is the same. Hence, the conditional distribution for $\ell_1$ is uniform (and similarly for $\ell_3$).

Next, we will show that $\ell_{\No}$ conditioned on $\sigma(x)_{\No} = (1,2,3), \sigma(y)_{\No} = (1,3,2)$ is also uniform.
Notice that it actually suffices for us to show the uniformity of $\ell_{\No}$ conditioned on the more general event 
{$\sigma(x)_{\No} = (1,2,3)$ and $\sigma(y)_{\No}$ is either $(1,3,2)$ (the case we are analyzing now) or $(1,2,3)$ (the case analyzed in the previous paragraph).} 
If this is true, we can then combine it with the fact that $\ell_{\No}$ is uniform conditioned on $\sigma(x)_{\No} = (1,2,3), \sigma(y)_{\No}= (1,2,3)$ to {conclude that 
$\ell_{\No}$ must be uniform conditioned on $\sigma(x)_{\No} = (1,2,3), \sigma(y)_{\No} = (1,3,2)$.}
Notice that this more general event happens if and only if 
$x_1 < \min (x_2, x_3)$ and $y_1 < \min(y_2, y_3)$.
We can then use techniques similar to the analysis of the last case. In particular, we claim that this is true even if we further condition on the coordinates of the first point: $x_1 = x, y_1 = y$ for some arbitrary point $(x,y)$ from the support. The analysis is then almost the same: After we have conditioned on the value of $(x_1, y_1, \ell_1)$, $(x_2, y_2, \ell_2), (x_3, y_3, \ell_3)$ now both become independent samples from the upper right quadrant $R_{x,y}^{(4)}$. Applying \Cref{clm:equal-mass} then allows us to conclude the uniformity of $\ell_2, \ell_3$ after the conditioning.
{This finishes the argument that $\ell_{\No}$ conditioned on any geometric patterns $\sigma(x)_{\No}, \sigma(y)_{\No}$ and  
concludes the proof.}
\end{proof}

\subsection{$\Ak$ Closeness Lower Bound Construction} \label{sec:construction}
We will now use $\t , \r$ as building blocks to construct 
the full hard instance of $2$-dimensional $\Ak$ closeness testing
and establish the desired sample complexity lower bound 
$\Omega\lp( \min\lp ( k^{6/7} \eps^{-8/7} \, , \, k\rp) \rp)$.

{
We will readily apply the ``Poissonization trick'', which is a standard technique in proving 
lower bounds for distribution testing problems.
In particular, instead of drawing a fixed number of $m$ samples, 
we make the testers draw $\Poi(m)$ many samples. 
It is easy to translate any lower bound in the Poisson sampling model 
to the standard sampling model where the testers draw a fixed number of samples, 
since with probability at least $99\%$ 
the testers will receive at least $\Omega(m)$ many samples.

Furthermore, we will relax $\p, \q$ to be non-negative measures 
whose total mass is $\Theta(1)$ rather than equal to $1$. 
Clearly, taking samples from a non-negative measure $\mu$ is no longer a sensible concept.
Instead, we can take $\Poi(m \snorm{1}{\mu})$ samples from the normalized distribution 
$\mu / \snorm{1}{\mu}$.
We will slightly abuse the definition of sampling to describe the above the process 
as ``taking $\Poi(m)$ samples from $\mu$''. 
}

Lastly, since we are only proving a sample complexity lower bound that is sublinear 
with respect to $k$, we can safely assume $m < k/2$ throughout the section.

Now we are ready to describe the hard instance.
We will first partition the domain into $r^2$ squares with equal size,  for some $r = \Theta(k)$ that will be specified later. 
Most of the squares will be left blank: 
$\p,\q$ will have all their probability mass 
supported within the squares along one diagonal of the square grids. 
For each square on the diagonal, we will make it a ``heavy'' square 
with probability $m/k$ (this is a well-defined probability since $m<k$) 
and a ``light'' square otherwise, whose purpose will become clear later. 

Now consider the following random process for generating a pair of measures $\p, \q$. 
Let $X$ be a random variable that takes $0$ or $1$ 
each with probability $1/2$. If $X=0$, we will randomly generate 
a pair of measures $\p = \q$, which belongs to the YES instance. 
If $X=1$, we randomly generate a pair of measures satisfying 
$\snorm{\mathcal A_k}{\p-\q} > \Omega(\eps)$, which belong to the NO instance. 

When $X=1$, $\p,\q$ restricted to one square (after normalization) 
will be both $(\t + \r)/2$, which is the uniform distribution 
supported on the edges of a diagonal square.
Moreover, the mass of $\p$ will be $1/m$ if the square is ``heavy'' 
and $\eps/k$ if the square is ``light'' (and the same for $\q$ as well).

When $X=0$, the mass of $\p, \q$ restricted to a square 
will be the same as the case $X=1$. Yet, the conditional 
distributions within a square for $\p,\q$ will be different.
\begin{itemize}[leftmargin=*]
    \item For a ``heavy'' square, the  conditional distributions of 
    $\p,\q$ restricted to the square are still both $(\t + \r)/2$. 
    Intuitively, samples produced by the ``heavy'' squares behave the 
    same in the NO instance as in the YES instance, serving as noise to  
    ``confuse'' the algorithm.
    \item For a ``light'' square, 
    the conditional distributions of $\p,\q$ restricted to the square 
    are respectively $\t, \r$ with probability $1/2$ and $\r, \t$ 
    otherwise.
    These squares contribute to the $\Ak$ discrepancy between $\p, \q$ 
    but remain hard to distinguish from the YES case.
\end{itemize}
We first argue that the measures $\p, \q$ constructed from the random process described above qualify for basic properties of $\Ak$ closeness testing.
\begin{lemma}
\label{lem:basic-properties}
Suppose $m < k/2$. 
It holds that $\p, \q$ are positive measures with mass $\Theta(1)$ 
with probability $99\%$. Moreover, if $X=1$, we have $\p = \q$. 
If $X = 0$, we have 
{$\snorm{\Ak}{ \p/\snorm{1}{\p} - \q/\snorm{1}{\q} } > \Omega(\eps)$} 
with probability $99\%$.
\end{lemma}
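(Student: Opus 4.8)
The plan is to read the three assertions directly off the construction, the only probabilistic input being a single concentration bound on the number of heavy squares.

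\textbf{The YES case and an equal-mass observation.} Regardless of the value of $X$, both $\p$ and $\q$ assign mass $1/m$ to each heavy square and mass $\eps/k$ to each light square, so $\snorm{1}{\p} = \snorm{1}{\q} =: M$. When $X = 1$, on every square (heavy or light) the conditional distributions of $\p$ and $\q$ are both $(\t+\r)/2$; together with the equal per-square masses this forces $\p = \q$ as measures, establishing the YES-case claim. For the other two claims, write $r = \Theta(k)$ for the number of diagonal squares and let $N_h$ (resp.\ $N_\ell = r - N_h$) be the number of heavy (resp.\ light) squares; $N_h$ is a sum of $r$ independent $\mathrm{Bernoulli}(m/k)$ variables with mean $\E[N_h] = rm/k = \Theta(m)$.

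\textbf{Total mass $\Theta(1)$.} I would first note that it is no loss of generality to assume $m \ge C_0$ for a sufficiently large universal constant $C_0$: a tester using $m < C_0$ samples can be simulated by one drawing $C_0$ samples and ignoring the surplus, so it suffices to rule out testers with at least $C_0$ samples, using the hard instance built with that value of $m$. Given $m \ge C_0$, a multiplicative Chernoff bound gives, with probability at least $99\%$ (for $C_0$ large), that simultaneously $N_h \in [\tfrac12\E N_h, \tfrac32 \E N_h]$ and $N_\ell \ge \tfrac12 \E N_\ell \ge r/4$ (the last step using $m < k/2$, hence $\E N_\ell = r(1-m/k) > r/2$). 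On this event the mass of $\p$ from heavy squares is $N_h/m = \Theta(r/k) = \Theta(1)$ while the mass from light squares is $N_\ell\,\eps/k \in [\Omega(\eps),O(\eps)]=O(1)$ (as $\eps \le 2$), so $M = \Theta(1)$; the same holds for $\q$. Non-negativity is immediate, since $\p, \q$ are finite sums of scaled probability measures.

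\textbf{The NO case.} Condition on $X = 0$ and on the $99\%$ event above, so that $N_\ell \ge r/4 = \Omega(k)$ and $M = \Theta(1)$. Only light squares contribute to $\snorm{\Ak}{\p - \q}$: on each light square the conditionals of $\p$ and $\q$ are $\t$ and $\r$ in some order, each scaled by $\eps/k$. In local coordinates for a diagonal square (vertices $(0,\pm 1)$, $(\pm 1, 0)$), I would take $R$ to be the bounding box $[-1,0]\times[0,1]$ of the upper-left edge; this rectangle contains all of that edge and only finitely many points of the other three edges, so $\t(R) = 1/2$ and $\r(R) = 0$, whence $|\p(R) - \q(R)| = \tfrac{\eps}{k}|\t(R)-\r(R)| = \tfrac{\eps}{2k}$, independently of which of $\t,\r$ was assigned to $\p$. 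Taking one such rectangle in each of $\min(N_\ell, k) = \Omega(k)$ light squares yields at most $k$ pairwise disjoint axis-aligned rectangles (they lie in distinct squares), so $\snorm{\Ak}{\p - \q} \ge \Omega(k)\cdot\tfrac{\eps}{2k} = \Omega(\eps)$. Since the $\Ak$-distance is positively homogeneous, $\snorm{\Ak}{\p/\snorm{1}{\p} - \q/\snorm{1}{\q}} = M^{-1}\snorm{\Ak}{\p - \q} = \Omega(\eps)$, and a union bound over the events invoked preserves the $99\%$ success probability.

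\textbf{Expected obstacle.} I do not anticipate a real obstacle: this is a sanity-check lemma. The one genuine idea is that a single axis-aligned rectangle already captures a constant fraction of the total-variation mass separating the square-edge distributions $\t$ and $\r$; the one point needing care is the (standard) reduction to $m \ge C_0$, absent which the total mass could, for tiny $m$ and tiny $\eps$, occasionally be only $\Theta(\eps)$ rather than $\Theta(1)$. The substantive work of the lower bound lives in \Cref{lem:order-reduction} and \Cref{lem:diamond-order-match}, not here.
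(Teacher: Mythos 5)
Your proof is correct and follows essentially the same route as the paper's: concentration of the number of heavy squares gives $\snorm{1}{\p} = \snorm{1}{\q} = \Theta(1)$, the YES case is immediate from equal per-square masses and identical conditionals, and for the NO case you exhibit $\Omega(k)$ disjoint rectangles each capturing $\Omega(\eps/k)$ of discrepancy from a light square, then divide through by the common normalizer. Your single bounding-box rectangle per light square is a minor simplification over the paper's four sub-squares (both deliver the needed $\Omega(\eps/k)$, and yours sidesteps the paper's tuning of $r = ck$ so that $4r' < k$), and you correctly flag a boundary case the paper leaves tacit: the Chebyshev/Chernoff concentration on $N_h$ requires $rm/k$, hence $m$, to exceed a constant, since otherwise $N_h = 0$ occurs with constant probability and the total mass is only $O(\eps)$; your WLOG reduction to $m \ge C_0$ is the standard and correct patch, harmless because the lower bound is trivial for constant $m$.
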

\begin{proof}
We first  verify that $\p, \q$ are both measures with mass $\Theta(1)$ with probability $99\%$. 
By Chebyshev's inequality, we have that
the number of heavy squares is $
r \; \frac{m}{k} \pm \Theta(1) \; \sqrt{r \; \frac{m}{k}} = 
\Theta(1) \; \frac{rm}{k}
$ with probability $99\%$.
Conditioned on that, the contribution of the heavy squares to mass is
$\Theta(1) \; \frac{rm}{k} \; \frac{1}{m} = \Theta(1)$ given that $r = \Theta(k)$.
The contribution of the light squares is at most $r \; \frac{\eps}{k} = O(\eps)$.
Hence, we have the total mass will be $\Theta(1)$.

If $X=1$, it is easy to see that $\p = \q$. 
If $X=0$, for each light square $R$, recall that $\p, \q$ restricted to $R$ are exactly the square edge distributions after normalization.
By the definition of the square edge distribution, 
there exists $4$ sub-squares $R_1, R_2, R_3, R_4$ such that 
for each $R_i$, exactly one of $\p(R_i), \q(R_i)$ is $0$ and the other one is $\eps/(2k)$.
We have seen that $\snorm{1}{\p}, \snorm{1}{\q}$ are both $\Theta(1)$.
Hence, we have
$\sum_{i=1}^4 \abs{\p(R_i) / \snorm{1}{\p} - \q(R_i) /  / \snorm{1}{\q}} \geq \Omega(\eps / k)$.
Moreover, with probability $99\%$,
the number of light squares is
$ r \; \lp(1 - \frac{m}{k} \rp) \pm  \Theta(1) \; \sqrt{ r \; \lp(1 - \frac{m}{k} \rp) }
= \Theta(1) \; r 
$ since $m/k < 1/2$.
Conditioned on this, if we choose $r =  c \; k$ for a sufficiently small constant $c$, 
we ensure that there are $r' = \Theta(1) \; c \; k$ light squares.
Notice that if $c$ is chosen appropriately, we can ensure $\Omega(k) < r' < k/4$.
Therefore, there exists $k' = 4 \; r' < k$ rectangles such that
$
\sum_{i=1}^{k'} \abs{ \p(R_i)/\snorm{1}{\p} - \q(R_i)/\snorm{1}{\q} }
= \eps/k  \; r' = \Omega(\eps).
$
\end{proof}

Let $T$ be the tuple obtained from the order sampling process $\D(\p, \q, m')$, 
where $\p, \q$ are the pair of random measures described above and $m' \sim \Poi(m)$. 
We will bound above the mutual information $I(X:T)$, 
{implying} that $T$ reveals little information of the random variable $X$.
The {implication} argument is standard, see, e.g., the proof of Theorem 16 from \cite{diakonikolas2015optimal}.
In particular, we try to bound the information about $X$ obtained from samples 
falling in each of the squares.
In \cite{diakonikolas2015optimal}, we have that squares with fewer than two samples are uninformative.
By \Cref{lem:diamond-order-match}, we can further ignore the squares with three samples, 
therefore allowing
us to obtain a stronger lower bound.

{Our key technical lemma is the following:}

\begin{lemma} \label{lem:mutual-info-bound}
We have that $I(X:T) = O( m^7 \eps^8 / k^6  )$.
\end{lemma}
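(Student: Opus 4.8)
\textbf{Proof proposal for Lemma~\ref{lem:mutual-info-bound}.}

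The plan is to decompose the order tuple $T$ along the $r=\Theta(k)$ boxes on the diagonal (off-diagonal boxes carry no mass and can be ignored), bound $I(X:T)$ by a sum of per-box contributions, and show each diagonal box contributes only $O(m^7\eps^8/k^7)$. Since we have Poissonized, the number of samples $n_i$ landing in box $i$ is an independent $\Poi(m w_i)$ where $w_i$ is the (unnormalized) combined mass of the box, so $n_i\sim\Poi(1)$ for a heavy box and $n_i\sim\Poi(\lambda)$ with $\lambda=\Theta(m\eps/k)$ for a light box, and the counts are independent of $X$. Moreover, because the boxes sit along a single diagonal, the cross-box order relations are deterministic given the counts, so $T$ is equivalent to the pair $(A,(T_i)_i)$, where $A=(n_i)_i$ is the count profile and $T_i=(\sigma(x)_i,\sigma(y)_i,\ell_i)$ is the \emph{within-box} order tuple of box $i$; given $A$ and $X$ the $T_i$ are mutually independent. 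Since $A\perp X$, the chain rule plus subadditivity of mutual information over conditionally independent coordinates gives
$$I(X:T)=I(X:(T_i)_i\mid A)\le \sum_{i=1}^{r}\E_{n_i}\bigl[I(X:T_i\mid n_i)\bigr].$$
Also, if $m^7\eps^8/k^6\ge 1$ the bound is trivial since $I(X:T)\le\log 2$, so we may assume $m^7\eps^8/k^6\le 1$, which in particular forces $\lambda$ below an absolute constant.

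For the per-box analysis, let $\mu_j:=\D((\t+\r)/2,(\t+\r)/2,j)$ be the within-box order distribution of $j$ i.i.d.\ diagonal-square samples with uniform labels, and let $\nu_j:=\tfrac12\bigl(\D(\t,\r,j)+\D(\r,\t,j)\bigr)$ be the one produced by a \emph{light} box in the NO case. Conditioned on $n_i=j$, box $i$ yields $T_i\sim\mu_j$ when $X=1$ (heavy and light boxes are both $(\t+\r)/2$ in the YES case), and $T_i\sim(1-\theta_j)\mu_j+\theta_j\nu_j$ when $X=0$, where $\theta_j:=\Pr[\text{box }i\text{ light}\mid n_i=j]$. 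By \Cref{lem:diamond-order-match} we have $\nu_j=\mu_j$ for $j\le 3$, hence $I(X:T_i\mid n_i=j)=0$ for $j\le3$. For $j\ge 4$, writing the mutual information of a binary $X$ as a Jensen--Shannon divergence and using $\mathrm{JS}(P,Q)\le\tfrac12\chi^2(Q\|P)$ (which follows from $D\le\chi^2$ and $\tfrac1{p+q}\le\tfrac1p$), a one-line computation gives
$$I(X:T_i\mid n_i=j)\le \tfrac12\,\chi^2\!\bigl(\mu_j \,\big\|\, (1-\theta_j)\mu_j+\theta_j\nu_j\bigr)\le \frac{\theta_j^2}{1-\theta_j}\,\chi^2(\nu_j\|\mu_j)\le 2\,\theta_j^2\,\chi^2(\nu_j\|\mu_j),$$
the last step using $\theta_j\le 1/2$ (valid in our regime, since $\theta_j\le ek\lambda^j/m\le e\,m^3\eps^4/k^3\le e/\sqrt m$ for $j\ge4$). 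Here $\chi^2(\nu_j\|\mu_j)$ is a finite \emph{universal} constant for each $j$: $\mathrm{supp}(\nu_j)\subseteq\mathrm{supp}(\mu_j)$ since any label is consistent with its position only in a subset of configurations, and $\chi^2(\nu_j\|\mu_j)$ grows at most exponentially in $j$. \emph{This step --- and specifically the appearance of $\theta_j^2$ rather than $\theta_j$ --- is the crux of the argument and the main obstacle.} The naive route of revealing the heavy/light structure and using Hellinger subadditivity only bounds $I(X:T)$ by $\sum_{\text{light }i}H^2(P_i^{\mathrm{YES}},P_i^{\mathrm{NO}})=O(m^4\eps^4/k^3)$, which is too weak. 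The point is that when the heavy/light labels are \emph{not} revealed, a box with $\ge 4$ samples is light only with the tiny posterior probability $\theta_j$, and passing through a second-moment ($\chi^2$-type) divergence --- together with lower-bounding the mixture denominator $(1-\theta_j)\mu_j+\theta_j\nu_j$ by its heavy part $(1-\theta_j)\mu_j$ --- is exactly what extracts the extra factor $\theta_j\approx \lambda^j k/m$ and upgrades $m^4\eps^4/k^3$ to the claimed $m^7\eps^8/k^6$.

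It then remains to do the arithmetic. Using $\Pr[n_i=j]\ge (m/k)e^{-1}/j!$ (the heavy-box contribution), one gets
$$\Pr[n_i=j]\,\theta_j^2=\frac{\bigl((1-m/k)e^{-\lambda}\lambda^j/j!\bigr)^2}{\Pr[n_i=j]}\le \frac{e\,k}{m}\cdot\frac{\lambda^{2j}}{j!},$$
so $\E_{n_i}[I(X:T_i\mid n_i)]\le \tfrac{2ek}{m}\sum_{j\ge4}\tfrac{\lambda^{2j}}{j!}\chi^2(\nu_j\|\mu_j)=O\!\bigl(\tfrac{k}{m}\lambda^8\bigr)$, since the $j=4$ term dominates (the series converges and is $\Theta(\lambda^8)$ for $\lambda$ below an absolute constant, given only exponential growth of $\chi^2(\nu_j\|\mu_j)$). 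Summing over the $r=\Theta(k)$ diagonal boxes and substituting $\lambda=\Theta(m\eps/k)$ yields
$$I(X:T)\le \sum_{i=1}^{r}\E_{n_i}[I(X:T_i\mid n_i)]=O\!\left(k\cdot\frac{k}{m}\cdot\frac{m^8\eps^8}{k^8}\right)=O\!\left(\frac{m^7\eps^8}{k^6}\right),$$
which is the claim. The remaining bookkeeping is routine: checking that $T$ really does reconstruct from $(A,(T_i)_i)$ (the diagonal geometry), that the Poisson thinning makes the $n_i$ independent with means not depending on the random total masses, and absorbing the finitely many small-$m$ cases into the $O(\cdot)$.
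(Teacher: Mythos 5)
Your proof is correct and uses essentially the same argument as the paper: decompose $T$ per diagonal box, invoke \Cref{lem:diamond-order-match} to annihilate the contributions from boxes with at most three samples, and extract a quadratic gain from the posterior probability that a box is light given its count (the paper's $\bigl(\Pr[|T_i|=\lambda\mid H_i=0]\bigr)^2/\Pr[|T_i|=\lambda,\,H_i=1]$ is exactly your $\Pr[n_i=j]\,\theta_j^2$), with the density-ratio bound $2^\lambda$ playing the role of your $\chi^2(\nu_j\|\mu_j)\le 2^j$. Your packaging --- conditioning on the count profile $A$ first and then bounding each $I(X:T_i\mid n_i=j)$ via the $\chi^2$-of-a-mixture inequality --- is arguably cleaner, but the underlying computation and the key observations are the same as the paper's.
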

\begin{proof}
Let $Y = \{ (x_i, y_i), \ell_i \}_{i=1}^{m'}$, where $m' \sim \Poi(m)$, be the sample points drawn. Namely, $T = \Order(Y)$. Denote by $Y_i$ the set of points in the $i$-th square along the diagonal and define the tuple $T_i = \Order(Y_i)$. 
One can easily reconstruct $T$ from $\{T_1, \cdots, T_r\}$: Given $i < j$, all points from the $i$-th square will be ranked after points from the $j$-th square in both $x$ and $y$ coordinates in $T$. This hence gives us that $I(X : T) \leq \sum_{i=1}^r I(X : T_i)$. Next, we will bound $I(X:T_i)$ by $ O(m^7 \eps^8 / k^7) $. 
Our lemma easily follows from that since we also have $r = \Theta(k)$. 
We first bound the mutual information as a summation over all possible order tuples grouped by the size of the order tuple (recall that for an order tuple $t$, the size of the order tuple, denoted as $\abs{t}$, is simply the number of samples from which the order tuple is derived). We have that
\begin{align*}
    I(X: T_i)
    \leq O(1) \; \sum_{\lambda=0}^{\infty} \hspace{0.5em}
    \underset{\text{order tuple } t: \abs{t} = \lambda}  \sum
    \frac{
    \lp( \Pr \lp[ T_i = t | X = 0 \rp]
    - \Pr \lp[ T_i = t | X = 1 \rp]
    \rp)^2
    }
    {\Pr \lp[ T_i = t \rp]}.
\end{align*}
We will use the indicator variable $H_i$ to denote whether the $i$-th square is chosen to be a ``heavy'' square.
Notice that 
$\Pr[H_i = 0] = 1 - \frac{m}{k} = O(1)$ and $H_i$ is independent of $X$.
Furthermore, if the $i$-th square is chosen to be a heavy square, the distribution of $T_i$ conditioned on $X = 0$ and $X = 1$ is exactly the same. This gives us that
\begin{align*}
I(X: T_i)
    \leq
    O(1)
    \cdot
    \sum_{\lambda=0}^{\infty}
    \sum_{ t: \abs{t} = \lambda }
    \frac{
    \lp( \Pr \lp[ T_i = t | H_i = 0, X = 0 \rp]
    - \Pr \lp[ T_i = t | H_i = 0, X = 1\rp]
    \rp)^2
    }
    {\Pr \lp[ T_i = t \rp]}.
\end{align*}
Next, we note that
$\Pr \lp[ T_i = t | X = 0, H_i = 0\rp]$ for $|t| = \lambda$ 
is given by the distribution 
$\frac{1}{2} 
\D(\t, \r, \lambda) 
+ 
\frac{1}{2} 
\D(\r, \t, \lambda) 
$.
On the other hand,
$\Pr \lp[ T_i = t | X = 1, H_i = 0\rp]$ for $|t| = \lambda$
is given by the distribution 
$
\D( (\t + \r)/2 , (\t + \r)/2, \lambda).
$
Hence, by Lemma \ref{lem:diamond-order-match},
it holds
$$
\Pr \lp[ T = t| H_i = 0, X = 0\rp] = \Pr \lp[ T = t| H_i = 0, X = 1\rp]
$$ 
for any $t$ satisfying $ |t| \leq 3$. 
This allows us to discard the summation over any $t$ with $\abs{t} \leq 3$.
Hence, the expression can be further upper bounded by
\begin{align}
    &O(1)  
    \sum_{\lambda=4}^{\infty}
    \sum_{ t: \abs{t} = \lambda }
    \frac{
    \lp( \Pr \lp[ T_i = t|H_i = 0, X = 0 \rp]
    - \Pr \lp[ T_i = t|H_i = 0, X = 1 \rp]
    \rp)^2
    }
    {\Pr \lp[ T_i = t \rp]}  \nonumber
    \\
    &\leq O(1)
    \sum_{\lambda=4}^{\infty}
    \sum_{ t: \abs{t} = \lambda }
    \frac{
    \lp( 
    \Pr[T_i = t| H_i = 0]
    \rp)^2
    }
    {\Pr \lp[ T_i = t, H_i = 1 \rp]}  \nonumber \\
     &\leq O(1)
     \sum_{\lambda=4}^{\infty}
     \max_{ t: |t| = \lambda }
     \frac{ \Pr[T_i = t | H_i = 0] }
     { \Pr[T_i = t, H_i = 1] }
     \sum_{ t: \abs{t} = \lambda }
     \Pr[ T_i = t |H_i = 0 ]   \nonumber \\
     &= O(1)
     \sum_{\lambda=4}^{\infty}
     \max_{ t: |t| = \lambda }
     \frac{ \Pr[T_i = t | H_i = 0] }
     { \Pr[T_i = t, H_i = 1] } \; 
     \Pr[ |T_i| = \lambda | H_i = 0 ]  
     \label{eq:mutual-info-bound-1}
     \, ,
\end{align}
where in the second line above we upper bound
the difference in the numerator by their sum and upper bound the denominator by $\Pr[T_i = t, H_i = 1]$, in the third line above we use that $\sum_i a_i \; b_i \leq \lp(\max_i a_i\rp) \; \lp(\sum_i b_i \rp)$ when $a_i, b_i \geq 0$
and in the final equality we note that the summation over the probability of $T_i = t$ for each $|t| = \lambda$ is exactly 
that of $|T_i| = \lambda$.
Next we claim that
\begin{align}
\label{eq:uniform-l}
\max_{t: |t| = \lambda} \frac{ \Pr[T_i = t | H_i = 0] }
{ \Pr[T_i = t, H_i = 1] }
\leq O(1) \; 2^{\lambda} \;  \frac{ \Pr[ |T_i| = \lambda | H_i = 0 ] } { \Pr[|T_i| = \lambda, H_i = 1] } \;.
\end{align}
To show this, we first remark that
\begin{align} \label{eq:condition-chain-rule}
\max_{t: |t| = \lambda} \frac{ \Pr[T_i = t | H_i = 0] }
{ \Pr[T_i = t | H_i = 1] }
=
\max_{t: |t| = \lambda} \frac{ \Pr[T_i = t | H_i = 0, |T_i| = \lambda] }
{ \Pr[T_i = t | H_i = 1, |T_i| = \lambda] }
\; 
 \frac{ \Pr[ |T_i| = \lambda  | H_i = 0]  }{ \Pr[ |T_i| = \lambda  | H_i = 1]}.    
\end{align}
Then recall that $T_i$ can be decomposed into a binary vector $\ell_i$ representing the labels and a permutation tuple $\sigma_i \in S_{\lambda} \times S_{\lambda}$ representing the rank information of $x$ and $y$ coordinates. 
We note that $\sigma_i | H_i = 0, |T_i| = \lambda$ has the same distribution as $\sigma_i | H_i = 1, |T_i| = \lambda$. 
Then, conditioned on $\sigma_i$, the distribution of $\ell_i$ is uniform when $H_i = 1$.
This then gives
$$
\max_{t: |t| = \lambda} \frac{ \Pr[T_i = t | H_i = 0, |T_i| = \lambda] }
{ \Pr[T_i = t | H_i = 1, |T_i| = \lambda] }
\leq O(2^{\lambda}).
$$
Combining this with Equation~\eqref{eq:condition-chain-rule} and 
multiplying both sides by $\frac{1}{ \Pr[H_i = 1] }$
then gives~\eqref{eq:uniform-l}.
Substituting Equation~\eqref{eq:uniform-l} into 
Equation~\eqref{eq:mutual-info-bound-1} then gives us
\begin{align*}
     I(X:T_i) 
    &\leq
    O(1) \cdot
    \sum_{\lambda=4}^{\infty}
    2^{\lambda} \;\frac{     \lp( 
    \Pr\lp[ \abs{T_i} = \lambda \big| H_i = 0  \rp]
    \rp)^2 }
    { \Pr \lp[ |T_i| = \lambda, H_i = 1\rp] }.  
\end{align*}
Finally, notice that $\Pr[|T_i| = \lambda | H_i = 1] = \Poi( 
1, \lambda )
= \Theta(1) / \lambda!
$, $\Pr[H_i = 1] = m/k$, and 
$\Pr[ |T| =  \lambda | H_i = 0 ] = \Poi( \eps \;m/k, \lambda  ) \leq  (\eps \;m/k)^{\lambda} / \lambda!$. This further gives
\begin{align*}
I(X:T_i)
&\leq O(1) \;\sum_{\lambda=4}^{\infty}
\frac{2^\lambda}{\lambda!}
 \;
\frac{ k }
{ m  }
\;\lp( \frac{\eps m}{k} \rp)^{2\lambda}\\
&= O(1) \;\frac{k}{m} \;\sum_{\lambda=4}^{\infty} \lp( \sqrt{2} \;\frac{\eps \;m}{k} \rp)^{2 \lambda}
\leq O(1) \;\lp( \frac{m}{k} \rp)^7 \;\eps^8.
\end{align*}
This concludes the proof of \Cref{lem:mutual-info-bound}.
\end{proof}
We are now ready to conclude the proof of our main lower bound result.

\begin{proof}[Proof of Lower Bound in Theorem~\ref{thm:main-intro}]
By Lemma~\ref{lem:basic-properties}, given that $m < k/2$, 
it holds that both $\p, \q$ are measures of mass $\Theta(1)$ with probability 
at least $99\%$ and if $X = 0$, it holds 
$\snorm{\Ak}{\p - \q} > \Omega(\eps)$ with probability at least $99\%$.
By Lemma~\ref{lem:mutual-info-bound}, we have that the mutual information 
between the random bit $X$ and the ordering tuple 
$T \sim \D(\p, \q, m')$, for $m' \sim \Poi(m)$, 
is at most $O( m^7 \eps^8 / k^6 )$. 
{This means that no algorithm, given $T$ as input, can reliably predict
the value of $X$ with probability more than $2/3$ 
unless $m > \Omega(1) \;\min \lp( k^{6/7} / \eps^{8/7}, k\rp) $.
By \Cref{lem:basic-properties}, it holds that 
$\p / \snorm{1}{\p}$, $\q / \snorm{1}{\q}$ are a pair of identical distributions 
if $X = 0$ and a pair of distributions that are $\Omega(\eps)$ far in $\Ak$ distance 
with probability at least $99\%$ if $X=1$.
Furthermore, with probability $99\%$, $T$ is an order-tuple of at most $O(m)$ many samples.
Therefore, we conclude that the sample complexity of $\Ak$ testing 
is at least $\Omega(1) \;\min \lp( k^{6/7} / \eps^{8/7}, k\rp)$.}

Even though the distributions $\p, \q$ used in the construction are continuous, 
we next show that they can be easily ``rounded'' to discrete distributions 
that remain hard for the testing algorithm. In particular,  we can construct 
a grid $\mathcal G$ which splits the domain into $\Theta \lp( m^6 \rp)$ squares 
such that the mass of any square $R \in \mathcal G$ under $(1/2)(\p+\q)$ is bounded by $m^3$. 
Then, we consider the discrete distributions $\p', \q'$ which round 
the points falling in the square $R \in \mathcal G$ to its top-left vertex. 
It is easy to see that if $\p = \q$, then $\p' = \q'$.
Moreover, for an arbitrary rectangle $R \subset \R^2$, 
we have $ \p(R) - \q(R)  =  \p'(R) - \q'(R) \pm \Theta( \frac{1}{m^3} )$.
Hence, the effect of rounding to the $\Ak$ distance between $\p,\q$ 
is at most $\Theta(k/m^3) $, which can be safely ignored when $m<k$.
On the other hand, $\D(\p, \q, m')$ is nearly the same as $\D(\p', \q', m')$, 
since the distributions over the order tuples are the same as long 
as no two points fall in the same square 
(which happens with probability at most $O(1/m)$).
Hence,  the cases $\p' = \q'$ and $\snorm{\mathcal A_k}{\p'-\q'} > \eps$  
are also hard to distinguish given tuples from the order sampling process 
unless $m > \Omega(1) \;\min \lp( k^{6/7} / \eps^{8/7}, k\rp) $.
Finally, by Lemma~\ref{lem:order-reduction}, we can translate any lower bound 
under order sampling back to the usual direct sampling.
\end{proof}

\subsection{Domain Size Optimization} \label{sec:domain-optimize}
The lower bound of Theorem~\ref{thm:main-intro} holds only when the domain size $N$ is substantially larger than the other parameters. 
In particular, the statement does not quantitatively characterize the sample complexity 
as a function of the domain size. 
The bottleneck of the analysis lies in 
\Cref{lem:order-reduction}, which offers 
an inefficient (in terms of the size of the domain after the transformation) way of transforming the 
domain to ``hide'' the extra information  that an algorithm can extract from the samples in addition to 
their relative order. In this section, we provide a more efficient and constructive way to disguise the 
information in the values of each samples' coordinates and build on it to provide a tighter lower bound 
in terms of the domain size. 

{The main result of this section is the following:}

\begin{theorem}[Stronger Lower Bound for Discrete Distributions] \label{thm:lower-bound-refined}
Fix an integer $V > 0$.
Let $\p$ and $\q$ be distributions on $[V] \times  [V]$ and let $\eps > 0$ be less than a sufficiently small
constant. 
Any tester that distinguishes between $\p = \q$ and $\snorm{\Ak}{ \p - \q } \geq \eps$ 
for some $k \leq V$ 
with probability at least $2/3$
must use at least $m$ many samples for some $m$ with 
$$
m \geq \Omega(1) \cdot \min \lp( 
k^{2/3} \eps^{-4/3} \cdot \lp( \frac{ \log \log V }{ \log \log \log V} \rp)^{1/3} \, , \, 
k^{6/7} \eps^{-8/7} \, , \, k
\rp).
$$
\end{theorem}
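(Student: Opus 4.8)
The plan is to mimic the three-part lower bound of the previous subsection, but to replace the Ramsey-theoretic domain reduction (\Cref{lem:order-reduction}) — which blows up the domain to a tower of exponentials — with an explicit randomized ``obfuscation'' of the numerical coordinate values. At a high level, I keep exactly the same hard instance as in \Cref{sec:construction}: partition $[V]\times[V]$ into $r^2=\Theta(k^2)$ boxes along a diagonal, designate each diagonal box ``heavy'' with probability $m/k$ and ``light'' otherwise, and inside each box place the square-edge gadget $(\t+\r)/2$ (YES, and heavy boxes in NO) or the pair $(\t,\r)$/$(\r,\t)$ (light boxes in NO). By \Cref{lem:diamond-order-match}, any box receiving $\le 3$ samples is information-theoretically useless to an \emph{order-based} tester. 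The new ingredient is a lemma stating that, with the ``right'' random monotone reparametrization of the $x$- and $y$-axes, even a tester that sees the exact numerical values of any set of $\le 3$ sample points within one box learns essentially nothing, \emph{provided} the domain resolution $V$ is large enough as a function of $k,\eps,m$.

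The first main step is to make this obfuscation lemma precise. Following the discussion in \Cref{sec:techniques}, I would first apply to each of the two coordinates an exponential map $u \mapsto \exp(\exp(\lambda) u)$ with $\lambda$ drawn uniformly from a window of width $\Theta(\log\log V / \log\log\log V)$ (this is the source of the $(\log\log V/\log\log\log V)^{1/3}$ factor): the point is that for any three reparametrized points $a<b<c$ that are not too close on the original grid, the affine-invariant quantity $(a-c)/(b-c)$ becomes $\exp(\exp(\lambda+\log(a-b)))$ up to exponentially small error, and when $\lambda$ is large relative to $\log(a-b)$ this has nearly the same distribution regardless of the underlying configuration — so the ratio is ``washed out.'' Then, exactly as in Lemma 2.5 of \cite{DKN17}, I apply a random affine transformation to hide the remaining two degrees of freedom (the average and the scale / pairwise difference) of any $\le 3$ points. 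Combining these, I would show: for any two sets of at most three sample points with the same order type, after the composed random transformation their induced distributions over numerical values are within $o(1)$ in total variation (as long as $V$ exceeds the stated threshold). This is the step I expect to be the main obstacle — the exponential-then-affine composition has to simultaneously kill the ratio $(a-c)/(b-c)$, the mean, and the scale of a triple, and one must carefully track how large $V$ must be (in terms of $k,\eps,m$) for all the ``not too close'' and ``$\lambda$ large'' events to hold with the required probability, since the three sample points inside a single light box can be adversarially placed.

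With the obfuscation lemma in hand, the rest follows the template of \Cref{lem:mutual-info-bound} almost verbatim. I let $T$ be the transformed samples (now including numerical values, not just the order tuple), decompose $I(X:T)\le \sum_{i=1}^r I(X:T_i)$ over diagonal boxes, and bound $I(X:T_i)$ by a sum over $\lambda = |T_i|$; boxes with $\lambda\le 3$ contribute $o(1/r)$ by the obfuscation lemma (this is the exact analogue of using \Cref{lem:diamond-order-match}, but now valid for the stronger, value-revealing tester on a merely doubly-exponential domain), and for $\lambda\ge 4$ I reuse the $2^\lambda$-type bound from \Cref{lem:mutual-info-bound} together with the Poisson tail estimates $\Pr[|T_i|=\lambda\mid H_i=1]=\Theta(1)/\lambda!$ and $\Pr[|T_i|=\lambda\mid H_i=0]\le(\eps m/k)^\lambda/\lambda!$. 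Summing gives $I(X:T) = O(m^7\eps^8/k^6) + (\text{error from the }\lambda\le 3\text{ terms})$, where the second term is $o(1)$ as soon as $V \ge$ the stated threshold, i.e. as soon as $m \lesssim k^{2/3}\eps^{-4/3}(\log\log V/\log\log\log V)^{1/3}$. Taking the minimum of this constraint, the constraint $m = O(k^{6/7}\eps^{-8/7})$ coming from the $\lambda\ge4$ terms, and the standing assumption $m<k/2$, and invoking Fano/standard mutual-information-to-testing conversion (as in Theorem 16 of \cite{diakonikolas2015optimal}), yields the claimed bound; the final ``rounding to a discrete $[V]\times[V]$ grid'' argument is identical to the one at the end of \Cref{sec:construction}.
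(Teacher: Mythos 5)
Your proposal is correct and takes essentially the same route as the paper's Section~\ref{sec:domain-optimize}: replace the Ramsey-theoretic domain compression with a per-box random monotone reparametrization of each axis (an exponential $u\mapsto\exp(\exp(\lambda_1)u)$ composed with a random affine map, \Cref{def:monotonic-mapping}), show that for at most three samples per box the YES and NO conditional laws become $O(\log\log\log W/\log\log W)$-close in total variation (\Cref{lem:tv-to-u,lem:three-point-info}), and then reuse the mutual-information decomposition of \Cref{lem:mutual-info-bound}. The one small slip is the interval width for the exponential's rate parameter: in the paper $\lambda_1$ is uniform on a window of width $\Theta(\log\log W)$, not $\Theta(\log\log W/\log\log\log W)$; the $\log\log\log W/\log\log W$ error rate instead comes from the $\Theta(\log\log\log W)$ endpoint slop in aligning the pushforward interval with a fixed uniform, and the exponent $1/3$ only emerges at the end when one solves $m^3\eps^4/k^2\cdot\frac{\log\log\log V}{\log\log V}\approx 1$ for $m$.
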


Before presenting the transformation formally, we provide some high level intuition. 
Recall that in the lower bound construction from \Cref{sec:construction} 
the domain is partitioned into $r \times r$ many squares 
where $r = \Theta(k)$ and the distributions are supported on squares lying on the diagonal. 
The argument then proceeds to bound the order information of samples coming from each of the squares.
Now suppose that the algorithm is allowed to look at the absolute coordinates of the samples. 
If only $1$ or $2$ points fall in some square, 
the only extra information we need to hide is its absolute position 
and the distance between the points. 
To do so, we can generalize the techniques developed in \cite{DKN17} 
to randomly scale and shift the square in both the $x$ and $y$-axis.

For $2$-dimensional $\Ak$ closeness testing, 
if the algorithm takes $\Theta( k^{6/7} \eps^{-8/7} )$ many samples, 
since there are $\Theta(k)$ many squares in total, 
$3$ or more samples could fall in the same square.
Then the algorithm also gets to see the ratio of distances 
between different pairs of points, which remains invariant 
even if the coordinates of the points are scaled uniformly within the square.
To handle this, we will instead apply an uneven scaling on different parts of the square. 
In particular, we map points with $x$-coordinate $a$ to $\exp(a \;\lambda)$ 
with some randomly chosen $\lambda$ (and the same for the $y$-coordinate), 
which then makes the ratio of distances also noisy.

To formalize this idea, we first define a distribution over monotonic mappings, 
which we will then use to transform the points.
\begin{definition}[Distribution over monotonic mappings] \label{def:monotonic-mapping}
Let $W > 0$.
We define $\mathcal M(W)$ as a distribution over monotonic mappings of the form 
$f: [0,1] \mapsto \R_+$. To sample a mapping from $\mathcal M$, 
we first sample three parameters $\lambda_1, \lambda_2, \lambda_3$ which are uniform variables 
over the intervals $[\log \log W, 2 \log \log W], [0, \log^3 W], [0, \exp\lp( 2\log^3 W \rp)]$ respectively.
Then, the mapping $f \sim \mathcal M(W)$ is given by
$ f(x) = \exp(x \;\exp(\lambda_1) ) \;\exp(\lambda_2) + \lambda_3  $.
\end{definition}
Let $a < b <c$ be three points lying on $[0,1]$. 
Here we show that, as long as $a,b,c$ are sufficiently separated, transforming the points by some random 
mapping $f$ from $\mathcal M(W)$ helps obfuscate the information a tester can retrieve from them.
In particular, we argue the distribution of $(f(a), f(b), f(c))$ (where the randomness is over $f$) is close to some fixed distribution $D$ for any choice of well-separated points $a,b,c$.
\begin{lemma} \label{lem:tv-to-u}
Let $f \sim \mathcal M(W)$
such that $f(x) = \exp( \exp(\lambda_1) \;x ) \;\exp(\lambda_2) + \lambda_3$.
Then, there exists some fixed distributions $D$ over the domain $\R_{+}^3$ such that
for any three points $a < b < c$ from $[0, 1]$ satisfying
\begin{align} \label{eq:separated-condition}
  \min(c-b, b-a) > 1 /\log \log W \;,
\end{align}
we always have
\begin{align*}
&\dtv( ( f(a), f(b), f(c) ), D ) \leq O \lp(  
\frac{\log \log \log W}{\log \log W}\rp).
\end{align*}
\end{lemma}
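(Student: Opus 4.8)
The plan is to write $f$ as the composition of a linear rescaling $x \mapsto u x$ with $u := \exp(\lambda_1)$, the exponential $t \mapsto \exp(t)$, and a random affine map $v \mapsto \exp(\lambda_2)\, v + \lambda_3$, and to argue that the affine map washes out everything about $(a,b,c)$ except one affine invariant of the exponentiated triple, while the (doubly–exponentially spread) choice of $u$ makes the law of that invariant nearly independent of $(a,b,c)$. This realizes the heuristic from the techniques overview: the exponential handles the cross-ratio, and then a random affine transformation in the style of \cite{DKN17} handles the rest.

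First I would condition on $\lambda_1$, hence on $u \in [\log W, \log^2 W]$, and set $\alpha := \exp(au)$, $\beta := \exp(bu)$, $\gamma := \exp(cu)$, so that $(f(a),f(b),f(c)) = \exp(\lambda_2)\,(\alpha,\beta,\gamma) + \lambda_3\,(1,1,1)$. Introducing $h := \exp(\lambda_2)(\beta-\alpha)$ and $M := \exp(\lambda_2)\beta + \lambda_3$, there is an exact identity $(f(a),f(b),f(c)) = (M-h,\ M,\ M+\rho h)$ with $\rho := (\gamma-\beta)/(\beta-\alpha)$ the affine invariant of the triple. Using the separation hypothesis $\min(c-b,b-a) > 1/\log\log W$ — which forces $(b-a)u,(c-b)u \ge \log W/\log\log W$ — I would check that $1 \le \beta-\alpha \le \exp(\log^2 W)$ and $\exp(\lambda_2)\beta \le \exp(\log^3 W + \log^2 W)$. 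Since $\lambda_2$ is uniform on $[0,\log^3 W]$ and $\lambda_3$ on $[0,\exp(2\log^3 W)]$, this makes the law of $\log h = \lambda_2 + \log(\beta-\alpha)$ within $O(1/\log W)$ of the fixed law $\mathrm{Unif}[\log^2 W,\log^3 W]$, and the law of $M$ given $\lambda_2$ within $\exp(-\Omega(\log^3 W))$ of $\mathrm{Unif}[0,\exp(2\log^3 W)]$ — both uniformly in $\lambda_1,a,b,c$. Hence, conditioned on $\lambda_1$, $(h,M)$ is within $O(1/\log W)$ of a fixed product law $H^\star\otimes M^\star$ (and the event $M^\star < H^\star$ has negligible probability, so after a negligible truncation everything lives in $\R_+^3$), and therefore $\mathrm{Law}((f(a),f(b),f(c))\mid\lambda_1)$ is within $O(1/\log W)$ of $\widetilde D_{\rho}$, the pushforward of $H^\star\otimes M^\star$ under $(h,M)\mapsto(M-h,M,M+\rho h)$, which depends on $\lambda_1$ only through $\rho=\rho(\lambda_1)$.

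Next I would analyze the law of $\rho$ as $\lambda_1$ varies. Since $(b-a)u,(c-b)u \ge \log W/\log\log W$, expanding gives $\rho = \frac{\exp((c-b)u)-1}{1-\exp(-(b-a)u)} = \exp((c-b)u)\,(1+\delta)$ with $|\delta| \le \exp(-\Omega(\log W/\log\log W))$, so $\log\log\rho = \log(c-b) + \lambda_1$ up to an exponentially small additive error; and since $1/\log\log W < c-b \le 1$ we have $|\log(c-b)| \le \log\log\log W$. Thus $\log\log\rho$ is, up to a negligible perturbation, uniform on an interval of length $\log\log W$ shifted by at most $\log\log\log W$ from $[\log\log W,2\log\log W]$, hence $O(\log\log\log W/\log\log W)$-close in total variation to a fixed uniform law; applying the monotone map $z\mapsto\exp(\exp z)$ preserves total variation and yields $\dtv(\mathrm{Law}(\rho),D^\star_\rho) = O(\log\log\log W/\log\log W)$ for a fixed distribution $D^\star_\rho$.

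Finally I would set $D := \mathbf{E}_{\rho\sim D^\star_\rho}[\widetilde D_\rho]$, a fixed distribution, and combine the two bounds: by the triangle inequality for total variation,
\[
\dtv\big((f(a),f(b),f(c)),D\big) \le \mathbf{E}_{\lambda_1}\!\left[\dtv\big(\mathrm{Law}(\cdot\mid\lambda_1),\widetilde D_{\rho(\lambda_1)}\big)\right] + \dtv\big(\mathbf{E}_{\lambda_1}[\widetilde D_{\rho(\lambda_1)}],D\big),
\]
where the first term is $O(1/\log W)$ by the second paragraph, and the second equals $\dtv\big(\mathbf{E}_{\rho\sim\mathrm{Law}(\rho)}[\widetilde D_\rho],\mathbf{E}_{\rho\sim D^\star_\rho}[\widetilde D_\rho]\big) \le \dtv(\mathrm{Law}(\rho),D^\star_\rho) = O(\log\log\log W/\log\log W)$ because $\rho\mapsto\widetilde D_\rho$ is a fixed Markov kernel and mixing is $1$-Lipschitz in total variation. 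Summing gives the claim, and for small $W$ the statement is vacuous after adjusting the universal constant. The main obstacle — indeed essentially the only delicate part — is the bookkeeping in the second paragraph confirming that the ranges of $\lambda_2,\lambda_3$ genuinely dominate the spread of the exponentiated triple (whose coordinates can reach $\exp(\log^2 W)$ and whose gaps' logarithms can reach $\log^2 W$), so that the random affine map hides all information except $\rho$; the remainder is the exponential-rescaling idea together with the affine-hiding blueprint of \cite{DKN17}.
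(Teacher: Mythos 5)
Your proposal is essentially the paper's proof in a slightly different (but bijective) parameterization: the paper works with $(A,B,C) = \lp( \tfrac{f(c)-f(a)}{f(b)-f(a)},\; f(b)-f(a),\; f(a) \rp)$ and your $(\rho, h, M)$ equals $(A-1,\, B,\, B+C)$, and both arguments then show that the double-log of the ratio is controlled by $\lambda_1$ with total variation error $O(\log\log\log W/\log\log W)$, the log of the gap by $\lambda_2$ with error $O(1/\log W)$, and the offset by $\lambda_3$ with negligible error, before assembling via the triangle inequality (your Markov-kernel phrasing being a modest conceptual repackaging of the paper's chain-rule bound). The range bookkeeping (that $\log(\beta-\alpha) \le O(\log^2 W)$ is dominated by $\lambda_2$'s range, that $e^{\lambda_2}\beta \le e^{\log^3 W + \log^2 W}$ is dominated by $\lambda_3$'s range), the negligible truncation to keep the target law on $\R_+^3$, and the final data-processing step are all fine.

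The one step where you are claiming more than you prove is ``$\log\log\rho$ is, up to a negligible perturbation, uniform.'' You have $\log\log\rho = \lambda_1 + \log(c-b) + \eta(\lambda_1)$ where $\eta$ is a function of $\lambda_1$ (depending on $a,b,c$) with $|\eta|$ exponentially small, and then treat the perturbation as if it were a small constant shift. But smallness of $|\eta|$ alone does not control total variation distance to the uniform law: a monotone map that moves every point by at most $\epsilon$ can still distort a uniform density by a constant factor. For instance, on $[0,1]$, with $\phi(x) = x + \epsilon\sin(2\pi n x)$ and $\epsilon = c/(2\pi n)$ for a fixed $c \in (0,1)$, the map $\phi$ is monotone and satisfies $|\phi(x)-x| \le \epsilon \to 0$ as $n \to \infty$, yet $\dtv\lp(\phi(U), U\rp) = c/\pi$ for $U$ uniform. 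What you actually need is that the derivative of the perturbation is small, so the pushforward density is pointwise close to the uniform density. This is precisely the content of part (iii) of the paper's \Cref{lem:function-approx} --- namely $\abs{g'_{a,b,c}(x) - 1} = O(1/\log\log W)$ --- which in turn rests on the derivative bound \eqref{eq:L-derivative-bound} on $L'_\theta$. The required derivative bound does hold in your setup for the same reason, so your proof goes through once this step is made explicit rather than absorbed into ``negligible perturbation.''
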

\begin{proof}
Define $A \eqdef 
\frac{ f(c) - f(a) }{ f(b) - f(a)  }  \, ,
B \eqdef f(b) - f(a)
\, ,
C \eqdef f(a)
$.
First, we note that it suffices to show 
$(\log \log A, \log B,C)$ is close in total variation distance to some distribution $D'$ 
for an arbitrary choice of $a,b,c$ satisfying the condition in Equation~\eqref{eq:separated-condition}
since is a bijection between  $(f(a), f(b), f(c))$ and $(\log \log A, \log B,C)$.

In particular, let $U_1, U_2, U_3$ are uniform distributions over the intervals
$[\log \log W, 2 \log \log W]$, 
$[0, \log^3 W]$ and $[0, \exp \lp(  2 \;\log^3 W \rp)]$ respectively.
We argue $(\log \log A, \log B,C)$ is close to the distribution $U_1 \times U_2 \times U_3$.
The proof strategy is the following. We first bound the total variation distance between $\log \log A$ and $U_1$. Then, conditioned on $\log \log A$ and $U_1$, we show $\log B$ is close to $U_2$. Finally, conditioned on everything other variables, we show $C$ is close to $U_3$.


Suppose $f(x) = \exp( \exp(\lambda_1) \;x + \lambda_2  ) + \lambda_3$.
We have that
$ \log \log A = g_{a,b,c}( \lambda_1 )$, 
where
$$
g_{a,b,c}(x) = \log \log \lp(   
\frac{ \exp(c \exp(x) ) - \exp(a \exp(x)) }{  \exp(b \exp(x)) - \exp(a \exp(x)) }
\rp).
$$
It is easy to verify that $g_{a,b,c}$ is monotonically increasing as a function of $x$ for any $a < b < c$.
Since $\lambda_1$ is uniform over $[\log \log W, 2 \log \log W]$, the support of $\log \log A$ will be 
$[ g_{a,b,c}( \log \log W ), g_{a,b,c}(2 \log \log W) ]$. By the change of variable rule of probability density functions, we have
\begin{align*}
    \Pr[ \log \log A = x ] = \begin{cases}
    \Pr[  \lambda_1 = g_{a,b,c}^{-1}(x)  ]
\;\frac{1}{  g_{a,b,c}'( g_{a,b,c}^{-1}(x) ) } \text{ , if } x \in [ g_{a,b,c}( \log \log W ), g_{a,b,c}(2 \log \log W) ] \, , \\
0 \text{ otherwise.}
 \end{cases}
\end{align*}
Before we bound the total variation distance between $\log \log A$ and $U_1$, we discuss some useful properties of $g_{a,b,c}$.
\begin{claim} \label{lem:function-approx}
Given $a < b <c \in  [0,1]$ are well separated (satisfying Equation~\eqref{eq:separated-condition}), it holds that
(i) $g_{a,b,c}(x) \leq x$ (ii) $\abs{ g_{a,b,c}(x) - x } \leq \log \log \log W + O(1)$ (iii) $\abs{ g'_{a,b,c}(x) - 1 } \leq  O \lp( \frac{1}{ \log \log W} \rp)$ for $x \in [\log \log W, 2 \log \log W]$.
\end{claim}
\begin{proof}
For the proof of this claim, we will temporarily drop the subscript of $g_{a,b,c}$ and write only~$g$.
For property (i), we have
$$
g(x) \leq 
\log  \log \lp(
\frac{ \exp(c \exp(x) )  }{  \exp(b \exp(x)) }
\rp)
= \log(c-b) + x
\leq x \, ,
$$
where the last inequality is true since $c - b \in [0,1]$, which follows from $b < c$ and $c,b \in [0,1]$.

For properties (ii) and (iii), our strategy is to show that $g(x)$ is approximately just $x + \log(c-b)$ for sufficiently large $W$.
To do so, we consider the function $h(x) := \exp( g(x) ) = \log \lp(
\frac{ \exp(c \exp(x) ) - \exp(a \exp(x)) }{  \exp(b \exp(x)) - \exp(a \exp(x)) }
\rp)$.
Our goal now is to show $h(x)$ is approximately $(c-b) \exp(x)$.
Denote $L_{\theta}(x) = \log( 1 - \exp( -\theta \; \exp(x) ) )$.
We then have
\begin{align} \label{eq:h-error}
    &h(x) 
=
\log\bigg( \exp \big( c \exp(x) \big) - \exp\big(a \exp(x) \big) \bigg)    
-
\log\bigg( \exp \big( b \exp(x) \big) - \exp \big(a \exp(x) \big) \bigg)   \nonumber \\
&=
\log\bigg( \exp \big( c \exp(x) \big) \lp(1- \exp \big( (a-c) \exp(x) \big) \rp) \bigg)
-
\log\bigg( \exp \big( b \exp(x)  \big) \lp(1- \exp \big( (a-b) \exp(x) \big) \rp) \bigg) \nonumber\\
&=
c \exp(x) + 
\log\bigg(  1- \exp \big( (a-c) \exp(x) \big)  \bigg)
-
b \exp(x) - 
\log\bigg(  1- \exp \big( (a-b) \exp(x) \big)  \bigg) \nonumber\\
&=  (c-b) \; \exp(x) + L_{c-a}(x) - L_{b-a}(x).
\end{align}
For $\theta \in [1 / \log \log W, 1]$ and $x \in [\log \log W, 2 \log \log W]$, we claim $L_{\theta}$(x) becomes almost the $0$ function in terms of its function values and its derivative when $W$ grows.
Using the inequality 
$- \log( 1 - \exp(-z) ) \leq 1/z$ for $z >0$,
we have
\begin{align} \label{eq:L-bound}
\abs{L_{ \theta } (x)} \leq \frac{1}{ \theta \exp(x) }
< \frac{1}{  \log \log W }.
\end{align}
Furthermore, the derivative of $L_{\theta}$ can be bounded by
\begin{align} \label{eq:L-derivative-bound}
L'_{\theta}(x) = 
\frac{\exp(x) \; \theta }{  \exp( \exp(x) \; \theta ) +1 }
\leq \frac{ \log^2 W  }{ W^{1 / \log \log W} }   
<  \frac{1}{ \log \log W }.
\end{align}
Combining Equations~\eqref{eq:h-error} and~\eqref{eq:L-bound}, we then have
$$
h(x) = (c-b) \; \exp(x)  \pm O \lp( \frac{1}{  \log \log W } \rp).
$$
Notice that $h(x)$ is at most $\exp(x) + O(1/\log \log W)$. Then, we have 
$$
g(x) \leq \log \lp( \exp(x) + O \lp( \frac{1}{ \log \log W} \rp)  \rp)
= x + \log\lp(  1 + O(1) \; \frac{1}{\exp(x) \log \log W  } \rp)
\leq x + O(1).
$$
On the other hand, 
since $(c-b)$ is at least $\frac{1}{\log \log W}$, 
$h(x)$ is at least 
$ \frac{ \exp(x) }{ \log \log W} - O(1 / \log \log W) $.
Then, we have
\begin{align*}
g(x) &\geq
\log \lp(  \frac{ \exp(x) }{ \log \log W} - O\lp( \frac{1}{\log \log W} \rp) \rp)  \\
&= 
\log( \exp(x)- O(1) ) - \log \log W \\
&= x + \log\lp(1 - O(\exp(-x))  \rp) - \log \log \log W \\
&\geq
x + \log\lp((1 - O(\exp(- \log \log W ))  \rp)) - \log \log \log W \\
&\geq x - O(1) - \log \log \log W \, ,
\end{align*}
where the last inequality holds since for sufficiently large $W$, we have $O( \exp(-\log \log W) ) \leq 1/2.$
This then gives us property (ii).

Using Equations~\eqref{eq:L-derivative-bound} and~\eqref{eq:h-error}, we then have
\begin{align*}
    h'(x) = (c-b) \; \exp(x)
    + L'_{c-a}(x)
    + L'_{b-a}(x)
    = (c-b) \; \exp(x)  \pm O \lp( \frac{1}{ \log \log W}\rp).
\end{align*}
Hence, we can bound the derivative of $g(x)$ as
\begin{align*}
    &g'(x) = \frac{1}{h(x)} \; h'(x)
    =  \frac{ (c-b) \; \exp(x)  \pm O \lp( \frac{1}{ \log \log W}\rp)}{  (c-b) \; \exp(x) \pm O \lp( \frac{1}{ \log \log W}\rp)} \\
    &=  \frac{ 1  \pm O \lp( \frac{1}{ \log \log W \; (c-b) \; \exp(x)}\rp)}{  1 \pm O \lp( \frac{1}{ \log \log W \; (c-b) \; \exp(x)}\rp)} 
    =  \frac{ 1  \pm O \lp( \frac{1}{ \log \log W }\rp)}{  1 \pm O \lp( \frac{1}{ \log \log W }\rp)} \\    
    &= 1 \pm O \lp( \frac{1}{ \log \log W}\rp) \, ,
\end{align*}
where in the second last equality, we use the fact that $(c-b) \exp(x)$ is at least $\log W / \log \log W$ and hence lower bounded by a constant for sufficiently large $W$.
This concludes the proof of \Cref{lem:function-approx}.
\end{proof}

To bound the total variation distance between $\log \log A$ and $U_1$, we will introduce $U_{a,b,c}$, which denotes the uniform distribution over $[ g_{a,b,c}( \log\log W ), g_{a,b,c}(2 \log \log W ) ]$.
Then, by the triangle inequality, we have that
$\dtv( \log \log A, U_1 )
\leq \dtv ( \log \log A, U_{a,b,c} )
+ \dtv( U_{a,b,c}, U_1 ).
$
Notice that the second term is just the total variation distance between two uniform variables - one over the interval $[\log \log W, 2 \log \log W]$ and the other over $[ g_{a,b,c} \lp( \log \log W \rp), g_{a,b,c} \lp( 2 \log \log W \rp) ]$. 
By \Cref{lem:function-approx}, it holds 
$\abs{ g_{a,b,c}(x) - x } = O(\log \log \log W)$ and $g_{a,b,c}(x) \leq x$. 
We thus have
$$
g_{a,b,c} \lp( \log \log W \rp) \leq \log \log W \leq  g_{a,b,c} \lp( 2 \log \log W \rp) \leq  2 \log \log W.
$$
Hence, the total variation distance between $U_1$ and $U_{a,b,c}$ is exactly 
\begin{align*}
\frac{1}{2} \; \bigg( &\frac{  \log \log W - g_{a,b,c}(\log \log W)  }{ g_{a,b,c}(2 \log \log W) - g_{a,b,c}(\log \log W) } 
+ \frac{ 2 \log \log W - g_{a,b,c}(2 \log \log W) }
{ \log \log W } \\
&+ 
\lp( g_{a,b,c}(2 \log \log W) - \log \log W  \rp)
\; 
\abs{ \frac{1}{ \log \log W } - \frac{1}{  g_{a,b,c}(2 \log \log W) - g_{a,b,c}(\log \log W) } } \bigg) 
\, ,
\end{align*}
where the first two terms capture the difference between $U_1$ and $U_{a,b,c}$ on the domain such that exactly one of $U_1$ and $U_{a,b,c}$ is supported on, and the last term captures the difference on the domain they are commonly supported on.
For the first two terms, the numerators are of size $O( \log \log \log W )$ and the denominators are at least $\log \log W - O(\log \log \log W)$ since $\abs{g_{a,b,c}(x) - x} = O(\log \log \log W)$.
Therefore, both of them are of order $O( \log \log \log W / \log \log W )$.
For the last term, we have $g_{a,b,c}(2 \log \log W) - \log \log W \leq \log \log W + O(\log \log \log W)$ and 
$$
\abs{ \frac{1}{ \log \log W } - \frac{1}{  g_{a,b,c}(2 \log \log W) - g_{a,b,c}(\log \log W) } } 
\leq O(\log \log \log W).
$$
Hence, in total, we have $\dtv(U_1, U_{a,b,c}) \leq O\lp( \log \log \log W / \log \log W\rp)$.

For the term $\dtv ( \log \log A, U_{a,b,c} )$, one can see that the two variables have the same support.
We will first show the PDF of $\log \log A$ and $U_{a,b,c}$ are point-wise close.
In particular, for $x \in [ g(\log \log W), g(2 \; \log \log W) ]$, we have 
\begin{align*}
&\abs{\Pr[ \log \log A = x ] -
\Pr[ U_{a,b,c} = x ]}\\
&= 
\Big |
\Pr[ \lambda_1 = g_{a,b,c}^{-1}(x) ]
\; \frac{1}{  g_{a,b,c}'( g_{a,b,c}^{-1}(x) ) }
- \frac{1}{ g_{a,b,c}( 2 \; \log \log W ) - g_{a,b,c}( \log \log W )  } \Big |\\
&= \Big | \frac{1}{  \log \log W }
\; \frac{1}{  1 \pm O \lp(\frac{1}{ \log \log W } \rp) }
- \frac{1}{ \log \log W \pm O( \log \log \log W)  }
\Big | \\
&=  O\lp(\frac{ \log \log \log W}{ \lp(\log \log W \rp)^2 }\rp) \;,
\end{align*}
where in the second equality we use the fact 
$\lambda_1$ is a uniform variable over an interval of length $\log \log W$ and that $g'(a,b,c) = 1 \pm O(1 / \log \log W)$ by \Cref{lem:function-approx}.
Then, since the interval where $\log \log A, U_{a,b,c}$ are supported on is of length at most $O( \log \log W )$.
We then have $ \dtv( \log \log A, U_{a,b,c} ) \leq O(\log \log \log W / \log \log W)$. Hence, overall, we then have
$$ \dtv( \log \log A, U_1 ) \leq O(\log \log \log W / \log \log W).
$$

Next, we show $\log B$ conditioned on $\log \log A$ is close to $U_2$.
We can simplify the expression of $\log B$ and arrive at
$$
\log B = \lambda_2 + \log \lp(  \exp( a  \exp \lp( \lambda_1 \rp) ) - \exp(b  \exp \lp( \lambda_1 \rp)  ) \rp).
$$ 
Notice that that since $\log \log A$ depends only on $a,b,c,\lambda_1$ (since $\lambda_2, \lambda_3$ are cancelled in the expression of $A$), conditioning on $\log \log A$ only makes $\lambda_1$ fixed while $\lambda_2$ is still the uniform distribution over $[0, \log^3 W]$, which is the same as $U_2$.
Hence, to show that $\log B$ is close to $U_2$, it suffices to show $\log B - \lambda_2$ is small after fixing any valid choice of $\lambda_1,a,b$.
We can write
\begin{align*}
&\abs{\log \lp(  \exp( a \exp(\lambda_1) ) - \exp(b \exp(\lambda_1)) \rp)}
= a \exp(\lambda_1) + 
\abs{\log \lp( 1 - \exp \lp( (a-b) \; \exp(\lambda_1) \rp) \rp)} \\
&\leq a \exp(\lambda_1)
+ \frac{ \exp(-\lambda_1) }{b-a}
\leq O(\log^2 W) + O( \log \log W / \log W ) 
\leq O(\log^2 W)\, ,
\end{align*}
where in the first inequality we again use that
$\abs{\log ( 1 - \exp(-z) )} \leq 1/z$ for $z > 0$, and in the second inequality we use $a \leq 1$, $\log \log W \leq \lambda_1 \leq 2 \log \log W$, $b-a \geq 1/ \log \log W$.
Then, recall that $\log B$ and $U_2$ are both
uniform variables supported on intervals with the same lengths but different offsets (differ by $O(\log^2 W)$).
Thus, conditioned on any value of $\lambda_1$, we have
$$
\dtv \lp( U_2, \log B \rp) \leq O(1 / \log W).
$$

Lastly, consider the random variables 
$C \eqdef \exp( a  \exp( \lambda_1 ) ) \; \exp( \lambda_2 ) + \lambda_3 $.
Again, we remark that conditioning on $B$ and $A$ only fixes $\lambda_1, \lambda_2$. So $\lambda_3$ is still a uniform random variable over 
$[0,  \exp( 2 \; \log^3 W )  ]$, just like $U_3$.
Hence,  the total variation distance between $C$ and $U_3$ can be bounded by 
$$
\frac{1}{ \exp \lp( 2 \log^3 W  \rp)  }
\exp( a  \exp( \lambda_1 ) ) \; \exp( \lambda_2 )
\leq 
\exp(  \log^3 W + \log^2 W - 2 \log^3 W )
\leq O(1 / W) \, ,
$$
where we use the fact $ a \leq 1, \lambda_1 \leq 2 \log \log W $, $\lambda_2 \leq \log^3 W$.
This concludes the proof of \Cref{lem:tv-to-u}.
\end{proof}
Let $\t, \r$ be the square edge distributions defined in \Cref{def:square-edge}.
In the lower bound construction from \Cref{sec:construction}, within each square, $(\p, \q)$ is  either $ \big( (\t + \r)/2, (\t + \r)/2 \big)$, $(\t, \r)$ or $(\r, \t)$.  
\Cref{lem:diamond-order-match} states that, if the tester is only given the order information of three samples, it cannot tell whether the samples are taken from $\big( (\t + \r)/2, (\t + \r)/2 \big)$ or a random pair from $(\t, \r)$ and $(\r, \t)$.
In order to hide the extra information, one need to apply the transformation specified in \Cref{lem:order-reduction}, which increases the domain size substantially.
Here, we argue that applying transformations sampled from $\mathcal M(W)$ also eliminates most of the extra information in addition to the order information.
\begin{lemma}
\label{lem:three-point-info}
Let $\t, \r$ be the square edge distributions defined in \Cref{def:square-edge}.
Let $\{u_i, v_i, b_i\}_{i=1}^m$ be  samples drawn from the pair of  distributions $\big((\t+\r)/2, (\t + \r)/2 \big)$.
With probability $1/2$, we draw $\{x_i, y_i, \ell_i\}_{i=1}^m$ from $(\t, \r)$.
Otherwise, we draw $\{x_i, y_i, \ell_i\}_{i=1}^m$ from $(\r, \t)$.
Let $f_1, f_2, f_3, f_4$ be four random mappings drawn independently from $\mathcal M(W)$.
Then,  the quantity 
$$ \dtv \lp( 
\{f_1(u_i), f_2(v_i), b_i\}_{i=1}^m,
\{f_3(x_i), f_4(y_i), \ell_i\}_{i=1}^m
\rp)
$$ is $0$
for $m = 1$ and $O(\log \log \log W / \log \log W)$ for $m =2 ,3$.
\end{lemma}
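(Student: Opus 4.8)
The plan is to reduce the statement to two ingredients already in hand: the order‑matching identity of \Cref{lem:diamond-order-match}, and the ``forgetting'' property of the random maps $\mathcal M(W)$ from \Cref{lem:tv-to-u}. Set $\eps_W := \log\log\log W/\log\log W$ and let $P$ (resp.\ $Q$) denote the law of the transformed Yes‑tuple $\{f_1(u_i),f_2(v_i),b_i\}_{i=1}^m$ (resp.\ the transformed No‑tuple $\{f_3(x_i),f_4(y_i),\ell_i\}_{i=1}^m$). For a set of raw samples, let $t=(\sigma(x),\sigma(y),\ell)$ be its order tuple (\Cref{def:order-sampling}). Since $f_1,\dots,f_4$ are i.i.d.\ from $\mathcal M(W)$ and are applied coordinate‑wise, the conditional law of the transformed tuple given that the raw order tuple equals $t$ depends only on $t$, the common law of a single $f\sim\mathcal M(W)$, and the conditional law of the $m$ raw coordinate pairs given $t$. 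Writing these conditional decompositions as $P=\sum_t w^{\Yes}_t P_t$ and $Q=\sum_t w^{\No}_t Q_t$, \Cref{lem:diamond-order-match} gives $w^{\Yes}_t=w^{\No}_t$ for every $t$ whenever $m\le 3$, so convexity of total variation distance yields
\[
\dtv(P,Q)\ \le\ \sum_t w^{\Yes}_t\,\dtv(P_t,Q_t)\ =\ \E_t\!\left[\dtv(P_t,Q_t)\right].
\]
It therefore suffices to bound $\dtv(P_t,Q_t)$ for each fixed $t$.

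For $m=1$ the order tuple is just the label $\ell_1\in\{0,1\}$. Using \Cref{clm:equal-mass} one checks that, after averaging over the coin flip choosing the pair in the No case, the law of the raw point conditioned on $\ell_1$ is in both cases the uniform distribution on the four edges of the square; hence $P_t=Q_t$ and $\dtv(P_t,Q_t)=0$, giving the $m=1$ claim.

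For $m\in\{2,3\}$ and fixed $t$, I would let $a_1<\dots<a_m$ and $b_1<\dots<b_m$ be the sorted $x$‑ and $y$‑coordinates of the raw points. Under both $P_t$ and $Q_t$ these have \emph{some} laws supported on configurations with order pattern determined by $t$, and — since conditioned on $t$ (and on the pair choice, in the No case) each coordinate is uniform on an interval of length $\Theta(1)$ — the conditional law given $t$ places probability $O(1/\log\log W)$ on the bad event that two $x$‑coordinates, or two $y$‑coordinates, are within $1/\log\log W$. On the complementary good event the sorted coordinates satisfy the separation hypothesis \eqref{eq:separated-condition}, so by \Cref{lem:tv-to-u} (for $m=3$ directly; for $m=2$ the argument for $\log B$ and $C$ in that proof applies with the $\log\log A$ step omitted, with error $O(1/\log W)$) the transformed and sorted tuple $(f_1(a_1),\dots,f_1(a_m))$ is within $O(\eps_W)$ of a distribution depending only on $m$, and likewise for $(f_2(b_1),\dots,f_2(b_m))$ using the independent $f_2$. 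Re‑inserting the labels, which are fixed by $t$, and undoing the sorting permutations $\sigma(x),\sigma(y)$ (also recorded by $t$), I conclude that $P_t$ and $Q_t$ restricted to the good event are each within $O(\eps_W)$ of one common distribution $D_t$. Combining with the $O(1/\log\log W)$ bad‑event bound gives $\dtv(P_t,Q_t)=O(\eps_W)$ for every $t$, and substituting into the display completes the proof.

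The main obstacle is the $m=3$ regime, and it has effectively been dealt with inside \Cref{lem:tv-to-u}: three samples are genuinely harder than two because the ratio $(a_1-a_3)/(a_2-a_3)$ is invariant under any affine rescaling of a coordinate, and it is exactly the exponential component $x\mapsto\exp(x\exp(\lambda_1))$ of $\mathcal M(W)$ — analyzed through the function $g_{a,b,c}$ in \Cref{lem:tv-to-u} — that randomizes this ratio. What remains here is bookkeeping: verifying that conditioning on the full order tuple does not spoil the separation estimate (elementary anti‑concentration of interval‑supported coordinates), and that the independence of the four maps lets us treat the two axes of $\R^2$ separately and identify the limiting distributions $D_t$ across the Yes and No cases.
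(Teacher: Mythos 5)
Your proposal is correct and follows essentially the same route as the paper: both invoke \Cref{lem:diamond-order-match} to align the order-tuple marginals in the Yes and No cases (you via a convex decomposition over order tuples, the paper via the equivalent device of a coupling that preserves the order tuple), then treat the two coordinate axes independently, discard an $O(1/\log\log W)$ bad event where consecutive sorted coordinates fail the separation hypothesis of \Cref{lem:tv-to-u}, and triangle-inequality through the order-tuple-independent reference distribution $D$ from that lemma. The only genuine difference is cosmetic: you handle $m=2$ by re-running a truncated version of the \Cref{lem:tv-to-u} argument, whereas the paper dispenses with it in one line via the data-processing inequality (marginalize out a dummy third sample from the $m=3$ bound), which is a bit cleaner and avoids re-deriving estimates.
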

\begin{proof}
We first analyze the case for $m=1$.
We claim that the tuple $(u_1, v_1, b_1)$ has the same distribution as $(x_1, y_1, \ell_1)$.
since conditioned on any values of $(u_1, v_1)$, the distribution of $b_1$ is uniform (and similarly for $x_1, y_1, \ell_1$).
Then, since $f_1, f_2, f_3, f_4$ are all identically distributed, it follows the distributions in the two cases are the same.

We then proceed to prove the cases $m=2,3$.
We remark that the total variation distance for $m = 2$ is at most that for $m=3$ since one can always explicitly drop the extra sample and this operation will only decrease the total variation distance. Thus, we only need to consider the case $m=3$.
By \Cref{lem:diamond-order-match},
we have $ \Order( \{ u_i, v_i, b_i \}_{i=1}^m )$
has the same distribution as $\Order( \{ x_i, y_i, \ell_i \}_{i=1}^m )$.
Hence, there exists a coupling $J$ between $ \{ u_i, v_i, b_i \}_{i=1}^m $ and $ \{ x_i, y_i, \ell_i \}_{i=1}^m $ such that if we sample from $J$ we always have $\Order( \{ u_i, v_i, b_i \}_{i=1}^m  ) = \Order( \{ x_i, y_i, \ell_i \}_{i=1}^m ) $.
Hence, we can bound the overall total variation distance by
\begin{align*}
\underset{ J } \E
\lp[ 
\dtv \lp( 
\{ f_1(u_i), f_2(v_i), b_i \}_{i=1}^m\, ,
\{ f_3(x_i), f_4(y_i), \ell_i \}_{i=1}^m
\rp) \rp].
\end{align*}
We remark that the total variation distance inside the expectation is now for fixed values of $\{u_i, v_i, b_i\}_{i=1}^m$ and $\{x_i, v_i, \ell_i\}_{i=1}^m$ that share the same order information and over the random choice of the transformations $f_1, f_2, f_3, f_4$.
Since the transformations along the two dimensions are picked independently
and $b_i = \ell_i$ under the coupling $J$, we thus have
\begin{align*}
&\dtv \lp( 
\{ f_1(u_i), f_2(v_i), b_i \}_{i=1}^m \, ,
\{ f_3(x_i), f_4(y_i), \ell_i \}_{i=1}^m
\rp) \\
&=
\dtv \lp( 
\{ f_1(u_i) \}_{i=1}^m \, ,
\{ f_3(x_i) \}_{i=1}^m
\rp)
+ 
\dtv \lp( 
\{ f_2(v_i) \}_{i=1}^m \, ,
\{ f_4(y_i) \}_{i=1}^m
\rp).
\end{align*}
The arguments for bounding the total variation distance over the two different dimensions are identical. We will therefore just focus on the first dimension.
Then consider the event $E$ such that 
$\min \lp( u_i - u_{i-1}, x_i - x_{i-1}  \rp) \geq 1 / \log \log W$ for any $i$. By the union bound, it is easy to see that $E$ does not hold under $J$ with probability at most $O(1 / \log \log W)$.
By the triangle inequality, we have
\begin{align*}
\dtv \lp( 
\{ f_1(u_i) \}_{i=1}^m \, ,
\{ f_3(x_i) \}_{i=1}^m
\rp)
\leq
\dtv \lp( 
\{ f_1(u_i) \}_{i=1}^m \, , D
\rp)
+ 
\dtv \lp( 
\{ f_3(x_i) \}_{i=1}^m \, , D
\rp)
\end{align*}
where $D$ is the distribution defined in \Cref{lem:tv-to-u}.
Conditioned on the event $E$, we then have that the expression is bounded by $O( \log \log \log W / \log \log W )$.
Since the total variation distance is bounded by $1$ and $E$ does not hold with probability at most $O(1 / \log \log W)$. The overall total variation distance is at most $O( \log \log \log W / \log \log W )$.
{This completes the proof of \Cref{lem:three-point-info}}
\end{proof}

Now, let $X$ be an unbiased binary variable. Let $\p, \q$ be a pair of measures generated by the random process described in Section~\ref{sec:construction}.
Recall that in the construction from the last section, we divide the domain into $\Theta(k^2)$ squares and $\p, \q$ are only supported on the $\Theta(k)$ squares along the diagonal.
We will then apply the following domain transformation.
For each square along the diagonal, we will independently generate two monotonic mappings $f_1, f_2 \sim \mathcal M$.
Then, we stretch the square along the $x$-axis by $f_1$ and stretch it along the $y$-axis by $f_2$.
We will denote the transformed measures as $\p', \q'$.
Let $P$ be the set of samples obtained by taking $\Poi(m)$ samples from $\p', \q'$.
We claim that $P$ reveals little information about $X$.
\begin{lemma} \label{lem:mutual-info-bound-2}
Suppose $m < k$.
The mutual information between $X$ and $P$ is at most
$$
I(X: P)  \leq O \lp (    
\frac{ m^3 \eps^2 }{k^2 } \; \frac{\log \log \log W }{ \log \log W} +
\frac{m^7 \eps^8}{ k^6 } \rp).
$$
\end{lemma}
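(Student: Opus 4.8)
The plan is to run the argument of \Cref{lem:mutual-info-bound} almost verbatim, using \Cref{lem:three-point-info} in place of \Cref{lem:diamond-order-match}. Write $P=\{(x_i,y_i),\ell_i\}$ and let $P_i$ be the sub-multiset of sampled points that land in the $i$-th diagonal square after the two per-square transformations from $\mathcal M(W)$ have been applied. Since the diagonal squares are placed in a fixed sequence of pairwise-disjoint regions and the applied maps are monotone (hence order-preserving within each square and within each coordinate), $P$ is a deterministic function of $(P_1,\dots,P_r)$, so $I(X:P)\le\sum_{i=1}^r I(X:P_i)$ and it suffices to bound each $I(X:P_i)$. Exactly as in \Cref{lem:mutual-info-bound}, I would expand
\[
I(X:P_i)\le O(1)\sum_{\lambda=0}^{\infty}\ \sum_{t:\,|t|=\lambda}\frac{\big(\Pr[P_i=t\mid X=0]-\Pr[P_i=t\mid X=1]\big)^2}{\Pr[P_i=t]},
\]
condition on the indicator $H_i$ of the $i$-th square being ``heavy'' (independent of $X$, with $\Pr[H_i=0]=\Theta(1)$, and making the two conditional laws of $P_i$ coincide when $H_i=1$), set $a_t=\Pr[P_i=t\mid X=0,H_i=0]$, $b_t=\Pr[P_i=t\mid X=1,H_i=0]$, and split the sum over $\lambda$ into the ranges $\lambda\le1$, $\lambda\in\{2,3\}$, and $\lambda\ge4$.

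For $\lambda\le1$, \Cref{lem:three-point-info} (case $m=1$) gives that the two laws of $P_i$ are identical, so these terms vanish. For $\lambda\ge4$ I would reproduce the estimate of \Cref{lem:mutual-info-bound} word for word: the key inequality \eqref{eq:uniform-l} still holds because, forgetting labels, the coordinates in any square — heavy or light, YES or NO — are $\lambda$ transformed i.i.d.\ draws from the uniform distribution on the four edges of the (transformed) square, while conditioned on the coordinates the labels are uniform whenever $\p=\q$ on the square (i.e.\ for $H_i=1$, and for $H_i=0$ with $X$ in the YES case); bounding numerators by squared sums and summing the resulting geometric series in $\lambda$ yields a contribution of $O((m/k)^7\eps^8)$ per square, hence $O(m^7\eps^8/k^6)$ in total.

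The genuinely new range is $\lambda\in\{2,3\}$, where \Cref{lem:three-point-info} only yields that, conditioned on $\{|P_i|=\lambda,H_i=0\}$, the laws of $P_i$ under $X=0$ and $X=1$ are within total variation $\delta:=O(\log\log\log W/\log\log W)$ rather than equal. Since $|P_i|$ has the same law ($\Poi(\Theta(\eps m/k))$) under both values of $X$, this gives $\sum_{t:|t|=\lambda}|a_t-b_t|\le2\delta\,\Pr[|P_i|=\lambda\mid H_i=0]=O\big(\delta\,(\eps m/k)^{\lambda}\big)$. Writing $(a_t-b_t)^2\le|a_t-b_t|(a_t+b_t)=2|a_t-b_t|\,\Pr[P_i=t\mid H_i=0]$, using $\Pr[P_i=t]\ge(m/k)\Pr[P_i=t\mid H_i=1]$, and applying the analogue of \eqref{eq:uniform-l} (together with $\Pr[|P_i|=\lambda\mid H_i=1]=\Theta(1)$ for small $\lambda$) to bound $\Pr[P_i=t\mid H_i=0]/\Pr[P_i=t]\le O\big(2^{\lambda}\eps^{\lambda}(m/k)^{\lambda-1}\big)$, I get a per-square contribution of $O\big(\delta\,\eps^{2\lambda}(m/k)^{2\lambda-1}\big)$ for $\lambda\in\{2,3\}$; since $m<k$ the $\lambda=2$ term dominates, equal to $O(\delta\,\eps^{4}m^{3}/k^{3})\le O(\delta\,\eps^{2}m^{3}/k^{3})$. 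Summing over the $r=\Theta(k)$ diagonal squares contributes $O(\delta\,\eps^{2}m^{3}/k^{2})$, and adding the $\lambda\ge4$ part gives the claimed bound $I(X:P)=O\big(\tfrac{m^{3}\eps^{2}}{k^{2}}\tfrac{\log\log\log W}{\log\log W}+\tfrac{m^{7}\eps^{8}}{k^{6}}\big)$.

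The main obstacle I anticipate is the bookkeeping required to make the two ``reductions to rank information'' from \Cref{lem:mutual-info-bound} go through now that $P$ carries the actual (continuous) coordinate values and not just their ranks: one must verify that after the random stretching each diagonal square still occupies a region disjoint from the others — so that partitioning $P$ into the $P_i$ is a function of $P$ alone — and that the coordinate marginals (now densities on $\R^2$) genuinely coincide across the ``heavy'' and ``light/YES'' configurations, which is what makes the label-uniformity step, and hence the pointwise ratio bound underlying the analogue of \eqref{eq:uniform-l}, valid. Once these points are checked, the rest is a routine adaptation of \Cref{lem:mutual-info-bound} with \Cref{lem:three-point-info} supplying the improvement for $\lambda\in\{2,3\}$.
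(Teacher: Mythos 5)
Your proof matches the paper's almost step for step: you decompose $P$ into the per-square components $P_i$ using conditional independence given $X$, condition on the heavy indicator $H_i$, split the mutual-information sum by $\lambda=|P_i|$, invoke Lemma~\ref{lem:three-point-info} to replace the exact cancellation of Lemma~\ref{lem:diamond-order-match} with the total-variation bound $O(\log\log\log W/\log\log W)$ for $\lambda\in\{2,3\}$, and sum the geometric tail for $\lambda\ge 4$ just as in Lemma~\ref{lem:mutual-info-bound}. Your step $(a_t-b_t)^2\le|a_t-b_t|(a_t+b_t)$ combined with the pointwise ratio bound is exactly how the paper arrives at its displayed intermediate estimate (with $\sum_t|a_t-b_t|$ read as twice the conditional TV distance times $\Pr[|P_i|=\lambda\mid H_i=0]$); the bookkeeping concerns you flag are resolved as you anticipate, since the transformed per-square coordinate marginals coincide regardless of $H_i$ or $X$; and the $\lambda=2$ contribution is indeed $O(\delta\,\eps^4 m^3/k^3)$ per square — the paper's own computation also yields $\eps^4$, the lemma's stated $\eps^2$ being merely a weakening.
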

\begin{proof}
Let $P_i$ be the samples taken from the $i$-th square. Notice that $P_i, P_j$ for $i \neq j$ are conditionally independent on $X$. 
Hence, we have $I(X:P) \leq O(k) \; I(X:P_1)$.
We will use a multiset $s$ made up of elements from $\R_{+}^2$ to represent the possible values $P_1$ can take.
Besides, we write $\abs{s}$ to represent the size of the multiset.
Then, it holds that
$$
I(X:P_1)
= O(1) \; \sum_{ \gamma = 1}^{\infty}
\int_{ |s| = \gamma }
\frac{ \lp( \Pr[ P_1 = s | X = 0  ] - \Pr[ P_1 = s | X = 1  ] \rp)^2 }{ \Pr[P_1 = s] }.
$$
Let $H_1$ be the indicator variable of whether the first square is selected as a heavy square.
We can use techniques similar to the proof of \Cref{lem:mutual-info-bound} to show that $I(X:P_1)$ can be bounded by
\begin{align*} 
    \sum_{\gamma=1}^{\infty}
    O \lp( 2^{\gamma} \rp) \; \frac{      
    \Pr\lp[ \abs{P_1} = \gamma \big| H_1 = 0  \rp]
     }
    { \Pr \lp[ |P_1| = \gamma, H_1 = 1\rp] }
    \; 
    \int_{ |s| = \gamma } \abs{ \Pr[ P_1 = s | X = 0, H_1 = 0 ] - \Pr[P_1 = s | X = 1, H_1 = 0] }.
\end{align*}
We will take a closer look at the integral in the expression. 
Given the observation that {$|P_1|$ and $X$ is conditionally independent on $H_1$}, 
it is not hard to see that
\begin{align*}
&\int_{|s| = \gamma} \abs{ \Pr[ P_1 = s | X = 0, H_1 = 0 ] - \Pr[P_1 = s | X = 1, H_1 = 0] } \\
&= 
2 \; \dtv \lp( P_1 | (X = 0, H_1 = 0, |P_1| = \gamma), P_1 | (X = 1, H_1 = 0, |P_1| = \gamma)\rp)
\; \Pr[ |P_1| = \gamma | H_1 = 0
].
\end{align*}
Notice that $P_1 | (X = 0, H_1 = 0, |P_1| = \gamma)$ and $P_1 | (X = 1, H_1 = 0, |P_1| = \gamma)$ correspond exactly to the distributions of $\{ f_1(x_i), f_2(y_i), \ell_i \}_{i=1}^{\gamma}$ and 
$\{ f_3(u_i), f_4(v_i), b_i \}_{i=1}^{\gamma}$ specified in \Cref{lem:three-point-info}.
Therefore, for $1 \leq \gamma \leq 3$, we can apply \Cref{lem:three-point-info} and bound the total variation distance by $0$ for $\gamma = 1$ and $O( \log \log \log W / \log \log W )$ for $\gamma = 2,3$.
For $\gamma \geq 4$, we will simply bound the total variation distance by $1$.
This then allows us to bound $I(X: P_1)$ by
\begin{align*} \\ 
O(1) \; 
\sum_{\gamma=2}^{3}
 \frac{      
\lp( 
\Pr\lp[ \abs{P_1} = \gamma \big| H_i = 0  \rp] \rp)^2
 }
{ \Pr \lp[ |P_1| = \gamma, H_i = 1\rp] }
\; 
\frac{ \log \log \log W }{ \log \log W}
+ 
O(1) \; 
\sum_{\gamma=4}^{\infty}
2^{\gamma} \; 
 \frac{      
\lp( 
\Pr\lp[ \abs{P_1} = \gamma \big| H_i = 0  \rp] \rp)^2
 }
{ \Pr \lp[ |P_1| = \gamma, H_i = 1\rp] }.
\end{align*}
Now, recall that $\Pr[H_1 = 1] = m/k$. 
When $H_1 = 1$, the mass of the square will be $1/m$ and hence $\abs{P_1} | H_1 = 1$ will be distributed as $\Poi(1)$. Therefore, we have
$\Pr[|P_1| = \gamma | H_i = 1] = \Poi( 
1, \gamma )
= \Theta(1) / \gamma!
$.
On the other hand, when $H_1 = 0$, the mass of the square is $\eps / k$. Hence, $\abs{P_1} | H_1 = 0$ is distributed as $\Poi( \eps \; m/k )$.
Therefore, we have
$\Pr[ |P_1| =  \gamma | H_i = 0 ] = \Poi( \eps \; m/k, \gamma  ) \leq  (\eps \; m/k)^{\gamma} / \gamma!$.  
Together with our assumption $m < k$, we can simplify the bound as
\begin{align*}
I(X:P_1)
\leq  O(1) \; \frac{ m^3 \eps^4 }{ k^3 }
\; \frac{  \log \log \log W }{ \log \log W}
+ 
O(1) \; \frac{ m^7 \eps^8 }{ k^7 }.
\end{align*}
This concludes the proof of \Cref{lem:mutual-info-bound-2}.
\end{proof}
We are now ready to conclude the proof of Theorem~\ref{thm:lower-bound-refined}.
\begin{proof}[Proof of Theorem~\ref{thm:lower-bound-refined}]
Throughout the proof, we assume that $m < k/2$ as this is the regime where we can use the random process described in Section~\ref{sec:construction} to generate measures.

Let $X$ be an unbiased binary variable and
$\p, \q$ be a pair of measures generated according to the random process described in Section~\ref{sec:construction} and $\p', \q'$ be the measures obtained after applying the random transformation defined by mappings sampled from $\mathcal M(W)$. 
Since the transformation is monotonic in both $x$ and $y$ axis, we thus have 
$ \snorm{\Ak}{ \p -  \q} = \snorm{\Ak}{ \p' - \q' }$.
Therefore, when $X = 0$, we have $\p' = \q'$; when $X=1$, we have $\snorm{\Ak}{\p' - \q'} > \eps$.
By Lemma~\ref{lem:mutual-info-bound-2}, we have the mutual information between the random bit $X$ and the output of any algorithm that uses $\Poi(m)$ samples is at most 
$O \lp (    
\frac{ m^3 \eps^4 }{k^2 } \; \frac{\log \log \log W }{ \log \log W} +
\frac{m^7 \eps^8}{ k^6 } \rp)$.
Hence, no tester can reliably distinguish between the case that $\p' = \q'$ and $\snorm{\Ak}{\p'-\q'} > \eps$ with probability more than $2/3$ unless 
\begin{align} \label{eq:W-sample-bound}
m \geq \Omega(1) \; \min \lp( 
k^{2/3} \eps^{-4/3} \; \lp( \frac{ \log \log W }{ \log \log \log W} \rp)^{1/3} \, , \, 
k^{6/7} \eps^{-8/7}
\rp).    
\end{align}

Note that the measures $\p', \q'$ are continuous.
The remaining step is to turn them into discrete measures $\tilde \p', \tilde \q'$
such that distinguishing between $\tilde \p' = \tilde \q'$ versus $\snorm{\Ak}{ \tilde \p' - \tilde \q' } \geq \eps$ is about as hard as $ \p' =  \q'$ versus $\snorm{\Ak}{  \p' -  \q' } \geq \eps$ .

First, we argue that, for any horizontal or vertical strip of width at most $\frac{ \eps }{ 8}$, the mass of $\p' ,\q'$ is at most $\frac{\eps}{8 k}$. 
It is easy to see the claim is true for $\p, \q$ since their marginal distributions in any dimension is uniform over intervals whose lengths add up to at least $k$.
For the transformed distribution $\p', \q'$, the bound still holds since the transformation only stretches the distribution along $x$, $y$ axis.

Then, we can construct a grid $\mathcal G$ which splits the domain into small unit squares, each of size $\frac{ \eps }{  8 } \times \frac{\eps}{ 8}$.
Then, consider $\tilde \p', \tilde \q'$ which round the points falling in each square in $\mathcal G$ to its top-left vertex.
Then, for an arbitrary rectangle $R$, {$\abs{\p'(R) - \q'(R)} - \abs{\tilde \p'(R) - \tilde \q'(R)  }$ is at most} the mass of $\p'$ or $\q'$ in the two vertical strips and the two horizontal strips, each of width at most $\frac{\eps}{8}$.
Thus, for any $R$, it holds
$$
\abs{\tilde \p'(R) - \tilde \q'(R)  }
\geq 
\abs{ \p'(R) -  \q'(R)  } - \eps/(2k).
$$
Consequently, it holds $\snorm{ \Ak }{ \tilde \p' - \tilde \q' }  \geq \eps/2$ if $\snorm{ \Ak }{  \p' -  \q' } \geq \eps$. On the other hand, if $\p' = \q'$, it is easy to see that we still have $\tilde \p' = \tilde \q'$ after the rounding.
Thus, if there is an algorithm which can distinguish between the cases $\tilde \p' = \tilde \q'$ and $\snorm{\Ak}{\tilde \p' - \tilde \q'} > \eps / 2$, we can use it to distinguish between the cases $\p' =  \q'$ and $\snorm{\Ak}{\p' -  \q'} > \eps$ as well by simulating the rounding process. 
Hence, the sample complexity lower bound in Equation~\eqref{eq:W-sample-bound} applies to $\tilde \p', \tilde \q'$ as well.

Finally, we note the supports of the transformed measures $\p', \q'$ are always contained in some $W' \times W'$ square where $ W' =  k \; \exp( 4 \log^3 W )$. Hence, $\tilde \p', \tilde \q'$ are over a
$ V \times V  $ discrete grid where
$V = \Theta \lp( k \; \exp( 4 \log^3 W ) / \eps \rp)$.
If the first term in the sample complexity bound (Equation~\eqref{eq:W-sample-bound}) is dominating, we must have
$\log \log W \geq  \log(k / \eps) $. This then implies that
$V$ is at most $\exp( 5 \log^3 W )$, which further implies that $\log \log W \geq \Omega(1) \; \log \log V$. On the other hand, it is easy to see that $V > W$ and so 
$1/ \log \log \log W > 1 / \log \log \log V$.
We can the rewrite Equation~\eqref{eq:W-sample-bound} as
$$
m \geq \Omega(1) \; \min \lp( 
k^{2/3} \eps^{-4/3} \; \lp( \frac{ \log \log V }{ \log \log \log V} \rp)^{1/3} \, , \, 
k^{6/7} \eps^{-8/7}
\rp) \, ,
$$
which is indeed the desired lower bound.
\end{proof}

\section{Conclusions and Open Problems} \label{sec:open}

{
In this work, we studied the problem of closeness testing between two 
multidimensional distributions under the $\Ak$ distance.
Our main contribution is the first tester for this task 
with sublinear sample complexity. 
The sample complexity of our tester is provably near-optimal as a function of the parameter $k$ (within logarithmic factors)
for any fixed dimension $d \geq 2$.

Conceptually, our sample complexity lower bound implies that the testing problem 
is provably harder in the multidimensional setting. 
In particular, there is a ``phase transition'' between 
the one-dimensional and the two-dimensional cases.
On the positive side, we show that as the dimension $d$ 
further increases the dependency of the sample complexity 
on $k$ --- the main parameter of our interest --- 
stays approximately the same. 

As immediate corollaries of our $\Ak$ closeness tester, 
we also obtain the first closeness tester for families of structured multidimensional distributions --- 
including $k$-histograms and uniform distributions over unions of axis-aligned rectangles --- under the total variation distance.
}

{
While \Cref{thm:main-intro} implies that our upper and lower bounds are nearly optimal 
in terms of their dependence on $k$, 
their dependence on $\eps$ do not match. 
In particular, the upper bound scales polynomially 
with $1/\eps$, where the degree of the polynomial 
depends on the dimension $d$. On the other hand, 
the lower bound applies to $2$-dimensional 
distributions, and hence has a constant exponent in its (polynomial) $\eps$-dependence. 
This leads to the following question.
\begin{question} \label{q:eps-dependency}
{\em What is the optimal sample complexity as a function of $\eps$ for multidimensional $\Ak$ closeness testing?}
\end{question}
In the current and prior works, the multidimensional $\Ak$-distance is defined 
as the maximum discrepancy between two distributions 
over $k$ disjoint axis-aligned \emph{rectangles}. 
On the other hand, the $\Ak$-distance for univariate 
distributions is defined with respect to \emph{intervals}.
This definition inherently uses axis-aligned rectangles 
in $\R^d$, as the natural generalization of intervals in $\R$.
Yet, rectangles are not necessarily the only valid choice. 
More specifically, one can replace axis-aligned rectangles 
in the definition 
of multidimensional $\Ak$ distance with other geometric shapes whose $1$-dimensional projection 
corresponds to intervals. For example, we can use 
shapes like unit-balls, simplices, 
or any other convex set.
Such natural variants of multidimensional $\Ak$ distance 
can be used to build $\dtv$-closeness testers 
of other families of structured distributions, 
such as log-concave distributions. 
This leads to the following question.
\begin{question}
{\em Are there alternative definitions of multidimensional $\Ak$ distance 
for multivariate distributions that can lead to optimal $\dtv$-closeness/identity testers 
for other multivariate shape-restricted distributions? }
\end{question}
Exploring other notions of multidimensional $\Ak$ distance is of significant interest and 
may lead to a unified theory of testing multivariate structured distributions.
}


\bibliographystyle{alpha}
\bibliography{ref}

\appendix


\end{document}